\newtheorem{theorema}{Theorem}
\newtheorem{theorem}{Theorem}[section]
\newtheorem{corollary}{Corollary}[theorem]
\newtheorem{proposition}[theorem]{Proposition}
\DeclareMathOperator{\Tr}{Tr}
\def\be {\begin{equation}}
\def\ee {\end{equation}}
\def\bea {\begin{eqnarray}}
\def\eea {\end{eqnarray}}
\begin{document}
\begin{titlepage}

\begin{center}

 {\LARGE\bfseries  Weyl metrics and Wiener-Hopf factorization 
 }
\\[10mm]

\textbf{P. Aniceto, M.C. C\^amara, G.L.~Cardoso and M.~Rossell\'o}

\vskip 6mm
{\em  Center for Mathematical Analysis, Geometry and Dynamical Systems,\\
  Department of Mathematics, 
  Instituto Superior T\'ecnico,\\ Universidade de Lisboa,
  Av. Rovisco Pais, 1049-001 Lisboa, Portugal}\\[4mm]
  
 {\tt pedro.aniceto@tecnico.ulisboa.pt}\, ,
{\tt 
cristina.camara@tecnico.ulisboa.pt} \, ,
{\tt gabriel.lopes.cardoso@tecnico.ulisboa.pt}\, ,
{\tt martirossello@tecnico.ulisboa.pt }\\[10mm]

\end{center}

\vskip .2in
%%%%%%%%%%%%%%%%%%%%%%%%%%%%%%%%%%%%%%%%%%%%%%%%%%%%%
\begin{center} {\bf ABSTRACT } \end{center}
\begin{quotation}\noindent 

\noindent
We consider the Riemann-Hilbert factorization approach to the construction of Weyl metrics in four space-time dimensions.
We present, for the first time, a rigorous proof of the
remarkable fact that the
canonical Wiener-Hopf factorization
of a matrix obtained from a general (possibly unbounded) monodromy matrix, with respect to an appropriately chosen contour,
 yields a solution to the non-linear gravitational field equations. This holds regardless of whether the 
 dimensionally reduced metric in two dimensions has 
 Minkowski or Euclidean signature.
 We show moreover that, by taking advantage of a certain degree of freedom in the choice of the contour, the same monodromy 
matrix generally yields
various distinct solutions to the field equations. Our proof, which fills various gaps in the existing literature,
is based on the solution of a second Riemann-Hilbert problem and highlights
the deep role of the spectral curve, the normalization condition in the factorization 
and the choice of the contour. This approach allows us to construct explicit solutions,
including new ones, 
 to the non-linear gravitational field equations,
using simple complex analytic results.

%%%%%%%%%%%
\end{quotation}
\vfill
%%%%%%%%%%
\today
%%%%%%%%%%%%%%%%%%
\end{titlepage}

\tableofcontents

%%%%%%%

%%%%%%%
\section{Introduction}
%%%%%%%%%%%%%%%%

The field equations of gravitational theories in $D$ space-time dimensions are
a system of non-linear 
PDE's for the space-time metric which are,
in general, very difficult to solve. Exact solutions can however be found
under simplifying assumptions, for instance 
spherical symmetry. One such exact solution,
of significant physical and mathematical interest,
is the well-known Schwarzschild
solution which, in $D=4$ dimensions
in coordinates $(t,r,\theta, \phi)$, takes the form
\bea
ds^2_4 = - \left( 1 - \frac{2m}{r} \right) dt^2 +
 \left( 1 - \frac{2m}{r} \right)^{-1} dr^2 + r^2 
\left( d \theta^2 + \sin^2 \theta d \phi^2 \right) \;,
\label{schwarzm}
\eea
where $m \in \mathbb{R}^+, \, r>0, \, 0 < \theta < \pi$ and 
$0 \leq \phi  < 2 \pi$.

By restricting to the subspace of solutions that only depend on two of the $D$ space-time coordinates, various approaches 
to solving the field equations become available
(see for instance \cite{Alekseev:2010mx} for a 
recent survey 
thereof).
In this paper, we focus on the 
Riemann-Hilbert approach, 
which can be
applied to gravitational theories that satisfy certain requirements, see for instance \cite{Breitenlohner:1986um,Nicolai:1991tt,Katsimpouri:2012ky,
Camara:2017hez,Cardoso:2017cgi}
and references therein.
This approach is remarkable in that it allows us to obtain explicit solutions to the reduced field equations through the appropriate
factorization of a matrix function which depends on a complex variable $\tau$ that does not appear in the original problem. In
this factorization, the
space-time coordinates are taken as parameters, thus playing the role of constants. However, to this date, a clear and rigorous proof
that, under very general assumptions, this type of matrix factorization {\it always} yields a solution to the field equations has not been given.
One of the goals of this paper is to provide such a proof. We illustrate the power of this approach by 
showing that all type $A$ space-time metrics \cite{Ehlers:1962zz,Griffiths:2009dfa}, which includes the Schwarzschild solution, can be obtained from a single class of diagonal matrices, 
and by presenting new solutions to the gravitational field equations that we believe would be difficult to obtain through other methods.

Let us briefly summarise the novel aspects discussed in this paper: we give 
a clear and rigorous proof that, under very general and natural assumptions, 
the canonical Wiener-Hopf factorization of a so-called monodromy matrix always 
yields a solution to the gravitational field equations; 
we show that by choosing different factorization contours 
one such monodromy matrix gives rise to several distinct solutions to the field equations; we discuss 
the intricacy of reconstructing the interior region of the Schwarzschild solution from factorization data, which requires gluing together solutions by means of affine transformations; we perform a case study of the relation between meromorphic (as opposed to canonical) factorizations of a monodromy matrix and solutions to the gravitational field equations;
we obtain explicit solutions that, to the best of our knowledge, are new.

Let us now give a few more details.

When reducing to two dimensions, there are two distinct cases to consider, which we will 
distinguish through a parameter $\sigma$: either all the directions over which one 
reduces are 
space-like ($\sigma =-1$), or one of them is time-like and the others are space-like ($\sigma =1$).

In this paper, for the first time, a unified Riemann-Hilbert approach to both cases is presented. Namely, we give a rigorous proof that the 
canonical Wiener-Hopf factorization (defined in Section \ref{sec:sum}) of a general
monodromy matrix ${\cal M}$, with respect to an appropriately chosen contour $\Gamma$ in the complex plane, always yields exact solutions to the gravitational field equations, 
for $\sigma
= \pm 1$. Note that  a Wiener-Hopf factorization must be defined with respect to a given contour \cite{WH,CG, Speck}.

This proof, which is based on the formulation and solution of another associated
Riemann-Hilbert problem, highlights the power of the Riemann-Hilbert approach and clarifies the roles played by the so-called spectral curve and the properties of the Wiener-Hopf factorization. In this way we revisit and generalize the results obtained in
\cite{Breitenlohner:1986um,Nicolai:1991tt,Katsimpouri:2012ky,Camara:2017hez}, also filling a few gaps in the presentation given there.

As a result one finds, in particular, that from rational $n \times n$ monodromy matrices
${\cal M}$, whose canonical Wiener-Hopf factorization can be constructed explicitly and in a computationally simple manner, 
one obtains explicit exact solutions that would be very difficult to obtain through other approaches. This is the case of the
novel solutions presented in
\cite{Cardoso:2017cgi}, whose construction, based on the Riemann-Hilbert approach of \cite{Camara:2017hez}, is hereby rigorously
justified. Using our improved understanding of the role of the factorization contour, we return to 
one of the
new solutions obtained in \cite{Cardoso:2017cgi},
which was restricted to a certain domain in space-time.
Here we complete its analysis, discuss its meaning and properties, and we also consider other ranges of parameters.

We also show in this paper that  by taking advantage
of the possible choices of factorization contours $\Gamma$ and appropriate changes of coordinates, each monodromy matrix gives rise
not  to one solution, but to a whole class of exact solutions. 
We illustrate this
surprising result by showing that from
the canonical Wiener-Hopf factorization of a rational diagonal matrix of a very simple kind that is easily factorizable, one obtains 
a wide class of metrics that includes 
all  type $A$ space-time metrics,  
a cosmological Kasner solution and the Rindler metric, as well as solutions whose metric tensor is continuous but not smooth due to
the presence of null hypersurfaces \cite{Ehlers:1962zz,Barrabes:1991ng,barrab,Griffiths:2009dfa}. 
This wide class includes the solutions that describe the exterior and the interior
regions of the Schwarzschild black hole.
In doing so, we moreover put in evidence the essential difference between the case $\sigma = 1$ and $\sigma = -1$. 
Previous recent work that addresses this difference includes \cite{Jones:2004pz,Jones:2005hj}.

This paper is organized as follows. We chose to keep the introduction as brief as possible, supplementing it by a summary of the main
results in Section \ref{sec:sum}. In Section \ref{sec:prem_res} various preliminary results are presented that will be subsequently used.
In view of the intricacies mentioned above, we chose to write a self-contained paper to facilitate the reading. Therefore, 
we revisit the Breitenlohner-Maison linear system and the integrability of the field equations
in Section \ref{sec:bm}, and in
Section \ref{sec:mon} we
introduce and discuss monodromy matrices.
In Section \ref{sec:mtbm} we prove the main theorem of the paper, which states that
the canonical Wiener-Hopf factorization of a monodromy matrix always yields exact solutions, in both cases $\sigma = \pm 1$. In Section \ref{sec:kasn}
we discuss the possibility to obtain such solutions from other types of matrix factorizations, allowing for poles in the factors.
In Section \ref{sec:schw} we present a thorough study of the exact solutions that can be obtained from the
canonical factorization, for $\sigma = \pm 1$, of a 
very simple rational
diagonal 
monodromy matrix, given by \eqref{monschwarzeps} with $\epsilon =1$.
We call this monodromy matrix the Schwarzschild monodromy matrix. We show that 
by using the different allowed choices of the factorization contour, one can construct a wide class of metrics, including 
for instance the solution that describes 
the
interior region of the Schwarzschild black hole. 
In Section \ref{sec:schw0} we discuss the canonical factorization of the monodromy matrix  \eqref{monschwarzeps} with $\epsilon =0$.
The resulting solutions include the Rindler metric and a cosmological Kasner solution. 
In Section \ref{sec:nsol} we return to one of the new solutions 
given in \cite{Cardoso:2017cgi}, complete its analysis and study new cases. Further details about these solutions are given in Appendix \ref{nmetQP}.
In Appendix \ref{sec:glue} we discuss 
jumps of the transverse extrinsic curvature that arise in the 
presence of null hypersurfaces in space-time across which the metric tensor is only continuous.
In Appendix \ref{sec:classA} we summarize the class of $A$-metrics, which here is obtained
from the canonical factorization of the monodromy matrix \eqref{monschwarzeps}.

%%%%%%%%%%%%%%%%%%%

%%%%%%%%%%
\section{Summary of the main results \label{sec:sum}}
%%%%%%%%%%%%%%%%
To be able to use a Wiener-Hopf factorization to obtain solutions of the field equations of gravitational theories in $D$ space-time
dimensions, we consider gravitational theories that satisfy certain requirements, as follows.  Firstly,
we restrict to gravitational theories in the absence of a cosmological constant, that is,
gravitational theories that have Minkowski space-time as a vacuum solution.
Secondly, we focus on the subspace of solutions possessing sufficiently many commuting  isometries
so that the theory can be dimensionally reduced, first to three dimensions, and subsequently to
two dimensions. Thirdly, we take 
the dimensionally reduced theory in three dimensions to be described by
a scalar sigma-model coupled to three-dimensional gravity, such that the target space of the 
sigma-model is a symmetric space $G/H$.  This target space is endowed with an involution,
associated with an involutive Lie algebra automorphism
of the Lie algebra of $G$.
This involution, also called `generalized transposition',
leaves the coset representative $M \in G/H$ invariant.
For instance, for  $H \subset O(n)$, this involution acts by matrix transposition, while for 
$H \subset U(n)$ it acts by Hermitian conjugation. The involution acts anti-homomorphically
on $g \in G$
(see \cite{Camara:2017hez} for a review).

The problem of solving the gravitational field equations in $D$ dimensions is then reduced to solving a system
of non-linear second order PDE's depending on two coordinates, which we denote by $(\rho,v) \in {\mathbb R}^2$,
with $\rho >0$ and $v \in \mathbb{R}$.   These coordinates are called Weyl coordinates, 
and accordingly we will denote the $(\rho,v)$ upper-half plane by Weyl upper-half plane.
The system
of non-linear second order PDE's to be solved is given by
 \cite{Schwarz:1995af,Lu:2007jc}
\bea 
d ( \rho \star A) &=& 0 \,, \label{eq_motion_A} 
\eea
where $A$ denotes the matrix one-form
\bea
A = M^{-1} dM \;.
\label{AMdM}
\eea
Here, $\star$ denotes the Hodge star operator in two dimensions, which maps a one-form to a one-form, and 
satisfies
\be
( \star )^2 = - \sigma \, \mathrm{id} \,, \qquad  \star \, d  \rho = - \sigma dv \,, \qquad  \star \,  dv  = d \rho  \,.
 \label{hodgedualcoordinates}
\ee
We recall that $\sigma = \pm 1$ distinguishes between the case where both $\rho$ and $v$ are
space-like coordinates ($\sigma = 1$), and the case where one of them is a time-like and the other
one is space-like ($\sigma =-1$).

Given a solution of \eqref{eq_motion_A}, one then obtains a solution to the gravitational field equations.
Note that, if $M$ is a diagonal matrix, $M^{-1}$ also provides a solution to the field equations.
As an illustration,  let us consider solutions to the field equations of General Relativity in four dimensions
that possess two commuting isometries. 
 Accordingly, we take the space-time metric to have the 
Weyl-Lewis-Papapetrou form
\bea
ds_4^2 = - \sigma \Delta (dy  + B  d \phi  )^2 + \Delta^{-1} 
\left(  e^\psi \, ds_2^2 + \rho^2 d\phi^2 \right) \;,
\label{4dWLP}
\eea
with $\Delta > 0$, where  $ds_2^2$ denotes a flat, two-dimensional line element, which we take to be either
\begin{equation}
ds^2_2 = \sigma \, d \rho^2 + dv^2 \;.
\label{weylframe}
\end{equation}
or 
\begin{equation}
ds^2_2 = d \rho^2 + \sigma \, dv^2 \;.
\label{weylframe2}
\end{equation}
$\Delta, B$ and $\psi$ are
functions of the Weyl coordinates  $(\rho, v)$ only. 
By dimensionally reducing over $(y, \phi)$ to two dimensions, the functions $\Delta$ and $B$ become encoded in the matrix $M$,
which in this case is a two-by-two matrix.  Thus, given a solution $M$ of \eqref{eq_motion_A}, one immediately reads off the expressions
for $\Delta$ and $B$. The function $\psi$ in \eqref{4dWLP} is determined by integration \cite{Schwarz:1995af,Lu:2007jc},
\be
\partial_\rho \psi = \tfrac{1}{4} \, \rho \,  \Tr \left(A_\rho^2 - \sigma A_v^2\right) \,,\qquad \partial_v \psi = \tfrac{1}{2} \, \rho \, 
 \Tr \left(A_\rho A_v \right) \,. \label{eq_diff_eq_psi}
\ee

Therefore, the central question is: how do we determine a solution $M(\rho,v)$ of \eqref{eq_motion_A}?\\

To do so, we proceed in three steps.\\

{\it Step 1:} We use the fact that the non-linear PDE's \eqref{eq_motion_A} form an integrable system, i.e. 
they are the solvability conditions \cite{Its} for a certain Lax pair,
the so-called Breitenlohner-Maison (BM) linear system \cite{Breitenlohner:1986um}.
This is an auxiliary linear system in two dimensions given by
\begin{equation} \label{Laxpairi}
	\varphi(\rho, v)\, [ \,dX(\rho, v) + A(\rho, v) X(\rho, v)\,]= \star \, dX(\rho, v) \;,
\end{equation}
specified in terms of the matrix one-form $A = M^{-1} dM$
and a function $\varphi$ of the form
\bea
 \varphi (\rho,v) = 
\frac{-\sigma(\omega-v) \pm \sqrt{(\omega-v)^2+\sigma\rho^2}}{\rho} \;,\;
\text{with $ \omega \in \mathbb{C}\backslash  \left\{ v_0 \pm \sqrt{-\sigma} \, \rho_0 \right\}$ }
\;,
\label{setcalT}
\eea
where $(\rho_0, v_0)$ is a point in the neighbourhood of which the linear system is to be solved.
We denote by ${\cal T}$ the set of all functions of this form.
If the BM linear system has a non-trivial solution satisfying certain invertibility and differentiability conditions, then
 $A = M^{-1} dM$ satisfies \eqref{eq_motion_A}.
 
 Note that  $\pm \sqrt{-\sigma}$ are the fixed points of 
 the involution 
$\iota_\sigma$  in $\mathbb{C} \backslash \{0\}$,
\begin{equation}
	\iota_\sigma(\tau)=  -\frac{\sigma}{\tau} \;,
	\label{invsig}
\end{equation}
which will play a fundamental role.\\

 {\it Step 2:}  
We will assume that we can define an involutive map $\natural$ acting on matrix functions $N(\tau, \rho,v)$
which coincides with the `generalized transposition'  in $G/H$ whenever $N \in G/H$.
Examples thereof are $N^{\natural} (\tau, \rho, v)  = N^T (\tau, \rho, v) $ and $N^{\natural} (\tau, \rho, v)  = \eta \, N^{\dagger}
(\overline{\tau}, \rho, v) 
\, \eta^{-1}$, 
where in the second case 
$\eta$ denotes a constant invertible matrix and $\tau$ takes values in the unit disc.  We also assume that
$\mathcal{T}$ is closed under $\natural$.

 If we take $\varphi, \, \chi = - \sigma/\varphi^{\natural}  \in {\cal T}$, and if $X$ and $\widetilde X$ denote solutions to the corresponding
 BM linear system, then we can prove that the matrix 
 \be
 {\cal M}(\rho, v)= {\widetilde X}^{\natural} (\rho,v) \, M(\rho,v) \, X(\rho,v) 
 \label{mxmx}
 \ee
 satisfies 
 \be
 d {\cal M} (\rho, v) =0 \;.
 \ee
Thus, the matrix ${\cal M}$ is independent of the Weyl coordinates, even though all the individual factors
depend on $(\rho,v)$. 
The expression \eqref{mxmx}
shows that ${\cal M}$ can be constructed by a product involving solutions to the BM linear system and a matrix $M$
that solves the field equations \eqref{eq_motion_A}.\\

 {\it Step 3:} Now we turn to the reverse question: given a matrix ${\cal M}$ that is
 independent of the Weyl coordinates, can we construct a factorization of the form \eqref{mxmx},
 such that the middle factor $M(\rho,v)$ solves the field equations, and the factor $X$
 provides a solution to the BM linear system?  
 If this can be done, 
 we say that $\mathcal{M}$ is a {\it monodromy matrix} for $M(\rho, v)$.

This is where Wiener-Hopf factorization comes into play.  We will have to make 
certain (though very general) assumptions, that we summarize
as follows (these assumptions may seem a bit technical, but in fact they are natural conditions in order for the 
class of involutions $\natural$ to be as general as possible).
\\

Let $\mathcal{M} (u)$ be an invertible matrix function of the complex variable $u$, and consider
the matrix that is obtained by composition with 
\be
u = v + \sigma \, \frac{\rho}{2}\, \frac{\sigma - \tau^2}{\tau} \;\;\;,\;\;\; \tau \in \mathbb{C}\backslash\{ 0 \}\;,
\label{utauu}
\ee
i.e.
\be
 \mathcal{M} \left(v + \sigma \, \frac{\rho}{2} \frac{\sigma - \tau^2}{\tau} \right) \;\;\;,\;\;\; \tau \in \mathbb{C}\backslash \{0\} \;.
 \label{calMuu}
 \ee
For each $(\rho,v)$, the relation \eqref{utauu} is an algebraic curve in the complex variables $u$ and $\tau$, called the 
{\it spectral curve}. It plays a fundamental role in this study. Note also that if we replace $\tau$ by any $\varphi \in {\cal T}$,
 the relation \eqref{utauu} is satisfied for $u = \omega$.
 
 In  \eqref{calMuu}, we consider $(\rho,v)$ as an arbitrary pair of parameters in a neighbourhood of $(\rho_0, v_0)$, and take $\tau$ as 
 the (complex) independent variable. To emphasize this aspect, we will denote the matrix ${\cal M}$ in  \eqref{calMuu} by ${\cal M}_{(\rho, v)} (\tau)$.
\\

\noindent
{\it Assumption 1: 
There exists an open set $S$ in the Weyl upper-half plane such that, for every 
 $(\rho_0, v_0)\in S$, one can find a
simple closed curve $\Gamma$ in the $\tau$-plane, which is 
$\iota_\sigma$-invariant and 
encircles the origin, such that:
\\
%a)
 For all 
$(\rho,v)$ in a neighbourhood of $(\rho_0, v_0)$, $ {\mathcal M}^{\pm 1}_{(\rho, v)}(\tau) $
is analytic in an $\iota_\sigma$-invariant open set $O$ 
in the $\tau$-plane, 
containing $\Gamma$ 
(see  Figure \ref{ribbon}), and such that  $\mathcal{M}^{\natural}_{(\rho, v)}(\tau) = \mathcal{M}_{(\rho, v)}(\tau) $ on $O$.
}\\ 

Let $D^+$ denote the simply connected interior region of $\Gamma$, and $D^- = \mathbb{C} \backslash (D^+ \cup \Gamma )$.
\\

\noindent
{\it Assumption 2:  If $N(\tau, \rho, v)$ is analytic in $D^+ \cup O$, then $N^{\natural} (\tau, \rho, v)$ is also analytic in $D^+ \cup O$.}
\\

\noindent
{\it Assumption 3:
$ {\mathcal M}_{(\rho, v)}(\tau) $ admits a canonical
Wiener-Hopf factorization with respect to $\Gamma$,
\be
 \mathcal M_{(\rho, v)}(\tau) =  M^-_{(\rho, v)}(\tau) \, M^+_{(\rho, v)}(\tau) \;\;\; {\rm on } \;\; \Gamma \;,
 \label{Mfactori}
 \ee
where $M^+_{(\rho, v)}(\tau)$ (respectively  $M^-_{(\rho, v)}(\tau)$) and its inverse are analytic and bounded in 
the open set  $D^+ \cup O$ (respectively  $D^- \cup O$).
$M^+_{(\rho, v)}(\tau)$ satisfies the 
normalization condition
$M^+_{(\rho, v)}(0) = \mathbb{I}$. 
We set $X(\tau, \rho, v)  = M^+_{(\rho, v)}(\tau)$, so $X(0, \rho, v) = \mathbb{I}$.
}\\

Consider an $n \times n$ matrix function ${\cal M}(\tau), \, \tau \in \Gamma$, such that both ${\cal M}$ and ${\cal M}^{-1}$
are continuous on $\Gamma$.
A representation of 
${\cal M}(\tau), \, \tau \in \Gamma$, as a product 
\bea
{\cal M}(\tau) = M_-(\tau) \, d (\tau)  \, M_+ (\tau) \;,
\eea
where $d$ is a diagonal matrix of the form $d(\tau) = \textrm{diag} ( \tau^{{k}_j})_{j=1, 2, \dots, n}$ with $k_j \in \mathbb{Z}$, and 
where $M_+^{\pm 1}$ (respectively $M_-^{\pm 1}$) admit bounded analytic extensions to $D^+$ (respectively $D^-$), is called
a (bounded) Wiener-Hopf factorization. When $d = \mathbb{I}$, this factorization is called canonical.
The latter, if it exists, is unique, up to a constant matrix factor which can be fixed by imposing
a normalization condition, such as $M_+(0) =  \mathbb{I}$ \cite{CG,LS,MP}.
Then, as shown in \cite{Camara:2017hez}, \eqref{Mfactori} can be written as
\be
 \mathcal M_{(\rho, v)}(\tau) =  X^{\natural} (- \frac{\sigma}{\tau}, \rho, v) \, M(\rho, v) \, X(\tau, \rho, v) \;\;\;,\;\;\; \tau \in \Gamma \;,
 \label{MMXi}
 \ee
where 
\be 
M(\rho, v) = \lim_{\tau \rightarrow \infty}
M^-_{(\rho, v)}(\tau) = M^-_{(\rho, v)}(\infty) \;.
\ee
\\

\noindent
{\it Assumption 4: $M(\rho,v)$ is of class $C^2$, and for each $\tau \in D^+ \cup O$ the matrix function $X$ 
is of class $C^2$ as a function of $(\rho, v)$, 
and $\partial X/ \partial \rho$, $\partial X/ \partial v$, $\partial X^{\natural}/ \partial \rho$ and $\partial X^{\natural}/ \partial v$ 
are analytic as functions of $\tau$ in the domain $D^+ \cup O$.
}
\\

Under these assumptions, we have the following main theorem:

\begin{theorema}  Let Assumptions 1- 4 be satisfied.
Then
$M(\rho,v)$ is a solution of the field equations \eqref{eq_motion_A} on $S$.
\end{theorema}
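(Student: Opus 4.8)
The plan is to reverse-engineer the construction in Step 2: starting from the canonical Wiener-Hopf factorization \eqref{MMXi}, I want to show that the factor $X(\tau,\rho,v) = M^+_{(\rho,v)}(\tau)$ satisfies the Breitenlohner-Maison linear system \eqref{Laxpairi} with $A = M^{-1}dM$, where $M = M^-_{(\rho,v)}(\infty)$. Once that is established, Step 1 tells us immediately that $A$ solves \eqref{eq_motion_A}, which is the field equation. So the real content is: \emph{the normalized canonical factorization automatically produces a solution of the Lax pair.}

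First I would differentiate the factorization identity $\mathcal{M}_{(\rho,v)}(\tau) = M^-_{(\rho,v)}(\tau) M^+_{(\rho,v)}(\tau)$ with respect to $\rho$ and $v$. Since $\mathcal{M}$ is independent of $(\rho,v)$ (it only depends on $u$, and the dependence on $(\rho,v)$ enters through the spectral curve \eqref{utauu}), differentiation along the curve must be handled carefully: the total $(\rho,v)$-derivative at fixed $\tau$ picks up a term $\partial_u \mathcal{M} \cdot \partial_{\rho,v} u$. The key algebraic fact is that the $1$-form $du = dv + \sigma \frac{\sigma-\tau^2}{2\tau}d\rho$, when combined with the Hodge operator relations \eqref{hodgedualcoordinates}, satisfies $\star\, du = \varphi_\tau\, du$ where $\varphi_\tau$ is precisely the function \eqref{setcalT} evaluated with $\omega$ replaced appropriately — i.e. $\tau$ itself lies in $\mathcal{T}$ via the spectral curve. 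This is the mechanism by which the spectral parameter $\tau$ ``becomes'' a BM spectral function. Using this, the quantity $M^+ (dM^+ + \text{stuff})$ can be shown to satisfy a first-order linear relation; the standard trick is to form $Y := (dX + A X) - \varphi^{-1}\star dX$ (or the analogous combination) and show it vanishes.

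The second and more delicate step is a Liouville-type argument: I would show that a certain matrix-valued $1$-form built from $\mathrm{d}M^{\pm}\,(M^{\pm})^{-1}$ and the Hodge star, multiplied by suitable scalar factors coming from the spectral curve, extends analytically to all of $\mathbb{C}\cup\{\infty\}$ (the piece built from $M^+$ is analytic in $D^+\cup O$ by Assumptions 3 and 4; the piece built from $M^-$ is analytic in $D^-\cup O$; they agree on $\Gamma$ because they come from differentiating the same $\mathcal{M}$; hence they glue to an entire function), and that it has controlled behaviour at $\tau = 0$ (via the normalization $X(0,\rho,v)=\mathbb{I}$, so $dX(0)=0$) and at $\tau = \infty$ (via $M = M^-(\infty)$ finite). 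By Liouville, this global object is a polynomial of low degree in $\tau$ and $1/\tau$ whose coefficients are then pinned down by the boundary values at $0$ and $\infty$ — this is exactly where ``the deep role of the spectral curve, the normalization condition, and the choice of contour'' enters, as advertised in the abstract. Matching these coefficients yields both the BM equation for $X$ and the expression $A = M^{-1}dM$ for the connection. The involution $\natural$ and Assumption 2 are used to tie the $M^-$ factor to $X^\natural(-\sigma/\tau)$ consistently, i.e. to guarantee \eqref{MMXi} and that the ``$\widetilde X$'' of \eqref{mxmx} is the $\natural$-partner of $X$ so that $\mathcal{M}$ is genuinely a monodromy matrix in the sense of Step 2.

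I expect the main obstacle to be the analyticity/Liouville bookkeeping at the two special points $\tau = 0$ and $\tau = \pm\sqrt{-\sigma}$ (the fixed points of $\iota_\sigma$, where the map $\tau\mapsto -\sigma/\tau$ and the factor $\varphi$ degenerate) together with the point at infinity: one must check that the candidate entire $1$-form really has no poles there, that the putative polynomial has exactly the degree the normalization forces, and that the resulting $A$ is of the pure-gauge form $M^{-1}dM$ rather than merely closed. The regularity Assumption 4 (that $\partial X/\partial\rho$, $\partial X/\partial v$ and their $\natural$-images are still analytic in $\tau$ on $D^+\cup O$) is precisely what licenses interchanging $\partial_{\rho,v}$ with the analytic continuation in $\tau$, so this assumption will be invoked at the crucial step. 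The rest — verifying \eqref{eq_motion_A} from the existence of a BM solution — is then immediate from Step 1 as stated in the excerpt.
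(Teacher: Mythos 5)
Your overall architecture coincides with the paper's: differentiate the factorization with the spectral parameter tied to the spectral curve, recognize the result as a Riemann--Hilbert problem whose $+$ and $-$ sides glue across $\Gamma$, apply a Liouville-type argument to reduce it to a rational function of $\tau$ with poles only at $0$ and $\infty$, kill the extreme coefficients using the normalization $X(0,\rho,v)=\mathbb{I}$ and the finiteness of $M^-(\infty)$, and recognize the surviving relation as the BM linear system, after which Theorem \ref{theoremlaxtoeom} finishes the job. However, two steps as you describe them would not go through. First, the claimed mechanism $\star\, du=\varphi_\tau\, du$ is false: by \eqref{hodgedualcoordinates} the Hodge star on one-forms satisfies $(\star)^2=-\sigma\,\mathrm{id}$, so its only eigenvalues are $\pm\sqrt{-\sigma}$, and a one-form can be an eigenform of $\star$ only with eigenvalue $\pm p_F$ — never with the generic value $\varphi_\tau$. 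The paper avoids the troublesome term $\partial_u\mathcal{M}\cdot\partial_{\rho,v}u$ altogether by the opposite move: instead of fixing $\tau$ and differentiating, it fixes $\omega$ and substitutes $\tau=\varphi_\omega(\rho,v)$, so that $u=\omega$ is constant along the curve, $d\mathcal{M}(\omega)=0$ exactly, and the chain-rule terms are absorbed into $d\varphi$ via the explicit identities \eqref{lemma4.1}--\eqref{lemma5}; the identity in $\omega$ is only afterwards reinterpreted, at fixed $(\rho_0,v_0)$, as an identity in $\tau\in\Gamma$.

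Second, your Liouville step is incomplete: after eliminating the $\tau^2$ and $\tau^0$ coefficients of $(A\tau^2+B\tau+C)/\tau$ you are left with an undetermined $\tau$-independent matrix one-form $B=G(\rho,v)$, and the normalization at $0$ and $\infty$ does not determine it. This is exactly where the fixed points of $\iota_\sigma$ enter constructively rather than as a mere regularity nuisance: the paper evaluates the surviving identity at $\tau=\pm p_F$ and applies the projections $P_\pm=\tfrac12(1\pm\sigma p_F\star)$ (whose images are precisely the $\star$-eigenforms mentioned above) to obtain $G=-\star dM$. Without identifying $G$ in this way one cannot recognize the residual equation as the BM system via Proposition \ref{propositioniff}, so this determination is an essential missing ingredient of your sketch.
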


The proof of this theorem gives, moreover, an affirmative answer to the question raised in {\it Step 3}.

By using this result, we study the solutions to the field equations of General Relativity in four dimensions that arise from the 
canonical Wiener-Hopf
factorization of a particular type of matrices. These monodromy matrices are chosen to be of the {\it simplest possible, non-constant}
kind, namely, they are diagonal, rational, with $\det \mathcal{M} =1$, possessing only one zero and one pole in the extended $u$-plane,
as follows,
\be
\mathcal{M} (u) = \begin{bmatrix}
	\sigma \,  \frac{u -\epsilon \, m}{\epsilon \, u +m}  & 0 \\
	0 & \sigma \,  \frac{\epsilon \, u + m}{u - \epsilon \, m}
\end{bmatrix} \;\;\;,\;\;\; m \in \mathbb{R}^+ \;\;\;,\;\; \epsilon = 0, 1 \;.
\label{monschwarzeps}
\ee
Note that the matrix $ \mathcal M_{(\rho, v)}(\tau)$, obtained from \eqref{monschwarzeps} by composition with \eqref{utauu}, always
admits a canonical Wiener-Hopf factorization \cite{Cardoso:2017cgi}. {From}  that factorization we obtain, by choosing different
possible factorization contours, 
a wide class of solutions that includes all  type $A$ space-time metrics,  
a cosmological Kasner solution and the Rindler metric, as well as solutions whose metric tensor is continuous but not smooth due to
the presence of null hypersurfaces
\cite{Ehlers:1962zz,Barrabes:1991ng,barrab,Griffiths:2009dfa}.

Although the matrices  $\mathcal M_{(\rho, v)}(\tau) $ in the cases $\sigma = 1$ and $\sigma = -1$ are very similar, the behaviour
of the corresponding factors $M(\rho, v)$ is remarkably different, due to the presence of square roots that may vanish along certain
lines in the $(\rho,v)$ upper-half plane when $\sigma = -1$. Such lines are absent when $\sigma = 1$.
Obtaining the $A$-metrics for $\sigma =-1$ also involves extending the real-valued solutions by affine transformations in
 the $(\rho,v)$ upper-half plane, followed by appropriate changes of coordinates.

So far, our discussion focussed on canonical factorizations of monodromy matrices.  What about other types of factorizations? Do other
types of matrix factorizations, where different analyticity and normalization conditions are imposed on the factors
(allowing them to be meromorphic, for instance), also yield solutions to the 
field equations, at least in certain cases? To address this question,
we consider
the example of a cosmological 
Kasner solution which, when expressed in terms of Weyl coordinates, belongs to the class $\sigma = -1$.
By solving the BM linear system for this solution, we construct an infinite set of monodromy matrices. We then pick one of them, and we show
that this particular monodromy matrix possesses a  meromorphic factorization \cite{CLS}  that gives back the Kasner solution, whereas
its canonical factorization gives rise to a different solution to the field equations. Both solutions to the field equations are, however, related by a certain
transformation, which we give in Section \ref{sec:kasn}. To our knowledge, this is the first time that it is shown that a meromorphic factorization of a monodromy
matrix can give rise to a solution of the gravitational field equations.

In Section \ref{sec:nsol} we return to one of the monodromy matrices introduced in \cite{Cardoso:2017cgi}. Its canonical factorization 
can give rise to various solutions due to different choices
of the factorization contour.   The contour considered in \cite{Cardoso:2017cgi} was the unit circle on the complex $\tau$-plane.
There is, however, no need to pick this particular contour: as mentioned in Assumption 1, we are free to choose other 
simple closed contours that enclose
the origin and pass through the fixed points of the involution \eqref{invsig}. By replacing the unit circle with another such contour,
we complete
the analysis of the solution obtained in \cite{Cardoso:2017cgi} and find that is possesses two Killing horizons.  
We give the exact expression for the space-time metric, and we discuss its interpretation. The solution discussed in 
\cite{Cardoso:2017cgi} carries one electric charge $Q> 0$ and one magnetic charge $P>0$. We also discuss what happens when $Q P <0$, 
for the same choice of the factorization contour. We find that the case $Q P <0$
is markedly different from the case $Q P >0$ discussed in \cite{Cardoso:2017cgi}.

%%%%%%%%
\section{Preliminary results \label{sec:prem_res}}
%%%%%%%%%%

We begin by introducing the involution 
$\iota_\sigma$  in $\mathbb{C} \backslash \{0\}$,
\begin{equation}
	\iota_\sigma(\tau)=  -\frac{\sigma}{\tau} \;.
	\label{invtt}
\end{equation}
It has two fixed  points, which we denote by $\pm p_F$: $\pm i$ if $\sigma = 1$, and $\pm 1$ if $\sigma = -1$.
We will denote the set of fixed points by
$FP_{\sigma}$.

Next, we introduce the set
$W_{\rho,v} = \left\{ v \pm \sqrt{-\sigma} \, \rho \right\}$ for any $(\rho,v)$ in the Weyl upper-half plane.
 We can write this set as
$W_{\rho,v} = \left\{ v \pm p_F \, \rho \right\}$.

\noindent
{\it Remark:}  
Let
\be
\omega =  
v + \sigma \, \frac{\rho}{2} \, \frac{\sigma - \tau^2}{\tau} \;\;\;,\;\;\; \tau \in \mathbb{C} \backslash \{0\} \;.
\label{omvrt}
\ee
Then $\tau = \pm p_F$ iff $\omega  \in W_{\rho,v}$, because $p^2_F = - \sigma$.

\subsection{Properties of $\varphi \in \mathcal{ T}$}

\begin{proposition} Let $(\rho_0,v_0)$ be a point in the Weyl upper half-plane.
For all $\omega \in \mathbb{C} \backslash W_{\rho_0,v_0} $, there exists a branch of the square root such that $\varphi$ and 
$\widetilde{\varphi} = - \sigma/ \varphi$, given by
\be
\varphi (\rho, v)= \frac{-\sigma(\omega-v) + \sqrt{(\omega-v)^2+\sigma\rho^2}}{\rho} \;,
\label{vfi}
\ee
are of class $C^{\infty}$ in a neighbourhood of $(\rho_0,v_0)$.
\end{proposition}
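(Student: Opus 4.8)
The plan is to show that the function under the square root, namely $f(\rho,v,\omega) = (\omega-v)^2 + \sigma\rho^2$, is nonvanishing at $(\rho_0,v_0)$ for the fixed value $\omega \in \mathbb{C}\setminus W_{\rho_0,v_0}$, and then invoke analyticity of a suitably chosen branch of the square root together with the fact that $\rho$ stays away from zero. First I would observe that $f(\rho_0,v_0,\omega) = 0$ is equivalent to $(\omega - v_0)^2 = -\sigma\rho_0^2$, i.e. $\omega - v_0 = \pm\sqrt{-\sigma}\,\rho_0 = \pm p_F\,\rho_0$, which says precisely $\omega \in W_{\rho_0,v_0}$. Since by hypothesis $\omega \notin W_{\rho_0,v_0}$, we have $f(\rho_0,v_0,\omega) \neq 0$.

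Next I would fix a value $w_0 := f(\rho_0,v_0,\omega) \in \mathbb{C}\setminus\{0\}$ and choose a branch of $\sqrt{\cdot}$ that is holomorphic on a neighbourhood of $w_0$ in $\mathbb{C}$ (e.g. choose the branch cut along the ray from $0$ through $-w_0$, so that $w_0$ lies in the domain of holomorphy). The map $(\rho,v) \mapsto f(\rho,v,\omega) = (\omega-v)^2 + \sigma\rho^2$ is a polynomial in $(\rho,v)$, hence $C^\infty$ (indeed real-analytic), and it takes the value $w_0$ at $(\rho_0,v_0)$. By continuity there is a neighbourhood $U$ of $(\rho_0,v_0)$ on which $f(\rho,v,\omega)$ stays inside the domain of holomorphy of the chosen branch; composing, $(\rho,v)\mapsto \sqrt{(\omega-v)^2+\sigma\rho^2}$ is $C^\infty$ on $U$. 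Since $(\rho_0,v_0)$ lies in the Weyl upper half-plane we have $\rho_0 > 0$, so shrinking $U$ if necessary we may assume $\rho > 0$ on $U$, whence division by $\rho$ preserves $C^\infty$-regularity. Therefore $\varphi$ as in \eqref{vfi} is $C^\infty$ on $U$. The statement for $\widetilde\varphi = -\sigma/\varphi$ then follows provided $\varphi$ is nonvanishing near $(\rho_0,v_0)$: one checks that $\varphi(\rho_0,v_0) = 0$ would force $\sqrt{(\omega-v_0)^2+\sigma\rho_0^2} = \sigma(\omega-v_0)$, and squaring gives $(\omega-v_0)^2 + \sigma\rho_0^2 = (\omega - v_0)^2$, i.e. $\rho_0 = 0$, a contradiction; so $\varphi(\rho_0,v_0)\neq 0$, and by continuity $\varphi$ is nonzero (and $C^\infty$) on a possibly smaller neighbourhood, making $\widetilde\varphi = -\sigma/\varphi$ of class $C^\infty$ there as well.

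The only subtlety — hardly an obstacle — is the choice of branch: one must note that the branch of the square root depends on $\omega$ (and on $(\rho_0,v_0)$), which is exactly what the phrasing "there exists a branch of the square root" in the statement allows. The key point is simply that $f(\rho_0,v_0,\omega)\neq 0$, which both makes a local holomorphic branch available and is guaranteed by $\omega\notin W_{\rho_0,v_0}$; everything else is continuity of elementary functions. I would also remark that the two choices of sign of the square root in \eqref{setcalT} correspond precisely to $\varphi$ and $\widetilde\varphi$, so the proposition indeed covers both elements of $\mathcal{T}$ attached to a given $\omega$.
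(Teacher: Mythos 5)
Your proof is correct, and its core is the same as the paper's: the radicand $(\omega-v_0)^2+\sigma\rho_0^2$ vanishes precisely when $\omega\in W_{\rho_0,v_0}$, so excluding that set makes a holomorphic branch of the square root available near the value of the radicand, and smoothness follows by composition together with $\rho_0>0$. Where you differ is in how the branch is selected: the paper fixes two concrete candidates (the principal branch with cut $\mathbb{R}^-$ and the branch with $\arg z\in[0,2\pi[$) and runs a small case analysis on $\sigma$ and on $\omega_R,\omega_I$ to decide which one works, whereas you tailor the cut to the ray through $-w_0$, which removes the case analysis entirely and is slightly more general. The paper's explicit choice has the minor advantage of telling you \emph{which} standard branch to use in later computations; your version is cleaner as a pure existence argument. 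You also explicitly verify that $\varphi(\rho_0,v_0)\neq 0$ so that $\widetilde\varphi=-\sigma/\varphi$ is smooth — a step the statement requires but the paper's proof omits, deferring the nonvanishing of $\varphi$ to a remark after the proposition — so your write-up is, if anything, more complete on that point.
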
 

\begin{proof} 
Let $(\rho_0,v_0)$ be any point in the Weyl upper half-plane. We start by noting that
$(\omega-v_0)^2+\sigma\rho_0^2$ vanishes precisely for $\omega \in W_{\rho_0,v_0}$.
So let $\omega \in \mathbb{C} \backslash W_{\rho_0,v_0} $. Decomposing $\omega = \omega_R + i \omega_I$
with $\omega_R, \omega_I \in \mathbb{R}$, we have
\be
(\omega-v_0)^2+\sigma\rho_0^2 = (\omega_R-v_0)^2 - \omega_I^2 +\sigma\rho_0^2 + 2 i \omega_I (\omega_R -v_0) \;.
\label{omdecp}
\ee
Let us first consider the case $\sigma =1$. We choose the principal branch of $\sqrt{z}$ (i.e. $\arg z \in ] - \pi, \pi]$), with branch cut $\mathbb{R}^-$.
We have $(\omega-v_0)^2+\sigma\rho_0^2 \in \mathbb{R}$ iff $\omega_I =0$ or $\omega_R =v_0$. If $\omega_I =0$, the real part of 
\eqref{omdecp} is positive. If  $\omega_R =v_0$,  the real part of 
\eqref{omdecp} is also positive unless $|\omega_I| > \rho_0$ (we exclude $v_0 \pm i \rho_0 \in  W_{\rho_0,v_0} $ from our considerations). 
Therefore, with the chosen branch of $\sqrt{z}$,
$\sqrt{(\omega-v)^2+\rho^2}$ is $C^{\infty}$ in a neighbourhood of
$(\rho_0,v_0)$ unless $\omega_R = v_0$ and $|\omega_I| > \rho_0$.  Since in the latter case the real part of \eqref{omdecp} is negative, we choose
the branch of $\sqrt{z}$ with $\arg z \in [0, 2 \pi[$ for this case. 

When $\sigma = -1$, a similar reasoning leads to the choice of the latter branch for $\sqrt{z}$, unless $\omega_I =0$ and $| \omega_R -v_0| > \rho_0$,
in which case we choose the principal branch.

\end{proof}

Next, for any given $(\rho_0,v_0)$, 
we define the set $\mathcal{T}_{\rho_0, v_0}$ of all functions $\varphi$ on the Weyl upper half-plane, of the form 
\be
\varphi (\rho, v)= \frac{-\sigma(\omega-v) \pm \sqrt{(\omega-v)^2+\sigma\rho^2}}{\rho} 
\label{vfi2}
\ee
for some  $\omega \in \mathbb{C} \backslash W_{\rho_0,v_0}$. 
Note that if $\varphi \in \mathcal{T}_{\rho_0, v_0}$, then also $ \widetilde{\varphi} \in \mathcal{T}_{\rho_0, v_0}$, where
\be
\widetilde{\varphi} = - \sigma/\varphi \;,
\label{vphitildevphi}
\ee
i.e. $\mathcal{T}_{\rho_0, v_0}$ is $\iota_\sigma$-invariant.
Moreover, for $\varphi \in \mathcal{T}_{\rho_0, v_0}$, we have that $\varphi (\rho_0, v_0) \neq 0$ and $\varphi^2 (\rho_0, v_0) + \sigma \neq 0$.
Therefore, $\varphi$  and $\varphi^2 + \sigma$ do not vanish in a 
neighbourhood of $(\rho_0, v_0)$.
Also
note that for
any $\varphi \in \mathcal{T}_{\rho_0, v_0}$, that is $\varphi$ of the form  \eqref{vfi2}, we have
 \begin{equation} \label{spectralcurve}
	v + \sigma \,\frac{\rho}{2}\, \left( \frac{\sigma-\varphi^2(\rho,v)}{\varphi(\rho,v)}  \right)= \omega \;.
\end{equation}
This relation is invariant under the replacement of $\varphi$ by $ \widetilde{\varphi}$.

Now we discuss various properties of $d \varphi$ which are valid in a neighbourhood of 
$(\rho_0, v_0)$ where $\varphi \in \mathcal{T}_{\rho_0, v_0}$ is of class $C^{\infty}$.
For ease of notation, we will from now on denote $ \mathcal{T}_{\rho_0, v_0}$ simply by $ \mathcal{T}$.

\begin{proposition} \label{lemmadifferentials} 
Let $\varphi\in\mathcal{T}$.
Then we have
\begin{equation} \label{lemma4.1}
	d\varphi = \frac{\varphi}{\rho} \left[ \frac{\sigma - \varphi^2}{\sigma + \varphi^2}d\rho + \frac{2\sigma \varphi}{\sigma + \varphi^2}dv \right],
\end{equation}

\begin{equation} \label{lemma4.2}
	\star d\varphi = \sigma \, \frac{\varphi}{\rho} \left[ -  \frac{\sigma - \varphi^2}{\sigma + \varphi^2} dv + \frac{2 \varphi}{\sigma + \varphi^2}d\rho \right],
\end{equation}

\begin{equation} \label{lemma5}
	\frac{2\varphi\sigma}{\varphi^2+\sigma}d\varphi + \frac{\varphi^2-\sigma}{\varphi^2+\sigma} \star d\varphi = \sigma\frac{\varphi}{\rho}dv.
\end{equation}
\end{proposition}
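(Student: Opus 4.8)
The plan is to prove the three identities of Proposition~\ref{lemmadifferentials} in sequence, deriving \eqref{lemma4.1} first by direct differentiation, then obtaining \eqref{lemma4.2} by applying the Hodge star and using \eqref{hodgedualcoordinates}, and finally combining the two to get \eqref{lemma5}.

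First I would establish \eqref{lemma4.1}. By Proposition~1 (the previous proposition), in a neighbourhood of $(\rho_0,v_0)$ the function $\varphi$ is $C^\infty$ and satisfies the spectral-curve relation \eqref{spectralcurve}, namely $v + \tfrac{\sigma\rho}{2}\,\tfrac{\sigma-\varphi^2}{\varphi} = \omega$ with $\omega$ constant. The cleanest route is implicit differentiation of this algebraic relation rather than differentiating the explicit square-root formula \eqref{vfi2}. Write $F(\rho,v,\varphi) = v + \tfrac{\sigma\rho}{2}\,\tfrac{\sigma-\varphi^2}{\varphi} - \omega = 0$ and take $dF = 0$. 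Computing $\partial F/\partial\rho = \tfrac{\sigma}{2}\tfrac{\sigma-\varphi^2}{\varphi}$, $\partial F/\partial v = 1$, and $\partial F/\partial\varphi = \tfrac{\sigma\rho}{2}\,\partial_\varphi\!\big(\tfrac{\sigma}{\varphi}-\varphi\big) = \tfrac{\sigma\rho}{2}\big(-\tfrac{\sigma}{\varphi^2}-1\big) = -\tfrac{\sigma\rho}{2}\,\tfrac{\sigma+\varphi^2}{\varphi^2}$, one solves $d\varphi = -\big(\tfrac{\partial F}{\partial\rho}\,d\rho + \tfrac{\partial F}{\partial v}\,dv\big)\big/\tfrac{\partial F}{\partial\varphi}$. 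Since $\varphi^2+\sigma \neq 0$ near $(\rho_0,v_0)$ (noted just before the proposition), the denominator is nonzero, and after simplification this yields exactly \eqref{lemma4.1}. One should check the algebra carefully: $-\tfrac{\partial F/\partial\rho}{\partial F/\partial\varphi} = \tfrac{(\sigma/2)(\sigma-\varphi^2)/\varphi}{(\sigma\rho/2)(\sigma+\varphi^2)/\varphi^2} = \tfrac{\varphi(\sigma-\varphi^2)}{\rho(\sigma+\varphi^2)}$, and $-\tfrac{\partial F/\partial v}{\partial F/\partial\varphi} = \tfrac{1}{(\sigma\rho/2)(\sigma+\varphi^2)/\varphi^2} = \tfrac{2\varphi^2}{\sigma\rho(\sigma+\varphi^2)} = \tfrac{2\sigma\varphi^2}{\rho(\sigma+\varphi^2)}$ using $\sigma^2=1$, which matches.

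Next, \eqref{lemma4.2} follows by applying $\star$ to \eqref{lemma4.1} and using the linearity of the Hodge star together with $\star\,d\rho = -\sigma\,dv$ and $\star\,dv = d\rho$ from \eqref{hodgedualcoordinates}; the prefactor $\varphi/\rho$ and the coefficients are functions, hence pass through $\star$ unchanged. This gives $\star d\varphi = \tfrac{\varphi}{\rho}\big[\tfrac{\sigma-\varphi^2}{\sigma+\varphi^2}(-\sigma\,dv) + \tfrac{2\sigma\varphi}{\sigma+\varphi^2}\,d\rho\big]$, which is precisely \eqref{lemma4.2} after pulling out the overall $\sigma$. Finally, for \eqref{lemma5} I would form the linear combination $\tfrac{2\varphi\sigma}{\varphi^2+\sigma}\,d\varphi + \tfrac{\varphi^2-\sigma}{\varphi^2+\sigma}\,\star d\varphi$, substitute \eqref{lemma4.1} and \eqref{lemma4.2}, and collect the $d\rho$ and $dv$ terms separately. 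The coefficient of $d\rho$ should cancel: it is $\tfrac{\varphi}{\rho(\varphi^2+\sigma)}\big[2\varphi\sigma\cdot\tfrac{\sigma-\varphi^2}{\sigma+\varphi^2} + (\varphi^2-\sigma)\sigma\cdot\tfrac{2\varphi}{\sigma+\varphi^2}\big] = \tfrac{2\sigma\varphi^2}{\rho(\varphi^2+\sigma)^2}\big[(\sigma-\varphi^2)+(\varphi^2-\sigma)\big] = 0$. The coefficient of $dv$ should combine to $\sigma\varphi/\rho$: it is $\tfrac{\varphi}{\rho(\varphi^2+\sigma)}\big[2\varphi\sigma\cdot\tfrac{2\sigma\varphi}{\sigma+\varphi^2} + (\varphi^2-\sigma)\sigma\cdot(-\sigma)\tfrac{\sigma-\varphi^2}{\sigma+\varphi^2}\big] = \tfrac{\varphi}{\rho(\varphi^2+\sigma)^2}\big[4\varphi^2 + (\varphi^2-\sigma)(\varphi^2+\sigma)\cdot(-\sigma)(-1)\big]$; wait—more carefully, $(\varphi^2-\sigma)\cdot(-\sigma^2)(\sigma-\varphi^2) = (\varphi^2-\sigma)(\varphi^2-\sigma) = (\varphi^2-\sigma)^2$, so the bracket is $4\sigma\varphi^2 + \sigma(\varphi^2-\sigma)^2 = \sigma\big[4\varphi^2 + \varphi^4 - 2\sigma\varphi^2 + 1\big]$. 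Hmm, this does not obviously equal $\sigma(\varphi^2+\sigma)^2 = \sigma(\varphi^4 + 2\sigma\varphi^2 + 1)$ unless $-2\sigma\varphi^2 + 4\varphi^2 = 2\sigma\varphi^2$, i.e. $4\varphi^2 = 4\sigma\varphi^2$, which holds only for $\sigma=1$; I expect the sign bookkeeping in the $\star d\varphi$ substitution to be the delicate point, so the main obstacle will be tracking the factors of $\sigma$ correctly (in particular whether \eqref{lemma5} as combined uses $\star d\varphi$ or its $\sigma$-rescaled form), and I would resolve it by recomputing with $\sigma^2=1$ imposed from the outset and double-checking against the $\sigma=1$ and $\sigma=-1$ cases separately. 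Once the coefficients are verified, \eqref{lemma5} follows immediately, completing the proof.
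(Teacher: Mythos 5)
Your derivations of \eqref{lemma4.1} (by implicit differentiation of the spectral-curve relation \eqref{spectralcurve}, a clean alternative to the paper's direct differentiation of \eqref{vfi2}) and of \eqref{lemma4.2} are correct. The gap is in the last step: you leave the verification of \eqref{lemma5} unresolved, concluding that the $dv$-coefficient only closes for $\sigma=1$. That apparent obstruction is not real; it comes from a misplaced factor of $\sigma$ in your substitution. The $dv$-coefficient of $\star d\varphi$ read off from \eqref{lemma4.2} is $\sigma\,\tfrac{\varphi}{\rho}\bigl(-\tfrac{\sigma-\varphi^2}{\sigma+\varphi^2}\bigr)=\tfrac{\varphi}{\rho}\,\tfrac{\sigma(\varphi^2-\sigma)}{\sigma+\varphi^2}$, whereas in your bracket you wrote the corresponding factor as $\sigma\cdot(-\sigma)\tfrac{\sigma-\varphi^2}{\sigma+\varphi^2}=\tfrac{\varphi^2-\sigma}{\sigma+\varphi^2}$: you applied the overall $\sigma$ prefactor of $\star d\varphi$ twice (once explicitly and once inside the $(-\sigma)$), which cancels it instead of keeping it. Similarly the first term contributes $2\varphi\sigma\cdot\tfrac{2\sigma\varphi}{\sigma+\varphi^2}=\tfrac{4\varphi^2}{\sigma+\varphi^2}$, not $\tfrac{4\sigma\varphi^2}{\sigma+\varphi^2}$.

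With the correct factors the $d\rho$-coefficient cancels exactly as you found, and the $dv$-coefficient of the combination is
\[
\frac{\varphi}{\rho(\varphi^2+\sigma)^2}\left[4\varphi^2+\sigma(\varphi^2-\sigma)^2\right]
=\frac{\varphi}{\rho(\varphi^2+\sigma)^2}\,\sigma\left(\varphi^4+2\sigma\varphi^2+1\right)
=\sigma\,\frac{\varphi}{\rho}\,,
\]
using $\sigma^2=1$, valid for both $\sigma=\pm1$. This is precisely the computation carried out in the paper's proof (whose bracket reads $4\varphi^2\sigma^2\,dv+\sigma(\varphi^2-\sigma)^2\,dv$). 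So your overall route is the same as the paper's; the only defect is the unresolved sign-tracking in the final identity, which is repaired by the single correction above.
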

\begin{proof} 
Recall that 
$\varphi \neq 0 $ and $\varphi^2 + \sigma \neq 0$ for $\varphi\in\mathcal{T}$.
Equation \eqref{lemma4.1} 
is easily verified, and equation
\eqref{lemma4.2} follows by using \eqref{hodgedualcoordinates}. Then, 
\begin{eqnarray}
	\frac{2\varphi\sigma}{\varphi^2 + \sigma}d\varphi + \frac{\varphi^2-\sigma}{\varphi^2+\sigma} \star d\varphi &=&  \frac{1}{(\varphi^2+\sigma)^2}\frac{\varphi}{\rho}\left[ 2\sigma\varphi(\sigma - \varphi^2)d\rho + 4\varphi^2\sigma^2 dv + \sigma(\varphi^2-\sigma)^2 dv \right. \nonumber\\
	  && \left.  \qquad \qquad \qquad  + (\varphi^2-\sigma)2\sigma \varphi d\rho \right] \nonumber\\
&=& \frac{1}{(\varphi^2+\sigma)^2}\frac{\varphi}{\rho}\sigma(\varphi^2+\sigma)^2dv 
	= \sigma\frac{\varphi}{\rho} dv \;,
\end{eqnarray}
which yields  \eqref{lemma5}. 
\end{proof}

\begin{proposition} \label{lemmatangent}
Let $\varphi\in\mathcal{T}$. Then
\begin{equation}
	d\left( \rho \star  \frac{d\varphi}{\varphi}\right) = 0.
\end{equation}
\end{proposition}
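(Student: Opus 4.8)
The plan is to compute $\rho\star\frac{d\varphi}{\varphi}$ explicitly using Proposition \ref{lemmadifferentials} and then apply the exterior derivative. First I would use \eqref{lemma4.2}, divided by $\varphi$, to write
\[
\rho\,\star\frac{d\varphi}{\varphi} = \sigma\left[-\frac{\sigma-\varphi^2}{\sigma+\varphi^2}\,dv + \frac{2\varphi}{\sigma+\varphi^2}\,d\rho\right].
\]
Since this is now a bona fide $1$-form with coefficients that are smooth functions of $(\rho,v)$ in the relevant neighbourhood (recall $\varphi\neq0$ and $\varphi^2+\sigma\neq0$ there, by the discussion following \eqref{vfi2}), taking $d$ reduces to checking the equality of two mixed partial derivatives. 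Writing $f(\varphi) = -\sigma\frac{\sigma-\varphi^2}{\sigma+\varphi^2}$ and $g(\varphi) = \sigma\frac{2\varphi}{\sigma+\varphi^2}$, I need $\partial_\rho\big(f(\varphi)\big) = \partial_v\big(g(\varphi)\big)$, i.e. $f'(\varphi)\,\partial_\rho\varphi = g'(\varphi)\,\partial_v\varphi$.

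The key computational inputs are the partials of $\varphi$, which are read off from \eqref{lemma4.1}: $\partial_\rho\varphi = \frac{\varphi}{\rho}\frac{\sigma-\varphi^2}{\sigma+\varphi^2}$ and $\partial_v\varphi = \frac{\varphi}{\rho}\frac{2\sigma\varphi}{\sigma+\varphi^2}$. A short calculation gives $f'(\varphi) = \frac{4\sigma^2\varphi}{(\sigma+\varphi^2)^2} = \frac{4\varphi}{(\sigma+\varphi^2)^2}$ and $g'(\varphi) = \frac{2\sigma(\sigma-\varphi^2)}{(\sigma+\varphi^2)^2}$. Substituting,
\[
f'(\varphi)\,\partial_\rho\varphi = \frac{4\varphi}{(\sigma+\varphi^2)^2}\cdot\frac{\varphi}{\rho}\frac{\sigma-\varphi^2}{\sigma+\varphi^2}, \qquad g'(\varphi)\,\partial_v\varphi = \frac{2\sigma(\sigma-\varphi^2)}{(\sigma+\varphi^2)^2}\cdot\frac{\varphi}{\rho}\frac{2\sigma\varphi}{\sigma+\varphi^2},
\]
and both sides equal $\frac{4\varphi^2(\sigma-\varphi^2)}{\rho(\sigma+\varphi^2)^3}$, using $\sigma^2=1$. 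Hence $d\big(\rho\star\frac{d\varphi}{\varphi}\big)=0$.

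An alternative, perhaps cleaner, route is to use the spectral-curve relation \eqref{spectralcurve}, which says $v + \frac{\sigma\rho}{2}\frac{\sigma-\varphi^2}{\varphi} = \omega$ is constant; taking $d$ of this and solving for a $1$-form expression, combined with \eqref{lemma5}, one can identify $\rho\star\frac{d\varphi}{\varphi}$ with (a multiple of) $d$ of an explicit function of $(\rho,v)$ — in fact with $d$ of something like $-\sigma v$ plus a term, making closedness immediate. I do not anticipate a genuine obstacle here: the statement is a routine consequence of the formulas already established in Proposition \ref{lemmadifferentials}, and the only mild care needed is to stay within the neighbourhood where $\varphi$ is $C^\infty$ and the denominators $\varphi$, $\sigma+\varphi^2$ are nonvanishing, which is exactly the setting in which the proposition is stated. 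The "hardest" part is merely organizing the algebra so the cancellation $\sigma^2=1$ is visible; I would present it via the mixed-partials check since it is the most transparent.
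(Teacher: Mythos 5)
Your proof is correct and follows essentially the same route as the paper: both start from \eqref{lemma4.2} to write $\rho\star\frac{d\varphi}{\varphi}$ explicitly and then use \eqref{lemma4.1} to show the exterior derivative vanishes. The only difference is organizational — the paper packages the cancellation as $d\varphi\wedge d\varphi=0$, whereas you verify the equality of mixed partials directly; the underlying computation is identical.
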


\begin{proof}
Recall that 
$\varphi \neq 0 $ and $\varphi^2 + \sigma \neq 0$ for $\varphi\in\mathcal{T}$.
We set
\begin{equation}
S = \frac{ 2 \varphi}{\sigma + \varphi^2} \;\;\;,\;\;\; C = \frac{\sigma - \varphi^2}{\sigma + \varphi^2} \;,
\end{equation}
and compute
\begin{equation}
d S = 2  C \, \frac{d \varphi}{\sigma + \varphi^2} \;\;\;,\;\;\; 
d C =  - \sigma \, 2 S \,   \frac{d \varphi}{\sigma + \varphi^2} \;.
\end{equation}
Using \eqref{lemma4.1} in the form
\begin{equation}
d \varphi = \frac{\varphi}{\rho} \left( C d\rho + \sigma S dv \right) \;,
\end{equation}
we obtain
\begin{equation}
d\left( \rho \star  \frac{d\varphi}{\varphi}\right) = \sigma \frac{2}{\sigma + \varphi^2} \left( \sigma S d\varphi \wedge d v + C d \varphi \wedge d \rho
\right) =  \sigma \frac{2 \rho}{\varphi(\sigma + \varphi^2)}  d\varphi \wedge d \varphi = 0 \;.
\end{equation}

\end{proof}

%%%%%%%%
\subsection{Contour properties}
%%%%%%%%%%%

The following result will be needed in subsequent sections.

\begin{proposition} \label{lemmainsideoutside} \textit{Let $\Gamma$ be a simple closed curve
 in the complex $\tau$-plane, such that it encircles the origin of the $\tau$-plane, and such that it is invariant  under the involution $\iota_\sigma$.
 Let  $D^+$ denote the simply connected interior region of $\Gamma$ (i.e. $0\in D^+$), and let
 $D^- = \mathbb{C} \backslash \left(D^+ \cup \Gamma\right)$, so that $D^+ \cup \Gamma \cup D^- = \mathbb{C}$.
Let $p \in \mathbb{C} \backslash\{0\} $. If
  $p\in D^+$ then $\iota_\sigma(p) \in D^-$, and if $p\in D^-$ then $\iota_\sigma(p) \in D^+$.}
  \end{proposition}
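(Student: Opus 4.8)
The plan is to exploit the fact that $\iota_\sigma$ is an involution of $\mathbb{C}\setminus\{0\}$ with $\iota_\sigma(\Gamma)=\Gamma$, together with a connectedness/degree argument. First I would observe that since $\Gamma$ is $\iota_\sigma$-invariant and $\iota_\sigma$ is a homeomorphism of $\mathbb{C}\setminus\{0\}$, the map $\iota_\sigma$ permutes the connected components of $(\mathbb{C}\setminus\{0\})\setminus\Gamma$. There are exactly two such components: $D^+\setminus\{0\}$ and $D^-$ (note $D^-$ does not contain $0$, and in fact does not contain $\infty$ either as a point of $\mathbb{C}$, but $D^-$ is the unbounded component). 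So $\iota_\sigma$ either fixes each of $D^+\setminus\{0\}$ and $D^-$ setwise, or swaps them. I must rule out the first possibility.

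To rule it out, I would use the behaviour of $\iota_\sigma$ near $0$ and near $\infty$: as $\tau\to 0$, $\iota_\sigma(\tau)=-\sigma/\tau\to\infty$, and as $|\tau|\to\infty$, $\iota_\sigma(\tau)\to 0$. Concretely, pick a point $p$ with $|p|$ very small; since $0\in D^+$ and $D^+$ is open, such $p\in D^+$. Then $|\iota_\sigma(p)|=1/|p|$ is very large, and since $D^+$ is a bounded set (its boundary $\Gamma$ is compact) while $D^-$ is the unbounded component, for $|p|$ small enough $\iota_\sigma(p)\in D^-$. Hence the component $D^+\setminus\{0\}$ is not mapped into itself, so $\iota_\sigma$ must swap the two components: $\iota_\sigma(D^+\setminus\{0\})=D^-$ and $\iota_\sigma(D^-)=D^+\setminus\{0\}$. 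This immediately gives both implications in the statement, since $p\in D^+$, $p\neq 0$ means $p\in D^+\setminus\{0\}$, so $\iota_\sigma(p)\in D^-$; and $p\in D^-$ gives $\iota_\sigma(p)\in D^+\setminus\{0\}\subset D^+$.

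The one technical point requiring a little care — and the step I expect to be the main obstacle to making fully rigorous — is the claim that $\iota_\sigma$ genuinely permutes components, i.e. that $\iota_\sigma(D^+\setminus\{0\})$ is one of the two components rather than straddling both. This follows because $D^+\setminus\{0\}$ is connected (it is an annulus-type region: $D^+$ is simply connected and open with $0\in D^+$, so removing one point keeps it connected, assuming $D^+$ is not a single point, which it isn't since it contains a neighbourhood of $0$), and the continuous image of a connected set is connected; since $\iota_\sigma(D^+\setminus\{0\})\subset (\mathbb{C}\setminus\{0\})\setminus\iota_\sigma(\Gamma)=(\mathbb{C}\setminus\{0\})\setminus\Gamma$, it lies entirely in one component. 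The same argument applied to $D^-$ (which is connected, being the unbounded complementary component of a simple closed curve, by the Jordan curve theorem) shows $\iota_\sigma(D^-)$ lies in one component. Since $\iota_\sigma$ is a bijection of $(\mathbb{C}\setminus\{0\})\setminus\Gamma$ onto itself, these two image components must be distinct and exhaust both, and the small-$p$ computation pins down which is which. I would present the Jordan curve theorem input and the "$\iota_\sigma$ is a self-homeomorphism of $\mathbb{C}\setminus\{0\}$ preserving $\Gamma$" observation explicitly, as these are what make the permutation-of-components argument legitimate.
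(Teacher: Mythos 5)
Your proof is correct, but it takes a genuinely different route from the paper's. The paper computes the winding number of $\Gamma$ about $w_0$ by the substitution $z=-\sigma/w$ in $\frac{1}{2\pi i}\oint_\Gamma \frac{dz}{z-w_0}$, using the $\iota_\sigma$-invariance of $\Gamma$ to conclude that the winding numbers about $w_0$ and about $\iota_\sigma(w_0)$ sum to $1$; since each is $0$ or $1$, exactly one of the two points is interior. Your argument avoids integration entirely: you observe that $\iota_\sigma$ is a self-homeomorphism of $\mathbb{C}\setminus\{0\}$ preserving $\Gamma$, hence permutes the two connected components $D^+\setminus\{0\}$ and $D^-$ of the complement, and you pin down that the permutation is the swap by noting that a point near $0$ (which lies in the bounded component $D^+$) is sent to a point of large modulus (which must lie in the unbounded component $D^-$). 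Both arguments ultimately rest on the Jordan curve theorem — the paper uses it implicitly when asserting that the winding number is $0$ or $1$ and that this dichotomy characterizes $D^\pm$, while you invoke it explicitly for the two-component decomposition. What your version buys is greater generality and elementarity: it works for any homeomorphism of $\mathbb{C}\setminus\{0\}$ that preserves $\Gamma$ and interchanges the ends $0$ and $\infty$, with no appeal to contour integration or analyticity; what the paper's version buys is brevity and a quantitative statement (the two indices sum to $1$) obtained in two lines from a standard change of variables. Your handling of the one delicate point — that $D^+\setminus\{0\}$ remains connected and that the images of the two components must be distinct and exhaust the complement because $\iota_\sigma$ restricts to a bijection of $(\mathbb{C}\setminus\{0\})\setminus\Gamma$ — is exactly what is needed to make the permutation argument rigorous.
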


\begin{proof} 
We take $\Gamma$ to be positively oriented.
Consider a 
point $w_0 \in \mathbb{C} \backslash \left( \{ 0 \} \cup \Gamma \right)$.
Then, the winding number of $\Gamma$ around $w_0$ is either zero or one. 
Now, for any such $w_0$,
\bea
\frac{1}{2\pi i} \ointctrclockwise_\Gamma \frac{dz}{z- w_0} &=&
\frac{1}{2\pi i} \frac{\sigma}{w_0} \ointctrclockwise_\Gamma \frac{dw}{(w + \frac{\sigma}{w_0} ) w} 
= \frac{1}{2\pi i} \ointctrclockwise_\Gamma \frac{dw}{w} - 
\frac{1}{2\pi i}  \ointctrclockwise_\Gamma \frac{dw}{w + \frac{\sigma}{w_0} } \nonumber\\
&=& 1 - \frac{1}{2\pi i}  \ointctrclockwise_\Gamma \frac{dw}{w + \frac{\sigma}{w_0} } \:, 
\eea
where we used  the invariance of $\Gamma$ under $\iota_\sigma$. Hence, if the winding number around $w_0$ is zero,
the winding number around  $\iota_\sigma (w_0)$ is one, and vice-versa.

\end{proof}

\begin{corollary} \label{lemmafixedpoints} \textit{
Let $\Gamma$ be a simple closed curve
 in the complex $\tau$-plane, such that it encircles the origin of the $\tau$-plane, and such that it is invariant  under the involution $\iota_\sigma$.
Then, $\Gamma$ passes through the fixed points of $\iota_\sigma$.}

\end{corollary}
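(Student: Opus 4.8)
The plan is to derive this as an immediate consequence of Proposition \ref{lemmainsideoutside} via a trichotomy-and-contradiction argument. First I would recall that the fixed points of $\iota_\sigma$ are the solutions of $\tau = -\sigma/\tau$, i.e.\ $\tau^2 = -\sigma$, namely $\pm p_F$ with $p_F = i$ for $\sigma = 1$ and $p_F = 1$ for $\sigma = -1$; in particular each fixed point lies in $\mathbb{C}\backslash\{0\}$, so Proposition \ref{lemmainsideoutside} is applicable to it.

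Next, fix a fixed point $p \in FP_\sigma$. Since $\Gamma$ is a simple closed curve, the plane decomposes as $D^+ \cup \Gamma \cup D^-$ with $D^+$, $\Gamma$, $D^-$ pairwise disjoint, so exactly one of the three alternatives $p \in D^+$, $p \in \Gamma$, $p \in D^-$ holds. I would rule out the first and third: if $p \in D^+$, Proposition \ref{lemmainsideoutside} gives $\iota_\sigma(p) \in D^-$, but $\iota_\sigma(p) = p$ because $p$ is a fixed point, so $p \in D^+ \cap D^- = \emptyset$, a contradiction; the case $p \in D^-$ is symmetric, yielding $p \in D^- \cap D^+ = \emptyset$. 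Hence $p \in \Gamma$, and since $p$ was an arbitrary fixed point this shows $\Gamma$ passes through all fixed points of $\iota_\sigma$.

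There is no real obstacle here — the corollary is a direct logical consequence of the preceding proposition, and the only point that warrants a line of comment is the verification that the fixed points are nonzero (so that the hypothesis $p \in \mathbb{C}\backslash\{0\}$ of Proposition \ref{lemmainsideoutside} is met), together with the disjointness of $D^+$ and $D^-$ built into the statement of that proposition.
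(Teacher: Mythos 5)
Your proposal is correct and follows essentially the same route as the paper: both derive the corollary from Proposition \ref{lemmainsideoutside} by noting that if a fixed point $p$ lay in $D^+$ or $D^-$, then $\iota_\sigma(p)=p$ would have to lie in the complementary region, a contradiction. Your added remarks that the fixed points are nonzero and that $D^+$, $\Gamma$, $D^-$ are pairwise disjoint are correct and only make explicit what the paper leaves implicit.
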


\begin{proof} 
This is a simple consequence of Proposition \ref{lemmainsideoutside}. 
Let $w_0$ denote one of the fixed points. If $w_0 \notin \Gamma$, then either $w_0 \in D^+$
or $w_0 \in D^-$, in which case $w_0$ and $\iota_\sigma (w_0)$ have different winding numbers by Proposition \ref{lemmainsideoutside}, which
contradicts the assumption that $w_0$ is a fixed point of $\iota_\sigma$ (i.e. $w_0 = \iota_\sigma (w_0)$).

 \end{proof}

\subsection{Affine transformations}

We close this section with a discussion of 
changes of Weyl coordinates $(\rho, v) \mapsto ({\tilde \rho}, \tilde v)$
that preserve the form of the 
 two-dimensional line element  $ds_2^2$, given either by \eqref{weylframe} or by  \eqref{weylframe2}.
Namely, we consider
affine transformations
\begin{eqnarray}
\begin{pmatrix}
{\tilde \rho} \\
{\tilde v} 
\end{pmatrix}
=
\begin{pmatrix}
a & b \\
c & d 
\end{pmatrix}
\begin{pmatrix}
{\rho} \\
{v} 
\end{pmatrix}
+ 
\begin{pmatrix}
{\alpha} \\
{\beta} 
\end{pmatrix} \;\;\;,\;\;\; a, b, c, d, \alpha, \beta \in \mathbb{R} \;.
\label{affinetransf}
\end{eqnarray}
Invertibility of the linear part of the transformation requires $ad - bc \neq 0$. 
In addition, we 
demand that 
$ \sigma d \rho^2 + dv^2 =  k \,  (\sigma  d {\tilde \rho}^2 + d {\tilde v}^2 ) $, 
where $k \in \mathbb{R} \backslash \{0\}$. This
requires imposing 
\begin{eqnarray}
\sigma a^2 + b^2 = \sigma d^2 + c^2   \;\;\;,\;\;\; a c = - \sigma b d \;,
\label{condcoef}
\end{eqnarray}
or equivalently,
\begin{eqnarray}
\sigma a^2 + c^2 = \sigma d^2 + b^2   \;\;\;,\;\;\; a b = - \sigma c d \;.
\end{eqnarray}
Thus, the linear part of the transformation is a linear conformal isometry when $k>0$, and a linear conformal anti-isometry when $k<0$.

Under the change of Weyl coordinates $({\tilde \rho}, {\tilde v}) \mapsto ({ \rho}, {v})$, 
a solution ${\tilde M}({\tilde \rho}, {\tilde v})$
to \eqref{eq_motion_A} gets mapped to $ { \tilde M}  ( {\tilde \rho} (\rho,v), {\tilde v}(\rho, v)) \equiv {M} (\rho, v)$, which
satisfies the field equations
\bea 
d \left(  {\tilde \rho} (\rho, v)   {M}^{-1} (\rho, v) \star d {M} (\rho, v) \right) &=& 0 \,.
\eea

%%%%%%%%%%%%%
\section{The Breitenlohner-Maison linear system \label{sec:bm}}
%%%%%%%%%%%%%%%

The non-linear equations \eqref{eq_motion_A}   are an integrable system, i.e., they are the solvability conditions \cite{Its} for a certain Lax pair,
the so-called Breitenlohner-Maison (BM) linear system \cite{Breitenlohner:1986um}.
This is an auxiliary linear system in the Weyl upper-half plane given by
\begin{equation} \label{Laxpair}
	\varphi(\rho, v) \Big( dX(\rho, v) +  A(\rho, v) X(\rho, v) \Big)= \star \, dX(\rho, v) \;.
\end{equation}
It is defined in terms of a function $\varphi$ and a matrix one-form $A$. The latter 
takes the form given in \eqref{AMdM}. 
We assume that the components of $A$ are one-forms, whose coefficient functions are continuously differentiable functions in an open set in 
the Weyl upper-half plane.
We will refer to this requirement by saying that $A$ is of class $C^1$.
The function $\varphi$ is taken from the set $\mathcal{T} $ defined in \eqref{vfi2}.

Given $A$ and $\varphi$,  we seek solutions $X$ with the following properties: the matrix $X$ is invertible; $X$ is  twice continuously differentiable (i.e. of class $C^2$) and
$X^{-1}$ is continuously differentiable (i.e. of class $C^1$) with respect to $(\rho,v)$.

Next, let us discuss the solvability of the BM linear system \eqref{Laxpair}, following \cite{Lu:2007jc,Camara:2017hez}.
We will make use of the relations
\begin{equation} \label{lemma2}
	d(dX X^{-1}) = dX X^{-1}\wedge dX X^{-1}
\end{equation}
and
\begin{equation} \label{lemma3}
	d(\star \, A) = \frac{1}{\rho}d(\rho \star A) - \frac{1}{\rho}d\rho \wedge \star A,
\end{equation}
where we recall that $\rho >0$.

\begin{proposition} \label{propositioniff}
Let $\varphi \in \mathcal{T}$. Then
\begin{equation*}
	\varphi \Big(dX + A \, X \Big) = \star \, dX
\end{equation*}
\begin{equation*}
	\Longleftrightarrow
\end{equation*}
\begin{equation}
	(\varphi^2 +\sigma)dX X^{-1}= -\varphi^2 \, A -\varphi \star A.
	\label{dxxinsig}
\end{equation}
\end{proposition}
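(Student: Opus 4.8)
The plan is to prove Proposition~\ref{propositioniff} by a direct equivalence argument, manipulating the matrix one-form equation into an algebraic identity relating $dX X^{-1}$, $A$, and $\star A$. First I would set $B = dX X^{-1}$, so that the BM linear system $\varphi(dX + AX) = \star\, dX$ becomes, after right-multiplication by $X^{-1}$,
\begin{equation*}
\varphi\,(B + A) = \star\, B \;.
\end{equation*}
The key structural input is the relation $(\star)^2 = -\sigma\,\mathrm{id}$ from \eqref{hodgedualcoordinates}: applying $\star$ to both sides of $\varphi(B+A) = \star B$ and using that $\varphi$ is a scalar function (so it commutes with $\star$ acting on matrix-valued one-forms) gives $\varphi\,(\star B + \star A) = -\sigma\, B$. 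These are two linear equations in the unknowns $B$ and $\star B$ (viewing $A$, $\star A$ as given); I would eliminate $\star B$ between them. Concretely, from the first equation $\star B = \varphi B + \varphi A$; substituting into the second yields $\varphi(\varphi B + \varphi A + \star A) = -\sigma B$, i.e. $(\varphi^2 + \sigma) B = -\varphi^2 A - \varphi \star A$, which is exactly \eqref{dxxinsig}. This establishes the forward implication.

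For the converse, I would start from $(\varphi^2+\sigma) B = -\varphi^2 A - \varphi\star A$ and run the computation backwards, but here one must be slightly careful: dividing by $\varphi^2 + \sigma$ is legitimate because, by the properties of $\varphi \in \mathcal{T}$ established just before Proposition~\ref{lemmadifferentials} (and invoked explicitly in its proof), we have $\varphi^2 + \sigma \neq 0$ in a neighbourhood of $(\rho_0,v_0)$. So $B = -(\varphi^2 A + \varphi\star A)/(\varphi^2+\sigma)$. I would then compute $\star B = -(\varphi^2 \star A + \varphi(\star)^2 A)/(\varphi^2+\sigma) = -(\varphi^2\star A - \sigma\varphi A)/(\varphi^2+\sigma)$, and check directly that $\varphi(B+A) = \star B$ by forming $\varphi(B+A) - \star B$ and simplifying over the common denominator $\varphi^2 + \sigma$: the numerator should collapse to zero after using $(\star)^2 = -\sigma$ once more. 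Finally, multiplying $\varphi(B+A) = \star B$ on the right by $X$ recovers the original BM linear system, using invertibility of $X$ (one of the standing assumptions on solutions of the linear system).

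The only genuinely delicate point — and the one I would flag as the main obstacle — is the role of $\star$ acting on matrix-valued one-forms and the interchange of $\star$ with scalar multiplication by $\varphi$. Since $\star$ acts componentwise on the one-form part and $\varphi$ is a scalar (not matrix-valued) function, $\star(\varphi\,\omega) = \varphi\,\star\omega$ for any matrix one-form $\omega$; this is what makes the elimination of $\star B$ clean. I would state this commutation explicitly. Everything else is routine linear algebra over the field of scalar functions, with the single nondegeneracy condition $\varphi^2 + \sigma \neq 0$ doing all the real work in both directions. No use of $d$ or the Bianchi-type identities \eqref{lemma2}–\eqref{lemma3} is needed at this stage — those enter only later, when one passes from this algebraic reformulation to the actual field equations \eqref{eq_motion_A}.
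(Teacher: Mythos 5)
Your proposal is correct and follows essentially the same route as the paper: right-multiply by $X^{-1}$, apply $\star$ using $(\star)^2=-\sigma\,\mathrm{id}$, eliminate $\star\,dX\,X^{-1}$ to obtain \eqref{dxxinsig}, and for the converse divide by $\varphi^2+\sigma\neq 0$ and reverse the computation. The points you flag (commutation of $\star$ with scalar multiplication, and the nondegeneracy $\varphi^2+\sigma\neq 0$ for $\varphi\in\mathcal{T}$) are exactly the ones the paper relies on.
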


\begin{proof} 
$\Rightarrow$: We multiply \eqref{Laxpair} by $X^{-1}$ on the right, to obtain
\begin{equation} \label{19B}
	\varphi \, dX X^{-1} +\varphi A = \star \, dX X^{-1}.
\end{equation}
Using \eqref{hodgedualcoordinates}, we obtain for 
its Hodge dual,
\begin{equation}
	\varphi \star dX X^{-1} +\varphi  \star A = -\sigma dX X^{-1}.
\end{equation}
Multiplying by $\varphi$ gives 
\begin{equation}
	\left\{\begin{array}{rcl}
		\varphi^2 dX X^{-1} +\varphi^2 A - \varphi \star dX X^{-1} & =& 0 \;, \\
		\varphi  \star dX X^{-1} +\varphi  \star A +\sigma dX X^{-1}& =& 0 \;.
	\end{array}\right.
\end{equation}
Their sum simplifies to
\begin{equation}
	(\varphi^2 +\sigma) dX X^{-1} = -\varphi^2 A - \varphi \star A. 
\end{equation}

$\Leftarrow$: Recall that for $\varphi \in \mathcal{T}$ we have $\varphi^2 + \sigma \neq 0$.
Starting with
\begin{equation}
	(\varphi^2 + \sigma)dX X^{-1} = -\varphi^2 A - \varphi \star A,
\end{equation}
and multiplying by $X$ on the right gives,
\begin{equation}
	(\varphi^2 + \sigma)dX = -\varphi^2 AX - \varphi \star AX.
\end{equation}
Applying the Hodge star operator to it, and multiplying it by $\varphi$, respectively, we obtain the following equations,
\begin{equation}
	\left\{\begin{array}{l}
		\varphi^2 \star  dX + \sigma \star dX = - \varphi^2 \star  AX +\varphi \sigma A X \;,\\
		\varphi^3 dX + \varphi \sigma dX = - \varphi^3 AX - \varphi^2 \star AX.
	\end{array}\right.
\end{equation}
Subtracting the second equation from the first, we get
\begin{equation}
	(\varphi^2 + \sigma) \star dX - (\varphi^2 + \sigma)\varphi dX = (\varphi^2 + \sigma)\varphi A X
\end{equation}
which, when divided by $(\varphi^2 + \sigma)$, gives the BM linear system
\begin{equation}
	\varphi \left( d X + A X \right) = \star \, dX.
	\label{laxBM}
\end{equation}

\end{proof}

Now we show that the solvability of the linear system \eqref{Laxpair} implies the field equations \eqref{eq_motion_A}.

\begin{theorem} \label{theoremlaxtoeom} 
Let $\varphi\in\mathcal{T}$.
If the equation for the BM linear system, 
\begin{equation} %\label{Laxpair}
	\varphi  \Big( dX + A \, X \Big) = \star \, dX
\end{equation}
is satisfied, then $A$ is a solution to the field equations
\begin{equation}
	d(\rho \star A) = 0.
\end{equation}
\end{theorem}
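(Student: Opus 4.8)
The plan is to start from the equivalent form of the BM linear system provided by Proposition \ref{propositioniff}, namely
\begin{equation*}
(\varphi^2+\sigma)\,dX X^{-1} = -\varphi^2 A - \varphi \star A,
\end{equation*}
and to differentiate it, feeding in the Maurer-Cartan-type identity \eqref{lemma2} for $d(dXX^{-1})$ together with the known differential properties of $\varphi\in\mathcal{T}$ collected in Propositions \ref{lemmadifferentials} and \ref{lemmatangent}. The idea is that $d$ applied to both sides must produce an identity which, after using $d(dXX^{-1})=dXX^{-1}\wedge dXX^{-1}$ to eliminate the left-hand side, collapses onto a multiple of $d(\rho\star A)$ plus terms that vanish by the already-established relations. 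Since $\varphi$ is a nontrivial function on the Weyl plane, the scalar coefficient multiplying $d(\rho\star A)$ should be generically nonzero (indeed $\varphi\neq 0$ and $\varphi^2+\sigma\neq 0$ for $\varphi\in\mathcal{T}$), so one can then divide out and conclude.

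Concretely, first I would write $B:=dXX^{-1}$ and rewrite Proposition \ref{propositioniff} as $B = -\tfrac{\varphi^2}{\varphi^2+\sigma}A - \tfrac{\varphi}{\varphi^2+\sigma}\star A$. Then I would compute $dB$ in two ways: on the one hand $dB = B\wedge B$ by \eqref{lemma2}; on the other hand, apply $d$ to the right-hand side, using the product rule and the expressions for $d\varphi$ and $\star d\varphi$ from Proposition \ref{lemmadifferentials}, as well as \eqref{lemma3} to trade $d(\star A)$ for $\tfrac1\rho d(\rho\star A) - \tfrac1\rho d\rho\wedge\star A$. Simultaneously I would need $B\wedge B$ re-expressed in terms of $A$ and $\star A$ using the same substitution for $B$; here the two-form identities $A\wedge A$, $A\wedge\star A$, $\star A\wedge A$, $\star A\wedge\star A$ and the fact that on one-forms $\star$ acts with $(\star)^2=-\sigma\,\mathrm{id}$ (from \eqref{hodgedualcoordinates}) are what make the bookkeeping manageable. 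Equating the two computations and simplifying should isolate $d(\rho\star A)$.

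I would expect Proposition \ref{lemmatangent}, $d(\rho\star\tfrac{d\varphi}{\varphi})=0$, to be exactly the input that kills the "pure $\varphi$" part of the differentiated equation — i.e. the terms that come from differentiating the $\varphi$-dependent scalar coefficients rather than $A$ — so that what survives is genuinely proportional to $d(\rho\star A)$. A small separate check is that the coefficient function in front of $d(\rho\star A)$ (something like $\tfrac{\varphi}{\rho(\varphi^2+\sigma)}$ up to signs) is nonvanishing in the neighbourhood of $(\rho_0,v_0)$, which is immediate since $\rho>0$, $\varphi\neq 0$, and $\varphi^2+\sigma\neq 0$ for $\varphi\in\mathcal{T}$.

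The main obstacle will be the algebra of the wedge-product manipulation: correctly tracking signs when commuting one-forms, correctly using $(\star)^2=-\sigma$ and $\star d\rho=-\sigma dv$, $\star dv = d\rho$, and making sure that every term not proportional to $d(\rho\star A)$ really does cancel — in particular the cross terms mixing $d\varphi$ with $A$ and the terms from $B\wedge B$. There is also a subtlety that $A=M^{-1}dM$ is automatically closed in a certain twisted sense only through $dM$; one must be careful that no independent integrability assumption on $A$ is silently used beyond $A$ being of class $C^1$ and of the form $M^{-1}dM$ (which gives $dA + A\wedge A = 0$). Organizing the computation so that this Maurer-Cartan identity for $A$, the identity for $B$, and Proposition \ref{lemmatangent} are the only structural inputs — with everything else being $\varphi$-calculus from Proposition \ref{lemmadifferentials} — is the key to a clean proof.
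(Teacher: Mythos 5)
Your proposal follows essentially the same route as the paper's proof: differentiate the equivalent form $(\varphi^2+\sigma)\,dXX^{-1}=-\varphi^2A-\varphi\star A$ from Proposition \ref{propositioniff}, eliminate $d(dXX^{-1})$ via \eqref{lemma2}, trade $d(\star A)$ for $\tfrac1\rho d(\rho\star A)-\tfrac1\rho d\rho\wedge\star A$ via \eqref{lemma3}, substitute the expression for $dXX^{-1}$ back in, and use the two-form identities together with $dA+A\wedge A=0$ and the $\varphi$-calculus to isolate $-\tfrac{\varphi}{\rho}\,d(\rho\star A)$. One correction to your final step: the identity that kills the residual terms is not Proposition \ref{lemmatangent} (which plays no role in this proof) but the one-form identity \eqref{lemma5} of Proposition \ref{lemmadifferentials}, $\tfrac{2\varphi\sigma}{\varphi^2+\sigma}d\varphi+\tfrac{\varphi^2-\sigma}{\varphi^2+\sigma}\star d\varphi=\sigma\tfrac{\varphi}{\rho}dv$ --- the leftover terms are not ``pure $\varphi$'' two-forms but a one-form combination of $d\varphi$, $\star d\varphi$ and $dv$ wedged with $A$, so the vanishing-two-form statement $d(\rho\star\tfrac{d\varphi}{\varphi})=0$ is not the right tool at that point.
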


\begin{proof} 

Using Proposition \ref{propositioniff}, 
and taking the differential of \eqref{dxxinsig},
we get
\begin{equation}
	2\varphi d\varphi \wedge dX X^{-1} + (\varphi^2+\sigma)d \Big(dX X^{-1}\Big) = -2\varphi d\varphi\wedge A - \varphi^2 dA- d\varphi\wedge \star A - \varphi d(\star A).
\end{equation}
By \eqref{lemma2} and \eqref{lemma3}, the previous equation becomes
\begin{equation}
	2\varphi d\varphi \wedge dX X^{-1} + (\varphi^2+\sigma)dX X^{-1}\wedge dX X^{-1} = -2\varphi d\varphi\wedge A - \varphi^2 dA- d\varphi\wedge \star A -  \frac{\varphi}{\rho} d(\rho \star A) - \frac{\varphi}{\rho}d\varphi\wedge \star A.
\end{equation}
Substituting \eqref{dxxinsig} into it, 
 we obtain
\begin{align}
\label{eqdvarphiA}
	& - \frac{\varphi}{\rho}d\varphi\wedge \star A = 2\varphi d\varphi \wedge\left(-\frac{\varphi^2}{\varphi^2+\sigma}A - \frac{\varphi}{\varphi^2+\sigma}\star A \right) \nonumber\\
	& +(\varphi^2+\sigma) \left[ \frac{\varphi^4}{(\varphi^2+\sigma)^2} A\wedge A + \frac{\varphi^3}{(\varphi^2+\sigma)^2}A\wedge \star A + \frac{\varphi^3}{(\varphi^2+\sigma)^2}\star A \wedge A + \frac{\varphi^2}{(\varphi^2+\sigma)^2} \star A \wedge \star A \right]  \nonumber\\
	&  + 2\varphi d\varphi \wedge A  + \varphi^2 dA + d\varphi \wedge \star A - \frac{\varphi}{\rho} d\varphi \wedge \star A.
\end{align}
Now we use the following relations for one-forms $B,C$, valid in two dimensions, 
\begin{equation}
	\left\{\begin{array}{rcl}
		\star B \wedge \star C& =& \sigma B\wedge C \;, \\
		\star B \wedge C &=& - B\wedge \star C  \;,
	\end{array}\right.
\end{equation}
as well as the relation $dA + A \wedge A = 0$ satisfied by $A = M^{-1} d M$.
These relations, together with \eqref{lemma4.1}, lead to a simplification of \eqref{eqdvarphiA},
\begin{equation}
	- \frac{\varphi}{\rho}d(\rho \star A) = -\frac{2\varphi^3}{\varphi^2+\sigma} d\varphi \wedge A - \frac{2\varphi^2}{\varphi^2+\sigma} d\varphi \wedge \star A + 2\varphi d\varphi \wedge A + d\varphi \wedge A - \frac{\varphi}{\rho} d\rho \wedge \star A.
\end{equation}
This can be written as
\begin{equation}
	- \frac{\varphi}{\rho}d(\rho \star A)  = \frac{2\varphi \sigma}{\varphi^2 + \sigma} d\varphi \wedge A + \left(\frac{2\varphi^2}{\varphi^2+\sigma}-1 \right)\star d\varphi \wedge A + \frac{\varphi}{\rho} \star  d\rho \wedge A.
\end{equation}
Then, using \eqref{hodgedualcoordinates}, this results in
\begin{equation}
	- \frac{\varphi}{\rho}d(\rho \star A)  = \left[\frac{2\varphi \sigma}{\varphi^2 + \sigma} d\varphi  + \left(\frac{\varphi^2-\sigma}{\varphi^2+\sigma}\right) \star d\varphi - \sigma\frac{\varphi}{\rho} dv \right]\wedge A.
\end{equation}
Using \eqref{lemma5} and $ \varphi\neq 0$, we obtain 
\begin{equation}
	d(\rho \star A)  = 0. 
\end{equation}

\end{proof}

%%%%%%%%%%%%%%%%%%%
\section{Monodromy matrix \label{sec:mon}}
%%%%%%%%%%%%%%%%%%%%%%%%%

The BM linear system  \eqref{Laxpair} uses, as an input, a matrix one-form $A = M^{-1} dM$  and a function 
$\varphi \in \mathcal{T}$. These quantities are defined on the 
Weyl upper half-plane.
Given a solution $X(\rho, v)$ to the BM linear system,
Breitenlohner and Maison constructed \cite{Breitenlohner:1986um} a 
 matrix ${\cal M}$ that is independent of the Weyl coordinates $(\rho, v)$.
Here we revisit and generalize 
their construction.
In doing so, we refrain from explicitly indicating
the dependence of $\varphi \in \mathcal{T}$ on the Weyl coordinates $(\rho,v)$, for ease of notation.

We will assume that we can define an involutive map $\natural$ acting on matrix functions $N(\rho,v)$ of a given order
such that
\bea
(N_1 \, N_2)^{\natural} = N_2^{\natural} \, N_1^{\natural} \;\;\;,\;\;\; (N_1 +  N_2)^{\natural} = N_1^{\natural} + N_2^{\natural} \;\;\;,\,\;\;
(a \, N)^{\natural} = a \, N^{\natural} \;\;\; \forall a \in \mathbb{R} \;,
\label{NNa}
\eea
and whenever $N \in G/H$, it coincides with the `generalized transposition' in $G/H$. We will assume that this involution
commutes with differentiation,
\begin{eqnarray}
\partial_{\rho} ( N^{\natural} ) = ( \partial_{\rho} N)^{\natural} \;\;\;,\;\;\; \partial_{v} ( N^{\natural} ) = ( \partial_v N)^{\natural} \;,
\label{comnatd}
\end{eqnarray}
and that $\mathcal{T}$ is closed under $\natural$. Examples are $N^{\natural} = N^T, \, N^{\natural} = \eta \, N^{\dagger}
\, \eta^{-1}$, where $\eta$ denotes a constant invertible matrix.

We follow the notation used in \cite{Camara:2017hez}.
The coset representative $M \in G/H$ takes the form $M = V^{\natural} V$. 
Since we take
$A$ to be of class $C^1$ in an open set, $V$ is of class $C^2$ in this set. Since $M$ is locally invertible, so is $V$.
We decompose $dV V^{-1}$ into one-forms that under the involution $\natural$ are either invariant or anti-invariant,
\begin{equation}
	dV V^{-1} = P+Q \;,
\end{equation}
where 
\begin{equation} \label{pqinvolution}
	P^{\natural} = P, \hspace{5mm}  Q^{\natural}=-Q.
\end{equation}
Using \eqref{comnatd}, we note the  relation
\begin{equation}
	2 P = VAV^{-1} \;.
	\label{PVAV}
\end{equation}
Given a solution $X$ to the BM linear system with input $(A, \varphi)$, we define \cite{Breitenlohner:1986um}
\begin{equation}
	\mathcal{P} = V\, X \;.
	\label{PVX}
\end{equation}
Now, instead of $\varphi \in  \mathcal{T}$,
 consider picking ${\chi} \in \mathcal{T}$ with  $\chi = - \sigma/ {\varphi}^{\natural}$.
 Given any solution $ {\widetilde X}$ to
  BM linear system with input $(A, \chi)$, 
let $ \widetilde{ \mathcal{P} } $ denote
 \begin{equation}
 \widetilde{ \mathcal{P} } = V\, {\widetilde X}  \;.
\end{equation}
We then use ${\cal P}$
 and $\tilde{\cal P}$ to 
define the matrix ${\cal M}$ by
\begin{equation}
	\mathcal{M} = \widetilde{\mathcal{P}}^{\natural} \, \mathcal{P}  = \widetilde{X}^{\natural} \, M\, X \;.
	\label{calMXMX}
\end{equation}
Note that $X$ is not necessarily a priori related to ${\widetilde X }$, except for the relation between $\varphi$ and 
 $\chi$. Hence, \eqref{calMXMX}  generalizes the definition given in 
 \cite{Breitenlohner:1986um}.

\begin{theorem} \label{theoremmonodromy} Let $\varphi, \chi \in \mathcal{T}$, 
with $\chi = - \sigma/ {\varphi}^{\natural}$.
Let 
$X$ and $\widetilde{X}$ denote solutions to the corresponding 
BM linear system based on 
$A = M^{-1}dM$. Then, the matrix ${\cal M} (\rho, v)$ defined in \eqref{calMXMX} satisfies
\begin{equation}
	d\mathcal{M}(\rho,v) = 0.
\end{equation}
\end{theorem}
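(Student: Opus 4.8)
The plan is to show that $d\mathcal{M}=0$ by differentiating the product $\mathcal{M}=\widetilde{X}^{\natural}\,M\,X$ directly, and then using the BM linear system for $X$, the BM linear system for $\widetilde{X}$ (suitably transformed by $\natural$), and the relation $\chi=-\sigma/\varphi^{\natural}$ to show all terms cancel. First I would compute
\[
d\mathcal{M} = (d\widetilde{X}^{\natural})\,M\,X + \widetilde{X}^{\natural}\,(dM)\,X + \widetilde{X}^{\natural}\,M\,(dX)\,.
\]
Factoring $\widetilde{X}^{\natural}$ on the left and $X$ on the right, it suffices to show that the matrix one-form
\[
(\widetilde{X}^{\natural})^{-1}(d\widetilde{X}^{\natural})\,M + dM + M\,(dX)\,X^{-1}
\]
vanishes. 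Using $A = M^{-1}dM$, i.e. $dM = MA$, and multiplying through by $M^{-1}$ on the left, this is equivalent to
\[
M^{-1}(\widetilde{X}^{\natural})^{-1}(d\widetilde{X}^{\natural})\,M + A + (dX)\,X^{-1} = 0\,.
\]

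Next I would translate the two linear systems into the convenient form \eqref{dxxinsig} from Proposition \ref{propositioniff}: for $X$ with parameter $\varphi$ we have $(\varphi^2+\sigma)\,dX\,X^{-1} = -\varphi^2 A - \varphi \star A$. For $\widetilde{X}$ with parameter $\chi$ we have $(\chi^2+\sigma)\,d\widetilde{X}\,\widetilde{X}^{-1} = -\chi^2 A - \chi \star A$. Now I would apply $\natural$ to the second identity. Since $\natural$ is an anti-homomorphism commuting with $d$ and with $\star$ (the latter because $\star$ acts by real scalars on the coordinate one-forms via \eqref{hodgedualcoordinates}, and $A^{\natural}=(M^{-1}dM)^{\natural}$ relates to $A$ through $M^{\natural}=M$), I get an expression for $(\widetilde{X}^{\natural})^{-1}d(\widetilde{X}^{\natural})$ — note $(d\widetilde{X}\,\widetilde{X}^{-1})^{\natural} = (\widetilde{X}^{-1})^{\natural}(d\widetilde{X})^{\natural} = (\widetilde{X}^{\natural})^{-1}d(\widetilde{X}^{\natural})$. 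This produces terms involving $\chi^{\natural}$, and the point of the hypothesis $\chi = -\sigma/\varphi^{\natural}$ is precisely that $\chi^{\natural} = -\sigma/\varphi$, so after conjugating by $M$ (using $M^{\natural}=M$ and $M^{-1}A^{\natural}M$ being expressible in terms of $A$) everything re-expresses in the single variable $\varphi$. Substituting both rewritten expressions into the bracket $M^{-1}(\widetilde{X}^{\natural})^{-1}(d\widetilde{X}^{\natural})M + A + (dX)X^{-1}$ and clearing the common denominator $(\varphi^2+\sigma)$ should collapse everything to zero by straightforward algebra in $\varphi$.

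The main obstacle I anticipate is bookkeeping the interaction between $\natural$ and the decomposition $dVV^{-1}=P+Q$: one needs $M^{-1}A^{\natural}M = A$ (equivalently $A^{\natural} = M A M^{-1}$), which follows from $M = V^{\natural}V$, $2P = VAV^{-1}$, $P^{\natural}=P$, together with the fact that $dMM^{-1}$ and $M^{-1}dM$ are $\natural$-conjugate via $M$. I would establish this auxiliary identity first as a short lemma, since it is what allows the $\widetilde{X}$-term, after conjugation by $M$, to combine cleanly with the $X$-term. A secondary subtlety is confirming that $\natural$ commutes with $\star$ on the relevant one-forms; since $\star$ only mixes $d\rho$ and $dv$ with real coefficients $\pm\sigma, 1$ and $\natural$ is $\mathbb{R}$-linear, $(\star B)^{\natural} = \star(B^{\natural})$ holds for any matrix one-form $B$, so this causes no trouble once stated. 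With these two facts in hand, the cancellation is a routine computation of the type already carried out in the proof of Theorem \ref{theoremlaxtoeom}.
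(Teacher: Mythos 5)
Your argument is correct, and it arrives at the same essential cancellation as the paper while organizing the computation differently. The paper works with the coset factorization $M=V^{\natural}V$: it sets $\mathcal{P}=VX$, $\widetilde{\mathcal{P}}=V\widetilde{X}$, decomposes $dVV^{-1}=P+Q$ into $\natural$-invariant and anti-invariant parts, and shows that in $d\mathcal{M}=\widetilde{\mathcal{P}}^{\natural}\bigl[(d\widetilde{\mathcal{P}}\,\widetilde{\mathcal{P}}^{-1})^{\natural}+d\mathcal{P}\mathcal{P}^{-1}\bigr]\mathcal{P}$ the $Q$ terms cancel by anti-invariance while the $P$ and $\star P$ terms cancel because $\chi^{\natural}=-\sigma/\varphi$ flips the coefficient $\frac{\chi^2-\sigma}{\chi^2+\sigma}$ to $\frac{\sigma-\varphi^2}{\sigma+\varphi^2}$. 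You instead differentiate $\widetilde{X}^{\natural}MX$ directly and reduce everything to the single identity $M^{-1}(\widetilde{X}^{\natural})^{-1}d(\widetilde{X}^{\natural})M+A+dX\,X^{-1}=0$, which follows from the two copies of \eqref{dxxinsig} once you know $A^{\natural}=MAM^{-1}$; your key lemma is immediate from $M^{\natural}=M$ and the anti-homomorphism property, since $A^{\natural}=(M^{-1}dM)^{\natural}=(dM)M^{-1}=MAM^{-1}$, and does not actually require invoking $V$, $P$ or $Q$ at all. Indeed, substituting $\chi^{\natural}=-\sigma/\varphi$ gives $M^{-1}(\widetilde{X}^{\natural})^{-1}d(\widetilde{X}^{\natural})M=-\frac{\sigma}{\sigma+\varphi^2}A+\frac{\varphi}{\sigma+\varphi^2}\star A$, which added to $A-\frac{\varphi^2}{\sigma+\varphi^2}A-\frac{\varphi}{\sigma+\varphi^2}\star A$ vanishes. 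What your route buys is economy: it needs only $M^{\natural}=M$ rather than the existence and regularity of the square root $V$, and it makes transparent that the whole theorem rests on the coefficient identities $\frac{(\chi^{\natural})^2}{(\chi^{\natural})^2+\sigma}+\frac{\varphi^2}{\varphi^2+\sigma}=1$ and $\frac{\chi^{\natural}}{(\chi^{\natural})^2+\sigma}+\frac{\varphi}{\varphi^2+\sigma}=0$. What the paper's route buys is that the objects $\mathcal{P}$, $\widetilde{\mathcal{P}}$ and the $P,Q$ decomposition are reused elsewhere in the Breitenlohner--Maison framework. Both proofs share the same implicit assumption that $\natural$ acts multiplicatively on scalar multiples by elements of $\mathcal{T}$ (not just by real constants as literally stated in \eqref{NNa}), so you are not worse off than the paper on that point; and your observation that $\natural$ commutes with $\star$ because $\star$ has real coefficients is exactly the justification needed.
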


\begin{proof} 
We consider the BM linear system with input $(A,   \varphi)$ and solution $X$.
Then, by Proposition \ref{propositioniff}, $X, A,\varphi$ satisfy,
\begin{equation}
	dX X^{-1} = -\frac{\varphi^2}{\varphi^2 + \sigma}A - \frac{\varphi}{\varphi^2 + \sigma}\star A.
	\label{dXXinv}
\end{equation}
Using \eqref{PVX}, 
we compute
\begin{eqnarray} \label{eqperpq}
	d\mathcal{P}\mathcal{P}^{-1} &=& dVV^{-1} + VdXX^{-1}V^{-1} \nonumber\\ 
	&=& Q+P - \frac{2\varphi^2}{\varphi^2+\sigma} P - \frac{2\varphi}{\varphi^2+\sigma}\star P = Q - \frac{\varphi^2-\sigma}{\varphi^2+\sigma} P - \frac{2\varphi}{\varphi^2+\sigma} \star P \;,
\end{eqnarray}
where we used the relation \eqref{dXXinv} as well as \eqref{pqinvolution} and \eqref{PVAV}.

Similarly, for the BM linear system with input $(A,  \chi)$ and solution $\widetilde X$, we obtain
\begin{eqnarray} \label{eqperpqtil}
d\widetilde{\mathcal{P}} \widetilde{\mathcal{P}}^{-1} &=&
 Q - \frac{\chi^2-\sigma}{\chi^2+\sigma} P - \frac{2\chi}{\chi^2+\sigma} \star P \;.
\end{eqnarray}

Next, using \eqref{calMXMX} and the property \eqref{comnatd},
we compute
\begin{equation}
	d\mathcal{M} = d \left(\widetilde{\mathcal{P}}^{\natural}  \right)
	\mathcal{P}+\widetilde{\mathcal{P}}^{\natural} d\mathcal{P} =\widetilde{\mathcal{P}}^{\natural} \left[\left(\widetilde{\mathcal{P}}^{\natural}\right)^{-1}
	 \left( d \widetilde{\mathcal{P}}\right)^{\natural}
	+ d\mathcal{P} \mathcal{P}^{-1} \right]\mathcal{P} \;,
\end{equation}
which, using \eqref{NNa}, can be written as
\begin{equation}
	d\mathcal{M}= \widetilde{\mathcal{P}}^{\natural} \left[\left(d\widetilde{\mathcal{P}} \widetilde{\mathcal{P}}^{-1}\right)^{\natural}+ d\mathcal{P}\mathcal{P}^{-1} \right]\mathcal{P} \;.
\end{equation}
Substituting the expression \eqref{eqperpq} and  \eqref{eqperpqtil} into this gives 
\begin{equation}
	d\mathcal{M} = \tilde{\mathcal{P}}^{\natural} \left[Q^{\natural} - 
	\left( 
	\frac{\chi^2-\sigma}{\chi^2+\sigma} P\right)^{\natural} - \left( \frac{2\chi }{\chi^2+\sigma} \star P \right)^{\natural}+Q - \frac{\varphi^2-\sigma}{\varphi^2+\sigma} P - \frac{2\varphi}{\varphi^2+\sigma} \star P \right]\mathcal{P}.
\end{equation}
Then, using \eqref{pqinvolution}, and taking into account that $\chi^{\natural}= -\sigma/\varphi$, we obtain 
\begin{equation}
	d\mathcal{M}=  \widetilde{\mathcal{P}}^{\natural} \left[-Q - \frac{-{\varphi}^2+\sigma}{{\varphi}^2+\sigma} P + \frac{2{\varphi}}{{\varphi}^2+\sigma}\star P+Q - \frac{\varphi^2-\sigma}{\varphi^2+\sigma} P - \frac{2\varphi}{\varphi^2+\sigma} \star P \right]\mathcal{P} = 0.
\end{equation}
Thus, we have shown that the matrix $\cal M$
is independent of the Weyl coordinates $(\rho, v)$ 
for any $X$ and $\widetilde{X}$ 
that solve \eqref{Laxpair} with input  $\varphi$ and $\chi$, respectively.

\end{proof}

Thus, the  matrix $\mathcal{M}$ defined in \eqref{calMXMX} is independent of the Weyl coordinates, even though
the individual factors depend on $(\rho, v)$. This result, which generalizes previous results
from  (Section 3 of)
 \cite{Breitenlohner:1986um} and 
 from (Section 5 of)
 \cite{Camara:2017hez},
 suggests that all information on $M(\rho, v)$ may be lost after multiplication by $X$ and $\widetilde X^\natural$ on the right and on the left, respectively; indeed we will later
 give an example in which, for a given $M(\rho, v)$,  it is possible to obtain in this way an arbitrary constant matrix $\cal M$, in particular the identity matrix. 
 In case we have a matrix $\mathcal{M}$, independent of $(\rho, v)$, from which one can obtain, via an appropriate factorization, a solution $M(\rho, v)$
 of the field equations, we will follow the terminology in the literature \cite{Nicolai:1991tt} and 
  say  that $\mathcal{M}$ is a {\it monodromy matrix} for $M(\rho, v)$. This will be addressed in the next section.

%%%%%%%%%%%%%%%%%%%%%%%%%%%%
\section{Canonical factorization gives a solution to the BM linear system \label{sec:mtbm}}
%%%%%%%%%%%%%%%%%%%%%%%%%%%%

In the previous section, we saw that, by \eqref{calMXMX}, one can construct matrices  $\mathcal{M} $ that are independent
of the Weyl coordinates, from given solutions $X$ and $\widetilde{X}$ of the BM
linear system with input $(A,  \varphi)$ and 
$(A,  \chi)$, respectively, where $\varphi, \chi= - \sigma/\varphi^{\natural} \in  \mathcal{T}$.
Therefore, we obtain a factorization of $\mathcal{M} $ in terms of $X, \widetilde{X}$ and $M$.

In this section, we study the reverse question.  Namely, we obtain $A(\rho,v)$ (or equivalently, $M(\rho, v)$) from a factorization of 
a matrix function $\mathcal{M} $ of the form \eqref{calMu} given below.
 Naturally, to do so, we must make certain (though very general)
assumptions, as follows (these assumptions may seem a bit technical, but in fact they are natural conditions in order for the 
class of involutions $\natural$ to be as general as possible).

Let  $\natural$ be an involution in the space of matrix functions 
 $N(\tau, \rho,v)$ of a given order, with domain  contained in $\mathbb{C} \times \mathbb{R}^+ \times \mathbb{R}$, satisfying \eqref{NNa} and 
 \eqref{comnatd} and such that it coincides with  the  `generalized transposition' if  $N(\tau, \rho, v) = N(\rho,v)$ does not depend on $\tau$ 
 and belongs to $G/H$. We assume moreover that $\natural$ commutes with the involution $\iota_\sigma$ and, if $N(\tau, \rho,v)$ and
 $N^{\natural}(\tau, \rho,v)$ are analytic with respect to $\tau$ in a region $ {\cal A} \subset \mathbb{C}$, then 
 $\partial_{\tau} (N^{\natural}) = (\partial_{\tau} N)^{\natural}$ in this region. \\
 
Let $\mathcal{M} (u)$ be a
matrix function of the complex variable $u$, and 
denote by $\mathcal{M}(\tau, \rho, v) $ the matrix 
that is obtained by composition with 
\be
u = v + \sigma \, \frac{\rho}{2} \frac{\sigma - \tau^2}{\tau} \;\;\;,\;\;\; \tau \neq 0 \;,
\label{utau}
\ee
i.e.
\be
\mathcal{M}(\tau, \rho, v) =  \mathcal{M} \left(v + \sigma \, \frac{\rho}{2} \frac{\sigma - \tau^2}{\tau} \right) 
 \label{calMu}
 \ee
 for all  $\tau \in \mathbb{C}\backslash \{0\}, \rho \in \mathbb{R}^+, v \in \mathbb{R}$ such that $\mathcal{M}(\tau, \rho, v) $ is invertible.
 
For later convenience, we note the relation of \eqref{utau}, called the spectral curve, with 
property
\eqref{spectralcurve}. We also recall the definition \eqref{invtt} of the involution
$\iota_\sigma$.\\

\noindent
{\it Assumption 1: 
There exists an open set $S$ such that, for every 
 $(\rho_0, v_0)\in S$, one can find a
simple closed curve $\Gamma$ in the $\tau$-plane, which is 
$\iota_\sigma$-invariant and 
encircles the origin, such that:
\\
For all 
$(\rho,v)$ in a neighbourhood of $(\rho_0, v_0)$, 
the matrix \eqref{calMu}, as well as its inverse, is 
analytic in a region (i.e. in an open, connected set) in the $\tau$-plane containing 
$\Gamma$, 
which we denote by $O$ and which we require to be 
 invariant under  $\iota_\sigma$
(see  Figure \ref{ribbon}), and such that  $\mathcal{M}^{\natural} (\tau, \rho, v) = \mathcal{M}(\tau, \rho, v) $ on $O$.
}\\

We denote by $D^+$ the simply connected interior region of $\Gamma$  (hence $0 \in D^+$) and by 
$D^- = \mathbb{C} \backslash (D^+ \cup \Gamma )$ the exterior region.
Recall that with these assumptions, we have, for $p \in \mathbb{C} \backslash \{0\}$,  that $p \in D^+ \Leftrightarrow  \iota_\sigma (p) \in D^-$,
see Proposition \ref{lemmainsideoutside}, and $\Gamma$ necessarily passes through the fixed points of the involution $\iota_\sigma$, 
which are $\pm i$ when $\sigma = 1$ and $\pm 1$ when $\sigma = -1$, see Corollary \ref{lemmafixedpoints}. \\

\begin{figure}[hbt!]
	\centering
	\includegraphics[scale=0.9]{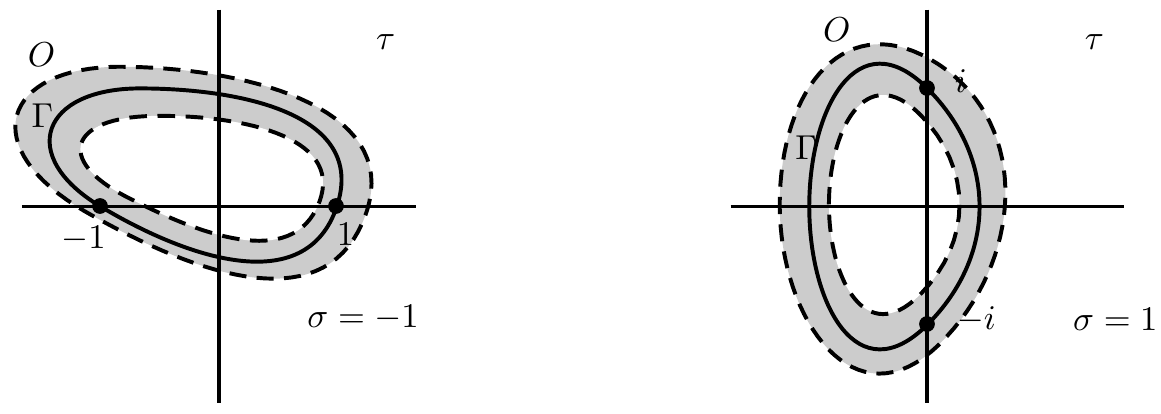}
	\caption{Examples of sets $O$ that are invariant under $\tau\mapsto- \sigma/\tau$. The curve in the middle represents $\Gamma$. 
	\label{ribbon}}
\end{figure}

\noindent
{\it Assumption 2:  If $N(\tau, \rho, v)$ is analytic in $D^+ \cup O$, then $N^{\natural} (\tau, \rho, v)$ is also analytic in $D^+ \cup O$.}
\\

Now consider $(\rho,v)$
as an arbitrary pair of parameters in a neighbourhood of $(\rho_0, v_0)$,
and take $\tau$ as the independent (complex) variable. To emphasize this aspect, we will denote the matrix in \eqref{calMu}
by $ \mathcal M_{(\rho, v)}(\tau) $.
\\

\noindent
{\it Assumption 3: With the same notation as in Assumption 1, for any $(\rho, v)$ in a neighbourhood of $(\rho_0, v_0)$,
$ \mathcal M_{(\rho, v)}(\tau) $ admits a canonical factorization of Wiener-Hopf (or Birkhoff) type with respect to $\Gamma$,
\be
 \mathcal M_{(\rho, v)}(\tau) =  M^-_{(\rho, v)}(\tau) \, M^+_{(\rho, v)}(\tau) \;\;\; {\rm on } \;\; \Gamma \;,
 \label{Mfactor}
 \ee
where $M^+_{(\rho, v)}(\tau)$ (respectively  $M^-_{(\rho, v)}(\tau)$) and its inverse are analytic and bounded in $D^+ \cup O$ (respectively 
$D^- \cup O$),
and we assume 
the normalization condition $M^+_{(\rho, v)}(0) = \mathbb{I}$. 
We set $X(\tau, \rho, v)  = M^+_{(\rho, v)}(\tau)$, so $X(0, \rho, v) = \mathbb{I}$.} 
\\

Such a factorization, if it exists, is unique; necessary and sufficient conditions for its existence were given in \cite{Camara:2017hez} 
(see also the references therein and \cite{CC}). A result given in \cite{Cardoso:2017cgi} states that,
 for any scalar function $ f(u)$ that is continuous, as well as its inverse, on $\Gamma_u$ 
(here  $ \Gamma_u $
denotes the image of $ \Gamma $ under \eqref{utau}), 
the function $f_{(\rho,v)} (\tau) $ always admits a canonical factorization with respect to $\Gamma$. It follows from this result that a canonical
Wiener-Hopf factorization always exists when the monodromy matrix is triangular \cite{CG,LS}.
It is also clear that only boundedness of $ \mathcal M_{(\rho, v)}(\tau) $ in the variable $\tau$, on $\Gamma$,
is assumed. $ \mathcal{M} (u) $ may be unbounded, see for example Section \ref{sec:schw0}.
Note that it follows from {Assumption 1} that  
the representation \eqref{Mfactor} holds for all $\tau \in O$.

Moreover, it was shown in \cite{Camara:2017hez} that under Assumptions 1 and 2, \eqref{Mfactor} can be written as
\be
 \mathcal M_{(\rho, v)}(\tau) =  X^{\natural} (- \frac{\sigma}{\tau}, \rho, v) \, M(\rho, v) \, X(\tau, \rho, v) \;\;\;,\;\;\; \tau \in \Gamma \;,
 \label{MMX}
 \ee
where 
\be 
M(\rho, v) = \lim_{\tau \rightarrow \infty}
M^-_{(\rho, v)}(\tau) = M^-_{(\rho, v)}(\infty) \;.
\ee

Our last assumption is as follows.\\

\noindent
{\it Assumption 4: $M(\rho,v)$ is of class $C^2$, and for each $\tau \in D^+ \cup O$ the matrix function $X$ 
is of class $C^2$ as a function of $(\rho, v)$, 
and $\partial X/ \partial \rho$, $\partial X/ \partial v$, $\partial X^{\natural}/ \partial \rho$ and $\partial X^{\natural}/ \partial v$ 
are analytic as functions of $\tau$ in the domain $D^+ \cup O$.
}
\\

Note that, by Assumption 3 and \eqref{MMX}, we have $X(0, \rho, v) = \mathbb{I}$ for all $(\rho, v)$  in a neighbourhood of $(\rho_0, v_0)$.
Therefore
\be
\left( \frac{ \partial X(\tau, \rho, v) }{\partial \rho} \right)_{\vert_{(0, \rho_0, v_0)}} = \left( \frac{ \partial X(0, \rho, v) }{\partial \rho} \right)_{\vert_{(\rho_0, v_0)}} = 0 \;,
\label{derXr0}
\ee
and analogously 
\be
 \left( \frac{ \partial X(0, \rho, v) }{\partial v} \right)_{\vert_{(\rho_0, v_0)}} = 0 \;.
 \label{derXv0}
 \ee
 
Now, for any $(\rho_0,v_0) \in S$, let
\be
C_{\rho_0,v_0} = \left\{ v_0 + \sigma \, \frac{\rho_0}{2} \frac{\sigma - \tau^2}{\tau} \; \; ,  \;\; \tau \in O\backslash FP_{\sigma} 
\right\} \subset \mathbb{C} \backslash W_{\rho_0, v_0} \;,
\ee
where we recall the definition of $W_{\rho_0, v_0}$ given above \eqref{omvrt}.
For any $\omega \in C_{\rho_0,v_0} $, let $\varphi_{\omega} $ and $\widetilde{\varphi}_{\omega} $ be the two functions defined in \eqref{vfi2},
and let $\mathcal{T}_{\omega} = \{ \varphi_{\omega},
\widetilde{\varphi}_{\omega} \}$. Thus, by definition, for any $\omega \in C_{\rho_0,v_0} $, $\varphi_{\omega} (\rho_0, v_0)$ and 
$\widetilde{\varphi}_{\omega} (\rho_0,v_0) = 
- \sigma / \varphi_{\omega} (\rho_0, v_0)$ lie in $O \backslash FP_{\sigma} $ and therefore, by continuity, also $\varphi_{\omega} (\rho, v)$ and 
$\widetilde{\varphi}_{\omega} (\rho,v)$ lie in $O\backslash FP_{\sigma} $ for all $(\rho, v)$ in a neighbourhood of $(\rho_0, v_0)$.
Therefore, for all $(\rho, v)$ in a neighbourhood of $(\rho_0,v_0)$, 
the matrix functions
$X(\tau, \rho, v)|_{\tau = \varphi (\rho, v)}$ and $X^{\natural} (- \sigma/\tau, \rho, v)|_{\tau = \varphi (\rho, v)}$, with $\varphi \in \mathcal{T}_{\omega} $,
are well defined and are of class $C^2$, and the equality resulting from \eqref{MMX}  by composition holds in a neighbourhood of $(\rho_0, v_0)$.

On the other hand, when replacing $\tau$ by $\varphi (\rho, v)$ of the form \eqref{vfi2}
in the expression for $u$ given in \eqref{utau}, one obtains
$\omega$ and therefore, the matrix on the left-hand side of \eqref{Mfactor} becomes $ \mathcal M (\omega)$, independent of $(\rho, v)$.

%%%%%%%%%%%%%%%%%%%
 
We now state the main result of this section as a theorem.

\begin{theorem} \label{theoremfactorizationsolution} Let Assumptions 1-4 be satisfied.
Then, $M(\rho,v)$ 
 is a solution of the field equations \eqref{eq_motion_A} on $S$.
\end{theorem}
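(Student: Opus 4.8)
The plan is to reverse-engineer the computation of Section~\ref{sec:mon}: there, starting from a solution $M(\rho,v)$ of the field equations and solutions $X,\widetilde X$ of the BM linear system, one built a $(\rho,v)$-independent $\mathcal M$; here we have $\mathcal M_{(\rho,v)}(\tau)$ together with its canonical factorization, and we want to show the middle factor $M(\rho,v)$ solves \eqref{eq_motion_A}. So first I would fix $(\rho_0,v_0)\in S$ and work in the neighbourhood where Assumptions 1--4 hold. Set $A = M^{-1}dM$. The goal is to show $d(\rho\star A)=0$ at $(\rho_0,v_0)$; since $(\rho_0,v_0)$ is arbitrary in $S$, this suffices.

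The key idea is to differentiate the identity \eqref{MMX}, namely $\mathcal M_{(\rho,v)}(\tau) = X^\natural(-\sigma/\tau,\rho,v)\,M(\rho,v)\,X(\tau,\rho,v)$ on $O$, with respect to $\rho$ and $v$, treating $\tau$ as a parameter. Crucially, once we substitute $\tau = \varphi(\rho,v)$ for $\varphi\in\mathcal T_\omega$, the left-hand side becomes $\mathcal M(\omega)$, which is \emph{constant} in $(\rho,v)$. This is the mechanism that must produce a differential equation for $X$. Concretely, writing $\mathcal Y(\tau,\rho,v) = X^\natural(-\sigma/\tau,\rho,v)\,M(\rho,v)\,X(\tau,\rho,v)$, the total differential of $\mathcal Y$ evaluated along the section $\tau=\varphi(\rho,v)$ must vanish; this total differential has a ``$d\tau$'' piece proportional to $d\varphi$ (through $\partial_\tau\mathcal Y$) and a ``$(\rho,v)$ at fixed $\tau$'' piece. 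Because $\mathcal Y(\tau,\rho,v)=\mathcal M_{(\rho,v)}(\tau)$ as functions on $O$, and $\mathcal M$ depends on $(\rho,v)$ only through $u = v+\sigma\frac{\rho}{2}\frac{\sigma-\tau^2}{\tau}$, one can trade the explicit $(\rho,v)$-derivatives of $\mathcal M_{(\rho,v)}(\tau)$ for $\partial_u\mathcal M$ times the derivatives of $u$, and those in turn relate to $\partial_\tau\mathcal M$ via the spectral-curve relation $\partial u/\partial\rho$, $\partial u/\partial v$, $\partial u/\partial\tau$. Combining these, one isolates an equation of the schematic form $(\text{something})\,dX\,X^{-1} = -\varphi^2 A - \varphi\star A$ evaluated with the substitution $\tau\mapsto\varphi$ — i.e. exactly \eqref{dxxinsig}, the equivalent form of the BM linear system from Proposition~\ref{propositioniff}. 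The normalization $X(0,\rho,v)=\mathbb I$ together with \eqref{derXr0}--\eqref{derXv0} is what pins down $M = M^-(\infty)$ as the correct ``coset'' factor and fixes the constant of integration; Assumption~2 and the $\iota_\sigma$-commutativity of $\natural$ are what let us treat the $X^\natural(-\sigma/\tau,\cdot)$ factor symmetrically and guarantee the factorization identity \eqref{MMX} in the first place. Once we know $X$ (with $\tau=\varphi$, any $\varphi\in\mathcal T$) solves the BM linear system with input $(A,\varphi)$, Theorem~\ref{theoremlaxtoeom} immediately gives $d(\rho\star A)=0$, completing the proof.

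In more detail, the steps in order: (i) record \eqref{MMX} on $O$ and its $\natural$-symmetry; (ii) compute $\partial_\rho$ and $\partial_v$ of both sides at fixed $\tau$, using Assumption~4 to justify that all derivatives are analytic in $\tau$ on $D^+\cup O$ so the identity extends off $\Gamma$; (iii) use that $\mathcal M_{(\rho,v)}(\tau)$ is a function of $u$ alone, and the explicit form \eqref{utau}, to express $\partial_\rho\mathcal M_{(\rho,v)}(\tau)$ and $\partial_v\mathcal M_{(\rho,v)}(\tau)$ in terms of $\partial_\tau\mathcal M_{(\rho,v)}(\tau)$ via the chain rule on the spectral curve — this is the algebraic heart; (iv) substitute $\tau=\varphi(\rho,v)$ with $\varphi\in\mathcal T_\omega$, so the left side is constant and drops out, and use Proposition~\ref{lemmadifferentials} (the formulas for $d\varphi$, $\star d\varphi$) to simplify; (v) read off that $X(\cdot,\rho,v)\big|_{\tau=\varphi}$ satisfies \eqref{dxxinsig}, hence by Proposition~\ref{propositioniff} the BM linear system \eqref{Laxpair}; (vi) invoke Theorem~\ref{theoremlaxtoeom} to conclude $d(\rho\star A)=0$ near $(\rho_0,v_0)$, and since $(\rho_0,v_0)\in S$ was arbitrary, on all of $S$. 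Finally, the verification that the $X^\natural$-factor indeed equals $M^{-1}$ times the appropriate thing — i.e. that the $\widetilde X$ built from $M^-$ really is a genuine solution of the BM system with the conjugate function $\chi=-\sigma/\varphi^\natural$ — follows by applying $\natural$ to the factorization and using uniqueness of the canonical factorization together with $M^\natural = M$ (inherited from $\mathcal M^\natural=\mathcal M$), so that the two ``halves'' of \eqref{MMX} are $\natural$-conjugate to each other.

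The main obstacle I expect is step (iii)--(iv): carefully carrying out the chain-rule bookkeeping on the spectral curve so that the explicit $(\rho,v)$-derivatives of $\mathcal M_{(\rho,v)}(\tau)$ recombine, after the substitution $\tau=\varphi$, into precisely the combination $-\varphi^2 A-\varphi\star A$ with the correct coefficient $\varphi^2+\sigma$. One must be scrupulous about what is held fixed in each partial derivative (the variable $\tau$ versus the section $\tau=\varphi(\rho,v)$), and the cancellation that makes the constant left-hand side disappear relies on the identity \eqref{lemma5} of Proposition~\ref{lemmadifferentials}, which is exactly tailored for this. A secondary subtlety is ensuring all manipulations are legitimate as identities of $\tau$-analytic functions on $D^+\cup O$ rather than merely on $\Gamma$ — this is where Assumption~4's analyticity clauses for $\partial X/\partial\rho$, $\partial X/\partial v$ and their $\natural$-images do the work, allowing analytic continuation of the derived relations from $\Gamma$ to the region where $\varphi(\rho,v)$ actually lands.
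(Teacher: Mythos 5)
There is a genuine gap, and it sits exactly where you wave your hands: the step ``one isolates an equation of the schematic form $(\text{something})\,dX\,X^{-1}=-\varphi^2A-\varphi\star A$.'' After substituting $\tau=\varphi$ into \eqref{MMX} and differentiating, you obtain a \emph{single} matrix identity of the form \eqref{equacioazero}, which couples three unknown one-forms: $dM$, the logarithmic derivative of $X$ at $\tau=\varphi$, and the logarithmic derivative of $X^{\natural}$ at $\tau=-\sigma/\varphi$. Nothing in your steps (iii)--(iv) decouples these. In particular, the chain-rule bookkeeping on the spectral curve that you identify as ``the algebraic heart'' is not the mechanism at all: once $\tau=\varphi$ is substituted the left-hand side is the constant $\mathcal M(\omega)$ and its differential is simply zero, so there are no explicit $(\rho,v)$-derivatives of $\mathcal M$ left to trade for $\partial_\tau\mathcal M$; and if you instead differentiate at fixed $\tau$, you still face one equation in two unknown $X$-type quantities. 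The paper's actual mechanism is to reinterpret the identity, for fixed $(\rho_0,v_0)$, as holding for all $\tau\in\Gamma$ (by varying $\omega$ over $C_{\rho_0,v_0}$), and then to rearrange it as \eqref{poserhp}: a \emph{second} Riemann--Hilbert problem in which all terms built from $X$ are analytic in $D^+$ up to a simple pole at $0$ and all terms built from $X^{\natural}(-\sigma/\tau,\cdot)$ are analytic in $D^-$ up to a simple pole at $\infty$ (using Proposition \ref{lemmainsideoutside} and Assumptions 2 and 4). A generalized Liouville theorem then forces both sides to equal $(A\tau^2+B\tau+C)/\tau$ as in \eqref{rationalrh}; the normalization \eqref{derXr0}--\eqref{derXv0} kills $C$, analyticity at $\infty$ kills $A$, and only then does \eqref{secondrhp} become a closed equation for $X$ alone. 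This separation argument is the central idea of the theorem (it is what the abstract calls the ``second Riemann--Hilbert problem''), and your proposal replaces it with a chain-rule computation that cannot produce it.

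A second, smaller but still essential omission: even after the Liouville step you must identify the surviving constant matrix one-form $B$. The paper does this by evaluating \eqref{equaciomiganalytic} at the two fixed points $\tau=\pm p_F$ of $\iota_\sigma$ and applying the projections $P_\pm=\tfrac12(1\pm\sigma p_F\star)$, which yields $B=-\star dM(\rho,v)$. Without this value, substituting $\tau=\varphi$ back into \eqref{secondrhp} does not reproduce \eqref{dxxinsig}, so Proposition \ref{propositioniff} and Theorem \ref{theoremlaxtoeom} cannot be invoked. Your endpoints (start from $d$ of \eqref{MMX} along $\tau=\varphi$; finish with Proposition \ref{propositioniff} and Theorem \ref{theoremlaxtoeom}) match the paper, and your remarks on the roles of the normalization and of Assumption 4 are in the right spirit, but the two steps above are the substance of the proof and are absent from the proposal.
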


\begin{proof}
Take any point in $S$, which we will denote by $(\rho_0, v_0)$. In the following, $(\rho, v)$ will denote an arbitrary point
in a neighbourhood
of $(\rho_0, v_0)$.
We take the contour $\Gamma$ to have the properties described under {Assumption 1}.
In the factorization \eqref{MMX}, $\tau$ varies along $\Gamma \subset O$, while
$(\rho,v)$ is kept fixed. Substituting $\tau$ by $\varphi$ of the form \eqref{vfi2} with $\omega \in C_{\rho_0,v_0}$,
we 
obtain $\mathcal{M} (\omega)$, which is independent of $(\rho, v)$, and hence differentiating with respect
to $(\rho, v)$ gives zero, i.e. $d ( \mathcal{M} (\omega) ) =0$, where $d$ denotes the differential with respect to $(\rho, v)$.
Then, multiplying by ${(X^{\natural}})^{-1}$ on the left and by $X^{-1}$ on the right, 
we obtain, 
\begin{equation} \label{equacioazero}
\begin{split}
	0 =&\left[ \left. X^{\natural}(-\frac{\sigma}{\tau},\rho,v)\right\vert_{\tau = \varphi} \right]^{-1} d\left[\left. X^{\natural}(-\frac{\sigma}{\tau},\rho,v)\right\vert_{\tau=\varphi} \right] M(\rho,v) \\
	& + dM(\rho,v) + M(\rho,v) \, d\left[\left.X(\tau,\rho,v)\right\vert_{\tau=\varphi}\right] \left.X^{-1}(\tau,\rho,v)\right\vert_{\tau=\varphi}.
\end{split}
\end{equation}
Next, we multiply this system of differential equations by $\varphi + \sigma/\varphi$,
where we recall that $\varphi \neq 0$ and 
$\varphi^2 + \sigma \neq 0$,
\begin{eqnarray}
\label{reltvar}
	0 &=&\left[ \frac{\tau^2 + \sigma}{\tau} 
		 \left.  \left( X^{\natural} \right)^{-1}
	 (-\frac{\sigma}{\tau},\rho,v) \right]\right\vert_{\tau = \varphi} 
	  d\left[\left. X^{\natural} (-\frac{\sigma}{\tau},\rho,v)\right\vert_{\tau=\varphi} \right]M(\rho,v) \\
	&&  \left. + \frac{\tau^2 + \sigma}{\tau} \right\vert_{\tau=\varphi}  \, dM(\rho,v) + \left.  \frac{\tau^2 + \sigma}{\tau}\right\vert_{\tau=\varphi} \, 
	M(\rho,v) \, d\left( \left. X(\tau,\rho,v)\right\vert_{\tau=\varphi}\right) \left. X^{-1}(\tau,\rho,v)\right\vert_{\tau=\varphi}. \nonumber
\end{eqnarray}
We evaluate
\begin{equation} \label{firstdifferentiatedterm}
\begin{split}
	d\left(\left. X^{\natural}(-\frac{\sigma}{\tau},\rho,v)\right\vert_{\tau=\varphi} \right)=& \left.\frac{\partial X^{\natural} (\tau',\rho,v)}{\partial \tau'}\right\vert_{\tau'=-\frac{\sigma}{\varphi}}\frac{\sigma}{\varphi^2}\left( \frac{\partial \varphi}{\partial \rho}d\rho + \frac{\partial\varphi}{\partial v}dv \right)\\
	& + \left. \frac{\partial X^{\natural}(\tau,\rho,v)}{\partial \rho}\right\vert_{\tau = -\frac{\sigma}{\varphi}}d\rho + \left. \frac{\partial X^{\natural}(\tau,\rho,v)}{\partial v}\right\vert_{\tau = -\frac{\sigma}{\varphi}}dv \;.
	\end{split}
\end{equation}
Using
\begin{equation}
d \varphi = \frac{\varphi}{\rho} \left( \frac{\sigma - \varphi^2}{\sigma + \varphi^2} \,  d\rho + \sigma \,  \frac{ 2 \varphi}{\sigma + \varphi^2} \, dv \right) \;,
\end{equation}
\eqref{firstdifferentiatedterm} becomes
\begin{equation}
\begin{split}
	d\left( \left. X^{\natural}(-\frac{\sigma}{\tau},\rho,v)\right\vert_{\tau=\varphi} \right) = & \left[\left.\frac{\partial X^{\natural} (\tau',\rho,v)}{\partial \tau'}\right\vert_{\tau'=-\frac{\sigma}{\tau}}\frac{\sigma}{\tau^2\rho}\left( \tau\frac{\sigma-\tau^2}{\sigma+\tau^2}d\rho + 2\sigma\frac{\tau^2}{\sigma+\tau^2}dv \right)\right.\\
	& \left.\left. + \frac{\partial X^{\natural} (-\frac{\sigma}{\tau},\rho,v)}{\partial \rho}d\rho +  \frac{\partial X^{\natural} (-\frac{\sigma}{\tau},\rho,v)}{\partial v}dv\right]\right\vert_{\tau=\varphi}.
\end{split}
\end{equation}
Next, we evaluate
\begin{eqnarray}
	d\left( \left.X(\tau,\rho,v)\right\vert_{\tau=\varphi} \right) &=& \left.\left[\left.\frac{\partial X(\tau,\rho,v)}{\partial \tau}\right.\frac{1}{\rho}\left( \tau\frac{\sigma-\tau^2}{\sigma+\tau^2}d\rho + 2\sigma\frac{\tau^2}{\sigma+\tau^2}dv \right)  \right. \right. \nonumber\\
	&& \left. \left.  \qquad 
	+ \frac{\partial X({\tau},\rho,v)}{\partial \rho}d\rho +  \frac{\partial X({\tau},\rho,v)}{\partial v}dv\right]\right\vert_{\tau=\varphi}.
\end{eqnarray}
Inserting these expressions into \eqref{reltvar} results in 
\begin{eqnarray} \label{expanded}
	 0 &= &  \left[ \left( X^{\natural} \right)^{-1}   \left(-\frac{\sigma}{\tau},\rho,v \right)
	  \left.\frac{\partial X^{\natural} (\tau',\rho,v)}{\partial \tau'}  \right\vert_{\tau'=-\frac{\sigma}{\tau}}
	 \frac{\sigma}{\tau^2\rho}[(\sigma-\tau^2)d\rho+2\sigma \tau dv] \, M(\rho,v)\right.  \nonumber\\
	 && +\frac{\tau^2+\sigma}{\tau} \left(X^{\natural} \right)^{-1}\left(-\frac{\sigma}{\tau},\rho,v \right)\left\{\frac{\partial X^{\natural}}{\partial \rho}\left(-\frac{\sigma}{\tau},\rho,v \right)d\rho + \frac{\partial X^{\natural}}{\partial v}\left(-\frac{\sigma}{\tau},\rho,v \right) dv \right\} M(\rho,v)  \nonumber\\
	&& +\frac{\tau^2+\sigma}{\tau} dM(\rho,v) + M(\rho,v) \left( \frac{\partial X}{d\tau}(\tau,\rho,v)\frac{1}{\rho}\left[(\sigma-\tau^2)d\rho + 2\sigma \tau dv\right]\right. \nonumber\\
	&&  \left.\left.\left.+ \frac{\tau^2+\sigma}{\tau}\left(\frac{\partial X}{\partial \rho}(\tau,\rho,v)d\rho + \frac{\partial X}{\partial v}(\tau,\rho,v)dv \right) \right)X^{-1}(\tau,\rho,v)\right]\right\vert_{\tau = \varphi} \;.
	\end{eqnarray}
Now evaluate \eqref{expanded} at $(\rho_0, v_0)$.
This holds for any function $\varphi$ satisfying the assumptions.
Recall that any such function is 
labelled by $\omega  \in C_{\rho_0,v_0} $. Keeping $(\rho_0, v_0)$ fixed, and varying over all $\omega  \in C_{\rho_0,v_0}$ results in scanning over all values $\tau \in O \backslash FP_{\sigma}$.
In this way, we can reinterpret \eqref{expanded} as an equality that holds for all $\tau \in \Gamma  \backslash FP_{\sigma}$ at fixed  $(\rho_0, v_0)$.
For ease of notation, we now denote  $(\rho_0, v_0)$ simply by  $(\rho, v)$, taking into account that  $(\rho_0, v_0)$ was arbitrarily fixed. Hence
\begin{eqnarray} \label{expanded2}
	 0 &= &  \left( X^{\natural} \right)^{-1}   \left(-\frac{\sigma}{\tau},\rho,v \right)
	  \left.\frac{\partial X^{\natural} (\tau',\rho,v)}{\partial \tau'}  \right\vert_{\tau'=-\frac{\sigma}{\tau}}
		 \frac{\sigma}{\tau^2\rho}[(\sigma-\tau^2)d\rho+2\sigma \tau dv] \, M(\rho,v)  \nonumber\\
	 && +\frac{\tau^2+\sigma}{\tau} \left(X^{\natural} \right)^{-1}\left(-\frac{\sigma}{\tau},\rho,v \right)\left\{\frac{\partial X^{\natural}}{\partial \rho}\left(-\frac{\sigma}{\tau},\rho,v \right)d\rho + \frac{\partial X^{\natural}}{\partial v}\left(-\frac{\sigma}{\tau},\rho,v \right) dv \right\} M(\rho,v)  \nonumber\\
	&& +\frac{\tau^2+\sigma}{\tau} dM(\rho,v) + M(\rho,v) \left( \frac{\partial X}{d\tau}(\tau,\rho,v)\frac{1}{\rho}\left[(\sigma-\tau^2)d\rho + 2\sigma \tau dv\right]\right. \nonumber\\
	&&  \left.+ \frac{\tau^2+\sigma}{\tau}\left(\frac{\partial X}{\partial \rho}(\tau,\rho,v)d\rho + \frac{\partial X}{\partial v}(\tau,\rho,v)dv \right) \right)X^{-1}(\tau,\rho,v) \;.
	\end{eqnarray}
By continuity, this equality can also be extended to the two fixed points of $\Gamma$.
Hence, \eqref{expanded2} holds $\forall \tau \in \Gamma$.

We now rewrite \eqref{expanded2} as
\begin{eqnarray} \label{poserhp}
	&&  - \left(X^{\natural} \right)^{-1} \left(-\frac{\sigma}{\tau},\rho,v \right)
	 \left.\frac{\partial X^{\natural} (\tau',\rho,v)}{\partial \tau'}  \right\vert_{\tau'=-\frac{\sigma}{\tau}}
		\frac{\sigma}{\tau^2\rho}[(\sigma-\tau^2)d\rho+2\sigma \tau dv] \, M(\rho,v)  \nonumber\\
	 && -\frac{\tau^2+\sigma}{\tau} \left(X^{\natural} \right)^{-1}\left(-\frac{\sigma}{\tau},\rho,v \right)\left[\frac{\partial X^{\natural}}{\partial \rho}\left(-\frac{\sigma}{\tau},\rho,v \right)d\rho + \frac{\partial X^{\natural}}{\partial v}\left(-\frac{\sigma}{\tau},\rho,v \right) dv \right] M(\rho,v)  \nonumber\\
	&&  -\frac{\sigma}{\tau} dM(\rho,v) \nonumber\\
&& =	 \tau \, dM(\rho,v) + M(\rho,v)  \frac{\partial X}{d\tau}(\tau,\rho,v)\frac{1}{\rho}\left[(\sigma-\tau^2)d\rho + 2\sigma \tau dv\right]X^{-1}(\tau,\rho,v) \nonumber\\
	&&  + \frac{\tau^2+\sigma}{\tau}M(\rho,v)\left(\frac{\partial X}{\partial \rho}(\tau,\rho,v)d\rho + \frac{\partial X}{\partial v}(\tau,\rho,v)dv \right)X^{-1}(\tau,\rho,v) \;\;\;, \;\;\; \tau \in \Gamma \;,
\end{eqnarray}
where we used Assumptions 2 and 4 to separate
terms in such a manner that 
the left hand side of the equality is analytic in $D^-$ (by Proposition \ref{lemmainsideoutside}, if $X(\tau,\cdot,\cdot)$ is analytic in $D^+$ 
then $X(-\frac{\sigma}{\tau},\cdot,\cdot)$ is analytic in $D^-$) except for a pole of order one at $\tau = \infty$, while the right hand side 
is analytic in $D^+$ except for a pole of order one at $\tau =0$. Since \eqref{poserhp} holds for $\tau \in \Gamma$, it constitutes
a matricial Riemann-Hilbert problem that is associated to the
canonical Wiener-Hopf factorization \eqref{MMX}. 

This associated 
Riemann-Hilbert problem is solved by means of a 
generalization of Liouville's theorem. Namely, consider a function $f$ that: equals $f_+$ in  $D^+$ and is 
analytic in $D^+$ except for a simple 
pole at $\tau =0$; that equals $f_-$ in  $D^-$ and is 
analytic in $D^-$ except for a simple pole 
at $\tau = \infty$;  and that satisfies $f_+ = f_- $ for all $\tau$ on $\Gamma$. Then, $f$
 can only be of the form
\begin{equation} \label{rationalrh}
	f(\tau) = \frac{A\tau^2 + B\tau + C}{\tau} \;,
\end{equation}
where $A,B,C$ are constants. 

In our case, $A,B,C$ are matrix one-forms that are independent of $\tau$. We then infer two equations from  \eqref{poserhp}. The first one reads
\begin{eqnarray} \label{firstrhp}
	&&  - \left(X^{\natural} \right)^{-1} \left(-\frac{\sigma}{\tau},\rho,v \right)
	 \left.\frac{\partial X^{\natural} (\tau',\rho,v)}{\partial \tau'}  \right\vert_{\tau'=-\frac{\sigma}{\tau}}
		\frac{\sigma}{\tau^2\rho}[(\sigma-\tau^2)d\rho+2\sigma \tau dv] \, M(\rho,v)  \nonumber\\
	 && -\frac{\tau^2+\sigma}{\tau} \left(X^{\natural} \right)^{-1}\left(-\frac{\sigma}{\tau},\rho,v \right)\left[\frac{\partial X^{\natural}}{\partial \rho}\left(-\frac{\sigma}{\tau},\rho,v \right)d\rho + \frac{\partial X^{\natural}}{\partial v}\left(-\frac{\sigma}{\tau},\rho,v \right) dv \right] M(\rho,v)  \nonumber\\
	&&  -\frac{\sigma}{\tau} dM(\rho,v) = \frac{A\tau^2 + B\tau + C}{\tau} \;,
\end{eqnarray}
where the left hand side is analytic in $D^-$ except for a simple pole at $\tau = \infty$.
The second equation reads
\begin{eqnarray} 
\label{secondrhp}
&& \frac{A\tau^2 + B\tau + C}{\tau} = 
 \tau \, dM(\rho,v) + M(\rho,v)  \frac{\partial X}{d\tau}(\tau,\rho,v)\frac{1}{\rho}\left[(\sigma-\tau^2)d\rho + 2\sigma \tau dv\right]X^{-1}(\tau,\rho,v) \nonumber\\
	&&  + \frac{\tau^2+\sigma}{\tau}M(\rho,v)\left(\frac{\partial X}{\partial \rho}(\tau,\rho,v)d\rho + \frac{\partial X}{\partial v}(\tau,\rho,v)dv \right)X^{-1}(\tau,\rho,v) \;,
\end{eqnarray}
where the right hand side is analytic in $D^+$ except for a simple pole at $\tau = 0$.
First we focus on \eqref{secondrhp} and rewrite it as 
\begin{eqnarray} 
 \label{equaciomiganalytic}
&& \tau \, dM(\rho,v) + M(\rho,v)  \frac{\partial X}{d\tau}(\tau,\rho,v)\frac{1}{\rho}\left[(\sigma-\tau^2)d\rho + 2\sigma \tau dv\right]X^{-1}(\tau,\rho,v) 
  \\
	&& =\frac{1}{\tau} \left[   - (\tau^2+\sigma) \,  M(\rho,v)\left(\frac{\partial X}{\partial \rho}(\tau,\rho,v)d\rho + \frac{\partial X}{\partial v}(\tau,\rho,v)dv \right)X^{-1}(\tau,\rho,v)  +  A\tau^2 + B\tau + C \right] \;. \nonumber
\end{eqnarray}
The left hand side of this equation is analytic in  $D^+$, and hence the right hand side has to be analytic at $\tau =0$.
Thus, the expression in the bracket on the right hand side must vanish for $\tau =0$.
Now recall \eqref{derXr0} and \eqref{derXv0}.
Therefore we conclude 
\begin{equation}
	C = 0 \;.
\end{equation}
Applying a similar reasoning to \eqref{firstrhp} and demanding analyticity at $\tau = \infty$, we infer
\begin{equation}
	A=0 \;.
\end{equation}
Hence, we conclude that both sides \eqref{poserhp} are equal to a $\tau$-independent matrix one-form $B$.
Since this is valid for any $(\rho,v) \in S$, we denote $B\equiv G(\rho,v)$.

Now recall that \eqref{equaciomiganalytic} holds $\forall \tau \in \Gamma$, and in particular at the fixed points $\tau^2 = - \sigma$ of the involution
\eqref{invtt}. Let us denote the two fixed points by
$\{p_F,\tilde{p}_F = -p_F\}$. We introduce two projections
\begin{equation}
P_{\pm} = \frac12 \left( 1 \pm \sigma \, p_F \, \star \right) \;,
\end{equation}
which satisfy $P_+ + P_- = {\rm id}, \, P_+ - P_- = \sigma \, p_F \, \star, \,
P^2_{\pm} = P_{\pm}, \; P_+ P_- =0$.  Then, using \eqref{hodgedualcoordinates}, we infer $d\rho +  p_F dv = 2 P_- d \rho$
as well as $\star P_{\pm} = \mp p_F \, P_{\pm}$.

Using these properties, we evaluate
\eqref{equaciomiganalytic} at $\tau=p_F$,
\begin{equation} \label{equationforG}
	p_F \, dM(\rho,v) + \frac{4 \sigma }{\rho} \, P_- d \rho  \left( M(\rho,v)  \frac{\partial X}{d\tau}(p_F,\rho,v) \, X^{-1}(p_F,\rho,v) \right) 
	 = G(\rho,v).
\end{equation}
Applying $P_+$ to this equation, and noting that the projections only act on differential forms, results in
\be
P_+ G(\rho,v) = p_F \, P_+ \, dM(\rho,v) \;.
\label{p+G}
\ee
On the other hand, evaluating \eqref{equaciomiganalytic}  at $\tau= \tilde{p}_F = -p_F$ gives
\begin{equation} \label{equationforG}
	- p_F \, dM(\rho,v) + \frac{4 \sigma }{\rho} \, P_+ d \rho \left( M(\rho,v)  \frac{\partial X}{d\tau}(-p_F,\rho,v) \, X^{-1}(-p_F,\rho,v) \right)
	 = G(\rho,v).
\end{equation}
Applying $P_-$ to this equation results in
\be
P_- G(\rho,v) = - p_F \, P_- \, dM(\rho,v) \;.
\label{p-G}
\ee
Then, adding up \eqref{p+G} and \eqref{p-G} gives
\be
 G(\rho,v) = - \star  dM(\rho,v) \;,
 \ee
while subtracting \eqref{p+G} from \eqref{p-G}  does not yield new information.

Now we return to \eqref{secondrhp} (with $A = C = 0$) and $B =  - \star  dM(\rho,v)$. Let again $(\rho_0, v_0)$ denote a point in $S$,
and let $(\rho, v)$ denote any point in a neighbourhood of $(\rho_0, v_0)$, such that 
for $\varphi \in \mathcal{T}_{\omega}$  with $\omega \in C_{\rho_0, v_0}$ we have
$\varphi (\rho, v) \subset O$.
Then, substituting  $\tau$ by $\varphi$ 
we obtain
\begin{equation}
 -\star dM(\rho,v) = \varphi \,dM(\rho,v) + 
 	\left( \frac{\sigma}{\varphi} + \varphi \right)M(\rho,v) d\left[ X(\tau,\rho,v)\vert_{\tau = \varphi} \right]X^{-1}(\tau,\rho,v)\vert_{\tau=\varphi} \;.
	\end{equation}
Multiplying by $M^{-1}$ on the left and by $\varphi$, this equation is equivalent to the 
BM linear system \eqref{Laxpair} by virtue of {Proposition \ref{propositioniff}}. 
Then, it follows by
Theorem \ref{theoremlaxtoeom} that $M(\rho,v)$ is a solution to the field equations \eqref{eq_motion_A} on $S$.
 \end{proof}

\noindent
{\it Remark:} Note that from the proof of Theorem \ref{theoremfactorizationsolution} we moreover conclude that
for $(\rho,v)$ in a neighbourhood of any $(\rho_0, v_0) \in S$, 
$M(\rho, v)$ and $X(\tau,\rho,v)|_{\tau = \varphi}$, with $\varphi $ of the form \eqref{vfi2} and with $\omega \in C_{\rho_0,v_0}$,
satisfy the
BM linear system \eqref{Laxpair}.

%%%%%%%%%%%%%%%%%%%%%%
\section{Meromorphic factorizations: a case study \label{sec:kasn}}
%%%%%%%%%%%%%%%%%%%%%%%

So far, we discussed canonical Wiener-Hopf factorizations of monodromy matrices. We focussed on canonical factorizations,
since for these we have Theorem \ref{theoremfactorizationsolution}, which guarantees that this type of factorization
yields a solution to the field equations.  We do not have an analogous theorem for other types of factorizations, such as meromorphic
factorizations \cite{CLS}. This does not mean that those other types of factorization may not also yield solutions to the field equations.
In this section, we discuss an example for which this happens. This is the example of a Kasner solution, which is a cosmological
solution to Einstein's field equations in four dimensions. It belongs to the class $\sigma = -1$. We proceed as follows.
We explicitly solve the BM linear system for a particular Kasner solution, and obtain the explicit expression for $X(\rho,v)$, c.f. \eqref{Xc}.
We then use $X(\rho,v)$ to construct
matrices $\mathcal{M}$ associated to this Kasner solution which are independent of the Weyl coordinates $(\rho, v)$.
We then pick a particular matrix $\mathcal{M}$, which has 
a factorization of the form \eqref{decompMkas},
with $M(\rho)$ corresponding to the Kasner solution. However, this factorization is 
not a canonical Wiener-Hopf factorization, since 
$X$ and $X^{\natural}$ are not analytic in the interior and exterior region, respectively, of any contour $\Gamma$ satisfying Assumption 1,
although $X$ is analytic in a
neighbourhood of $\tau =0$.
However, the matrix $\mathcal{M}$ we picked 
also admits a canonical factorization. The resulting factor $M(\rho, v)$ 
yields a different solution of the field equations.  This new solution and the Kasner solution turn out to be related by the transformation
given in Proposition \ref{lemmatangent}. Therefore, 
in certain circumstances, by performing a canonical factorization of $\mathcal{M}$ and applying an appropriate transformation to the 
associated solution, one obtains a new solution to the field equations that results from a meromorphic
factorization of $\mathcal{M}$.

We consider the field equations of General Relativity in vacuum.  They admit a cosmological Kasner solution given by
\be
\label{eq_Kasner}
 ds^2_4 = - dt^2 + \sum_{i=1}^{3} t^{2 p_i} {dx_i}^2\,\,\,,\,\,\, t> 0 \;,
\ee
where the exponents $p_i$ obey the following conditions:
\be \label{eq_cond_Kasner}
\sum_{i=1}^{3} p_i = 1 \,, \qquad \sum_{i=1}^3 {p_i}^2 =1 \,.
\ee
In the following, we take all the $p_i$ to be non-vanishing.
Without loss of generality, we will assume that $0 < p_1 < 1$. 

We now bring \eqref{eq_Kasner} into Weyl-Lewis-Papapetrou form \eqref{4dWLP} with $\sigma = -1$, 
\bea
ds^2_4 = \Delta \, {dx_3}^2 +  \Delta^{-1} \left( e^\psi (  -d \rho^2 + dv^2 ) +  \rho^2 {dx_2}^2 \right) \;,
\eea
where (we take $\rho > 0$)
\be
\begin{cases}
\rho = t^{1-p_1} \,, \\
\Delta =  \rho^{\frac{2 p_3}{1-p_1}} \,, \\
v = (1-p_1)x_1 \,,\\
e^\psi = \frac{\rho ^{\frac{2 (p_1+p_3)}{1-p_1}}}{(1-p_1)^2} \,.
\end{cases}
\ee 
Note that for $t > 0, \rho > 0$, the function $\rho (t) = t^{1-p_1}$ is one-to-one. 

The associated matrix $M \in G/H= SL(2, \mathbb{R})/SO(1,1)$ is diagonal and takes the form
\be
M(\rho)= \begin{bmatrix}
	\Delta(\rho)	& 0 \\
	0 & \Delta^{-1} (\rho)
\end{bmatrix}\,,
\label{Mrhokasn}
\ee
and the resulting matrix one-form $A = M^{-1} d M$ is given by
\be
A=   \begin{bmatrix}
	1& 0 \\
	0 & -1 
\end{bmatrix} \frac{ \partial_{\rho} \Delta }{\Delta}  \, d \rho\,.
\ee
We now specialize to the case
\bea
p_1 = p_3 = \tfrac23 \;\;\;,\;\;\; p_2 = - \tfrac13 \;,
\eea
in which case $e^{\psi} = 9 \rho^8$ and
\be
M( \rho)= \begin{bmatrix}
\rho^4  \; \; & \;\;	0\\
		0 \;  \; & \; \;	{\rho}^{-4} 
\end{bmatrix} \;.
\label{kasM}
\ee

We pick any element $\varphi \in \mathcal{T}$, c.f. \eqref{vfi2}, and we explicitly solve the associated BM linear systen
\eqref{Laxpair} for $X$.  We obtain 
\be
X (\rho, v)
= \begin{bmatrix}
	\left( \frac{\varphi(\rho,v)}{\rho} \right)^2 \, c_1 \; \; & \;\;	\left( \frac{\varphi(\rho,v)}{\rho} \right)^2 \, c_2 \\
		\left( \frac{\varphi(\rho,v)}{\rho} \right)^{- 2} \, c_3  \;  \; & \; \;	\left( \frac{\varphi(\rho,v)}{\rho} \right)^{- 2} \, c_4
\end{bmatrix}\,,
\label{Xc}
\ee
as can be verified by substituting this expression into \eqref{Laxpair}.
Here, $c_1,c_2,c_3,c_4 \in \mathbb{C}$ are arbitrary integration constants.

For the theory at hand, the involution $\natural$ acts as transposition on matrices. We therefore obtain for the solution
$\widetilde{X}$ of the linear system \eqref{Laxpair} based on $\widetilde{\varphi} = 1/\varphi$, 
\be
{\widetilde X}^\natural (\rho, v) 
= \begin{bmatrix}
	\left( \frac{1}{\rho\, \varphi(\rho,v)} \right)^2 \,   {\tilde c}_1 \;\; &\;\;  -\left( \rho\, \varphi(\rho,v) \right)^{2} \,  {\tilde c}_3 \\
-	\left( \frac{1}{\rho\, \varphi(\rho,v)} \right)^2 \,  {\tilde c}_2 \;\; &\;\;  \left( \rho\, \varphi(\rho,v) \right)^{2} \,  {\tilde c}_4
\end{bmatrix}\,,
\ee
where the ${\tilde c}_i  \in \mathbb{C}$ denote an {\it independent} set of integration constants, unrelated to the $c_i$.
Next we compute the combination
\be
\mathcal{M} = {\widetilde X}^{\natural}
\, M \, X
=
 \begin{bmatrix}
{\tilde c}_1 {c_1}  - {\tilde c}_3 {c_3} \;\; & \;\; {\tilde c}_1 c_2 - {\tilde c}_3 c_4 \\
c_3 {\tilde c}_4 - c_1 {\tilde c}_2 \;\; & \;\; {\tilde c}_4 {c_4} - {\tilde c}_2 {c_2}
\end{bmatrix} \,,
\label{monkasner}
\ee
which is independent of the Weyl coordinates $(\rho, v)$. 
Thus, 
we may obtain an arbitrary constant
matrix $\mathcal{M}$, in particular the identity matrix, and the information about $M(\rho, v)$
is lost. However, as mentioned before, by choosing an appropriate 
matrix $\mathcal{M}$ that is independent of $(\rho, v)$,
we can recover a solution $M(\rho, v)$ either by Wiener-Hopf or by meromorphic factorization. To illustrate this, we now describe how
this arises in a specific example.

Let us now choose these constants as follows:  $c_2 = {\tilde c}_2 = c_3 = {\tilde c}_3  =0$ and $c_1 = {\tilde c}_1 =  (2 \omega)^2, \, c_4 = {\tilde c}_4 = (2\omega)^{-2}$, with
$\omega \in \mathbb{C}$. Then, \eqref{monkasner} becomes
\be
\mathcal{M} (\omega) = \begin{bmatrix}
	 (2 \omega)^4
		\; \; & \;\;	0 \\
		0\;  \; & \; \;	 (2 \omega)^{-4} 
\end{bmatrix} = \left[ X^{\natural} (\tfrac{1}{\tau}, \rho, v) \, M(\rho) \, X(\tau, \rho, v) \right]_{\vert \tau = \varphi_{\omega}(\rho,v) } \;,
\label{M2kmon}
\ee
where,  for all $(\rho, v)$ taken in a neighbourhood of  $(\rho_0, v_0)$, 
$\varphi_{\omega} \in \mathcal{T}$ with $\omega \neq v_0 \pm \rho_0$, c.f. \eqref{vfi2}. The expression for $X(\tau, \rho, v)$,
\be
X(\tau, \rho, v)= \begin{bmatrix}
	( 2 v \, \frac{\tau}{\rho} + 1 + \tau^2 )^2  \; \; & \;\;	0  \\
		0   \;  \; & \; \;	 ( 2 v \, \frac{\tau}{\rho} + 1 + \tau^2 )^{-2}
		\end{bmatrix}\,,
\ee
is obtained by making use of the algebraic relation \eqref{spectralcurve} in \eqref{Xc}. Also using it on the left hand side of 
\eqref{M2kmon} yields
\be
\mathcal{M} \left(v + \frac{\rho}{2} \frac{(1 + \tau^2)}{\tau} \right) _{\vert \tau = \varphi_{\omega}(\rho,v)}
= \left[ X^{\natural} (\tfrac{1}{\tau}, \rho, v) \, M(\rho) \, X(\tau, \rho, v) \right]_{\vert \tau = \varphi_{\omega}(\rho,v) } \;.
\label{decompMkas}
\ee
Now, when viewed as a factorization of $\mathcal{M} \left(v + \frac{\rho}{2} \frac{(1 + \tau^2)}{\tau} \right) $ with respect to a contour $\Gamma$,
\eqref{decompMkas} describes a meromorphic factorization: although
$X$ is  normalized at $\tau =0$,  i.e. $X(\tau =0, \rho, v) = \mathbbm{1}$, it has 
a double pole in the 
interior of the curve $\Gamma$, located at one of the zeroes of $2 v \, \frac{\tau}{\rho} + 1 + \tau^2 $. The two zeroes of this quadratic polynomial in $\tau$
are at $\tau = ( -v \pm \sqrt{v^2 - \rho^2}) / \rho$. Let us denote them by $\tau_1 =  ( -v + \sqrt{v^2 - \rho^2}) / \rho$ and
 $\widetilde{\tau}_1 =  ( -v - \sqrt{v^2 - \rho^2}) / \rho$.
They satisfy $\tau_1 \widetilde{\tau}_1 =1$. Since the curve $\Gamma$ is $\iota_\sigma$-invariant, then, by Proposition \ref{lemmainsideoutside},
if $\tau_1$ is located in the interior of the curve $\Gamma$, $\widetilde{\tau}_1$ is located in the exterior region, and vice-versa (recall that 
$\tau_1$ and $\widetilde{\tau}_1$ cannot
lie on $\Gamma$).
Hence, we have obtained the Kasner solution \eqref{Mrhokasn} from a meromorphic factorization
of $\mathcal{M} \left(v + \frac{\rho}{2} \frac{(1 + \tau^2)}{\tau} \right) $.

Now let us discuss the canonical factorization of the same matrix $\mathcal{M} \left(v + \frac{\rho}{2} \frac{(1 + \tau^2)}{\tau} \right) $
with respect to $\Gamma$. If 
$\widetilde{\tau}_1$ is located in the exterior region of $\Gamma$,
the resulting factors in  \eqref{MMX} are
\be
\label{McXc}
M_c (\rho, v) = \begin{bmatrix}
	 \left(\rho\, \widetilde{\tau}_1 (\rho,v) \right)^4  
		\; \; & \;\;	0 \\
		0\;  \; & \; \;	 \left(\rho \,\widetilde{\tau}_1 (\rho,v)  \right)^{-4}  
\end{bmatrix}
\;,\;
X_c(\tau, \rho, v)= \begin{bmatrix}
	\frac{(\tau- \widetilde{\tau}_1)^4}{\widetilde{\tau}_1^4} \; \; & \;\;	0  \\
		0   \;  \; & \; \;	\frac{\widetilde{\tau}_1^4}{(\tau - \widetilde{\tau}_1)^4} 
		\end{bmatrix} \;,
				\ee
and analogously if $\tau_1$ is in the exterior region of $\Gamma$.
The subscript $c$ refers to the canonical factorization.  The matrix $M_c (\rho, v)$ in  \eqref{McXc} describes a solution
to the field equations \eqref{eq_motion_A} that is different from the Kasner solution $M (\rho)$ described by \eqref{Mrhokasn}.
Both matrices are related by
\be
\label{McMK}
M (\rho) = M_c(\rho,v)  \begin{bmatrix}
	  \widetilde{\tau}_1^{-4} (\rho,v) 
		\; \; & \;\;	0 \\
		0\;  \; & \; \;	 \widetilde{\tau}_1^4 (\rho,v) 
\end{bmatrix} \;.
\ee
This is a particular instance of the following result:
we can construct new solutions to the field equations
by multiplying  $M_c (\rho,v)$ by
\be
 \begin{bmatrix}
	K \,  \rho^{\alpha} \varphi^{\beta}  
		\; \; & \;\;	0 \\
		0\;  \; & \; \;	K^{-1} \,   \rho^{-\alpha} \varphi^{-\beta}
\end{bmatrix} \;\;\;,\;\;\; \alpha \in \mathbb{R}  \;,\; \beta \in \mathbb{Z} \;,\; K \in \mathbb{R}\backslash \{0\} \;,
\label{MMcinv}
\ee
with $\varphi \in \mathcal{T}$.
The resulting matrix $M$ as well as \eqref{MMcinv} satisfy the field equations \eqref{eq_motion_A} by virtue of Proposition \ref{lemmatangent}.
The case \eqref{McMK} corresponds to taking $\alpha = 0, \beta = 4, K =1
$ and $\varphi (\rho, v) = \widetilde{\tau}_1 (\rho, v)$. 
Note that neither the canonical factorization nor the solution given by \eqref{MMcinv} are
valid
on the lines $v = \pm \rho$, for which the points $\tau_{1}, \widetilde{\tau}_1$ coincide with the fixed points $\pm 1$. However, upon multiplication by a
matrix as in  \eqref{McMK}, we obtain a solution that is valid for all $(\rho, v)$ on the Weyl upper half-plane.

More generally, given matrix $M_1$ and $M_2$ that solve the field equations \eqref{eq_motion_A}, the product $M_1 M_2$ will also 
solve the field equations \eqref{eq_motion_A} provided that
\be
d \left( \rho \, M_2^{-1} M_1^{-1}  (\star d M_1) M_2 \right) = 0 \;,
\ee
This is the case when both $M_1$ and $M_2$ are diagonal matrices, or for instance, when  $M_2^{-1} M_1^{-1}  (d M_1) M_2 = M_1^{-1}  d M_1$.

%%%%%%%%%%%%%%%%%%
%%%%%%%%%%%%%%%%%%
\section{The Schwarzschild monodromy matrix \label{sec:schw}}
%%%%%%%%%%%%%%%%%%
%%%%%%%%%%%%%%%%%%

In this section, we discuss the family of solutions to 
the field equations of General Relativity in vacuum that results from
the canonical factorization of the monodromy matrix 
\eqref{monschwarzeps} with $\epsilon = 1$, which we call
the Schwarz\-schild monodromy matrix,
\be
\mathcal{M} (u) = \begin{bmatrix}
	\sigma \, \frac{u -m}{u +m} & 0 \\
	0 & \sigma \, \frac{u+m}{u-m}
\end{bmatrix} \;\;\;,\;\;\; m \in \mathbb{R}^+ \;,
\label{monschwarz}
\ee
for both cases $\sigma = \pm 1$. The involution $\natural$ acts as transposition on matrices. 
As we shall see, the case $\sigma = -1$ is more intricate than the case $\sigma = 1$.

Substituting $u$ by the expression on the right hand side of the 
spectral curve relation \eqref{utau}, the resulting matrix  is
\be
\mathcal{M}_{(\rho,v)} (\tau)
= \begin{bmatrix}
	 \sigma \,  \frac{\left(\tau - \tau_1\right) \left(\tau + \sigma/\tau_1 \right) }{\left(\tau - \tau_2\right) \left(\tau + \sigma/\tau_2 \right)} 
 & 0 \\
	0 & 
	 \sigma\,   \frac{\left(\tau - \tau_2\right) \left(\tau + \sigma/\tau_2 \right) }{\left(\tau - \tau_1\right) \left(\tau + \sigma/\tau_1 \right)} 
	 \end{bmatrix} \;\;\;,\;\;\; 
\ee
where
\be
\tau_1(\rho,v) = - \sigma \,  \frac{m -v + \sqrt{\left(m-v\right)^2 + \sigma \,\rho^2}}{\rho} \,, 
\quad \tau_2(\rho,v)=  - \sigma \, \frac{ - (m+v)+ \sqrt{\left(m+v\right)^2 +\sigma  \rho^2} }{\rho}\,,
\label{tt12}
\ee
i.e. $\tau_1 = \varphi_{m} (\rho, v)$  and $\tau_2 =  \varphi_{-m} (\rho, v)$  if we take
\be
\varphi_{\alpha} (\rho, v) = - \sigma \, \frac{(\alpha-v) + \sqrt{(\alpha-v)^2+\sigma\rho^2}}{\rho} \;\;\;,\;\;\; \alpha \in \mathbb{C} \;.
\label{fial}
\ee

Now we perform the canonical factorization of $\mathcal{M}_{(\rho,v)} (\tau)$ with respect to a closed contour $\Gamma$ in the $\tau$-plane,
satisfying the 
following requirements:
\begin{enumerate}
\item $\Gamma$ passes through the fixed points of the transformation $\tau \mapsto  - \frac{\sigma}{\tau}$;
\item $\Gamma$ encircles $\tau=0$;
\item if $\tau \in \Gamma$ then $ - \frac{\sigma}{\tau} \in \Gamma$;
\item $\mathcal{M}_{(\rho,v)} (\tau)$ is analytic in an open set containing $\Gamma$.
\end{enumerate}
For any $(\rho,v)$ in the Weyl upper half-plane such that $ \tau_1(\rho,v), \tau_2(\rho,v) \notin FP_{\sigma}$ (where we recall that $FP_{\sigma}$ denotes the set of fixed points of
the involution \eqref{invtt}), the contour $\Gamma$ can be chosen in such a way as to bypass $\tau_1$ and $\tau_2$. There are four possible classes
of contours from which $\Gamma$ can be chosen:
\begin{itemize}
\item[$(i)$] $\tau_1$ and $\tau_2$ are inside the contour $\Gamma$;
\item[$(ii)$] $\tau_1$ is outside and $\tau_2$ is inside of $\Gamma$;
\item[$(iii)$] both $\tau_1$ and $\tau_2$ are outside of $\Gamma$;
\item[$(iv)$] $\tau_1$ is inside and $\tau_2$ is outside of $\Gamma$.
\end{itemize}
Factorizing with respect to $\Gamma$ we obtain, for each of these cases,
\be
\frac{\left(\tau - \tau_1\right) \left(\tau + \sigma/\tau_1 \right) }{\left(\tau - \tau_2\right) \left(\tau + \sigma/\tau_2 \right)}  = m_- (\tau) \, m_+ (\tau) 
\ee
and
\be
M (\rho,v)
= \begin{bmatrix}
	 \sigma \, m_- (\infty) \; & \;  0 \\
	0 \;  & \;
	\sigma \, m_-^{-1} (\infty) 
	 \end{bmatrix}  =  
	 \begin{bmatrix}
	\Delta \; & \;  0 \\
	0 \;  & \;
	\Delta^{-1}  
	 \end{bmatrix} \;,
\ee
where 
\begin{itemize}
\item [$i)$] for a contour in class $(i)$
\be
m_+(\tau) = \frac{\tau_1}{\tau_2} \, \frac{\tau + \sigma / \tau_1}{\tau + \sigma / \tau_2} \;\;\;,\;\;\; m_-(\tau) = \frac{\tau_2}{\tau_1} \, \frac{\tau -  \tau_1}{\tau -  \tau_2} 
\;\;\;,\;\;\; \Delta = \sigma \, \tau_2/\tau_1 \;\;\;,
\ee

\item[$ii)$] for a contour in class $(ii)$

\be
m_+(\tau) = \frac{1}{\tau_1 \, \tau_2} \, \frac{\tau - \tau_1}{\tau + \sigma / \tau_2} \;\;\;,\;\;\; m_-(\tau) = \tau_1 \tau_2 \, 
\frac{ \tau + \sigma /   \tau_1}{\tau -  \tau_2} 
\;\;\;,\;\;\; \Delta = \sigma \, \tau_1 \tau_2  \;\;\;,
\ee

\item[$iii)$] for a contour in class $(iii)$

\be
m_+(\tau) = \frac{\tau_2}{\tau_1} \, \frac{\tau -  \tau_1}{\tau -  \tau_2} \;\;\;,\;\;\; m_-(\tau) = \frac{\tau_1}{\tau_2} \, \frac{\tau + \sigma/  \tau_1}{\tau + \sigma/  \tau_2} 
\;\;\;,\;\;\; \Delta = \sigma \, \tau_1/\tau_2 \;\;\;,
\ee

\item[$iv)$]  for a contour in class $(iv)$

\be
m_+(\tau) = \tau_1 \tau_2 \, \frac{\tau + \sigma / \tau_1}{\tau -  \tau_2} \;\;\;,\;\;\; m_-(\tau) = \frac{1}{\tau_1 \tau_2}  \, \frac{\tau -  \tau_1}{\tau + \sigma/ \tau_2} 
\;\;\;,\;\;\; \Delta = \sigma \, \frac{ 1 }{\tau_1 \tau_2} \;\;\;.
\ee

\end{itemize}
The resulting four-dimensional space-time metrics are, in Weyl-Lewis-Papapetrou form, given by
\bea
ds_4^2 = - \sigma \Delta \, dt^2 + \Delta^{-1} 
\left(  e^\psi \, ds_2^2
 + \rho^2 d\phi^2 \right) \;,
 \label{solschw}
\eea
with $\Delta > 0$, with $ds_2^2$ given by either \eqref{weylframe} or \eqref{weylframe2}, and with
$\psi$ obtained by solving \eqref{eq_diff_eq_psi}.

We will now first discuss the case $\sigma = 1$, and subsequently the more intricate case $\sigma = -1$.

%%%%%%%%%%%%%%%%%%
%%%%%%%%%%%%%%%%%%
\subsection{$\sigma = 1$ \label{sols1}}
%%%%%%%%%%%%%%%%%%%
When $\sigma =1$, we have
\be
\tau_1(\rho,v) = \frac{v-m - \sqrt{\left(v-m\right)^2 + \rho^2}}{\rho} \,, \qquad \tau_2(\rho,v)= \frac{v+m - \sqrt{\left(v+m\right)^2 + \rho^2}}{\rho}\,,
\ee
which are real for any $(\rho,v)$ in the Weyl upper half-plane. Hence, $\tau_1$ and $\tau_2$ 
can never coincide with the two fixed points $\pm i$ of the involution $\tau \mapsto -1/\tau$. As a consequence, any of the solutions to the  field equations obtained
by canonical factorization of $\mathcal{M}_{(\rho,v)} (\tau)$ is valid in all of the Weyl upper half-plane.

The relative positions of $\tau_1$ and $\tau_2$ on the real axis of the $\tau$-plane depends on $(\rho, v)$ in the Weyl upper half-plane, as follows:
\begin{enumerate}
\item[ $a)$] for $v \leq  -m \implies \tau_1 < \tau_2 \leq -1  $;
\item[ $b)$] for $-m < v < m \implies \tau_1 < -1 < \tau_2  < 0 $;
\item[ $c)$] for $v \geq m \implies -1 \leq  \tau_1 < \tau_2  < 0$.
\end{enumerate}
Possible choices of contours for the four classes of contours $(i)-(iv)$ are depicted in Figure \ref{contout}.

\begin{figure}[hbt!]
	\centering
	\includegraphics[scale=0.4]{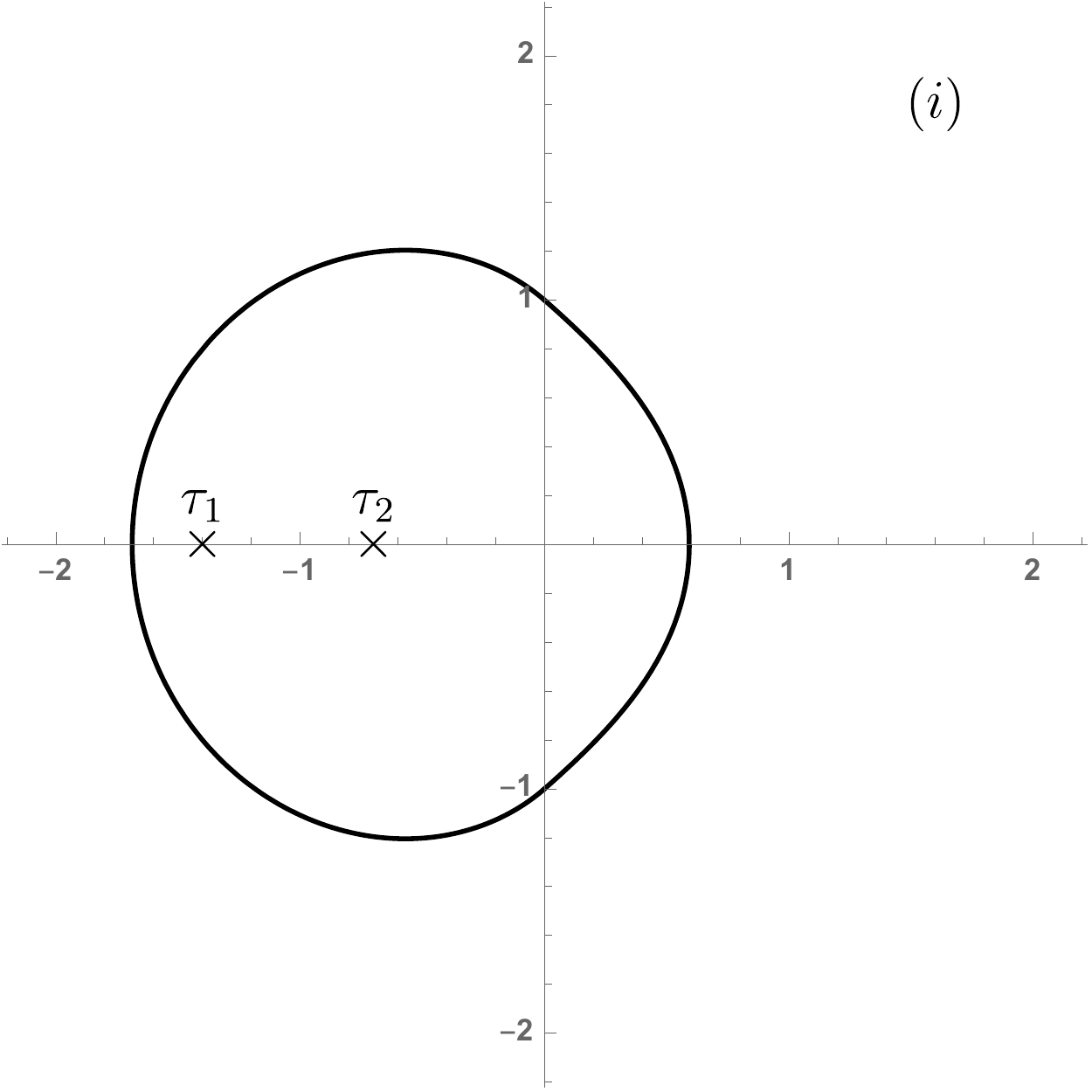}
	\includegraphics[scale=0.4]{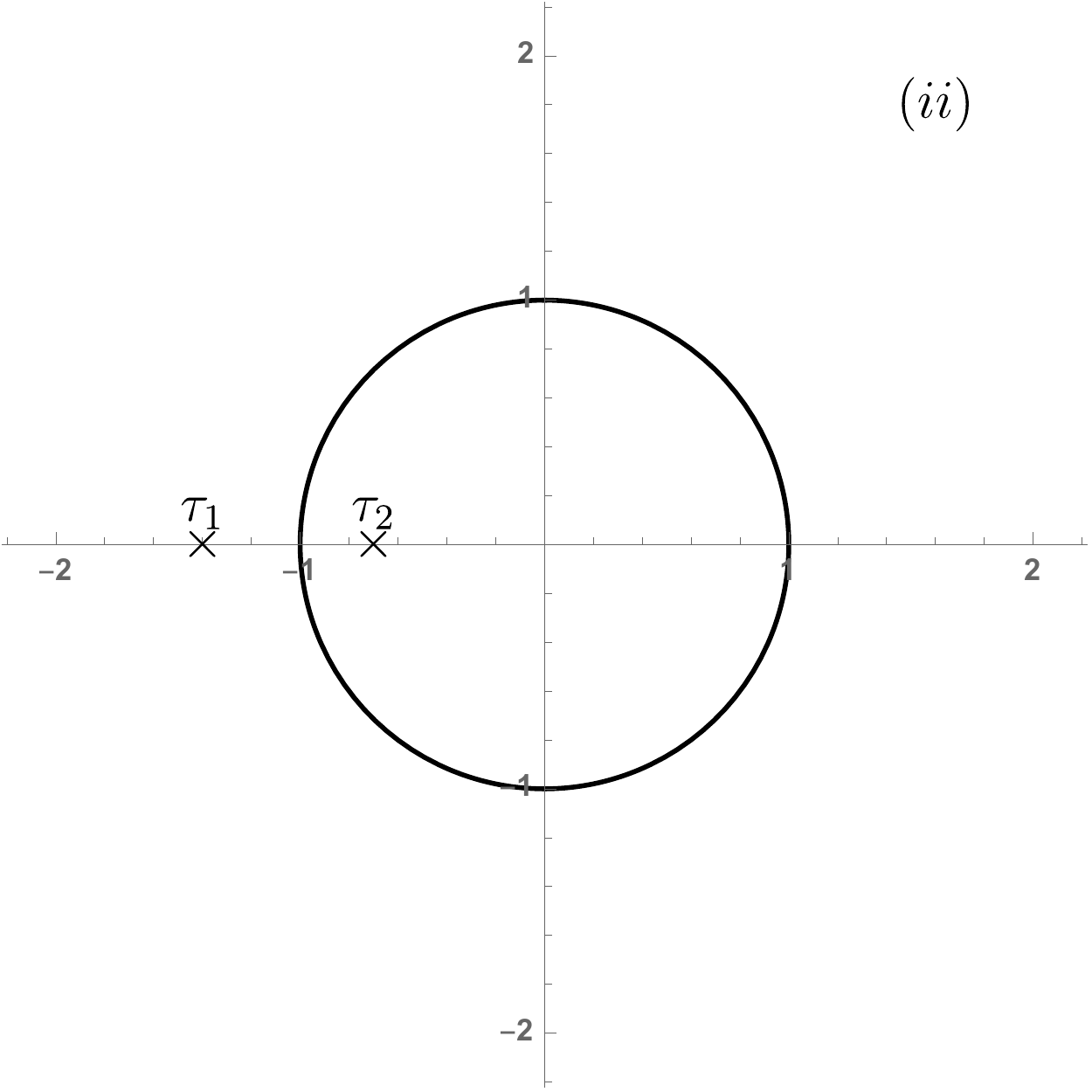}
	\includegraphics[scale=0.4]{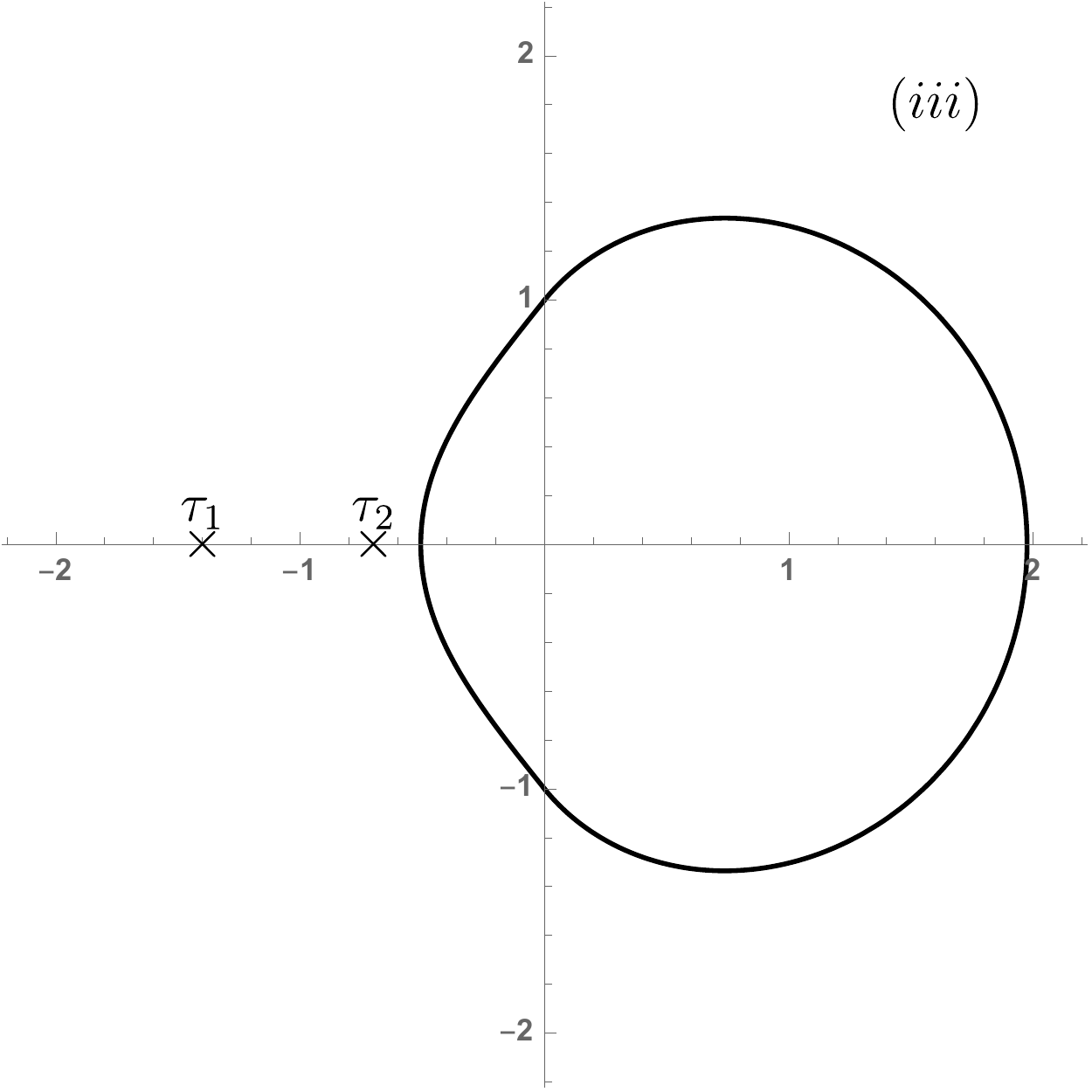}
	\includegraphics[scale=0.4]{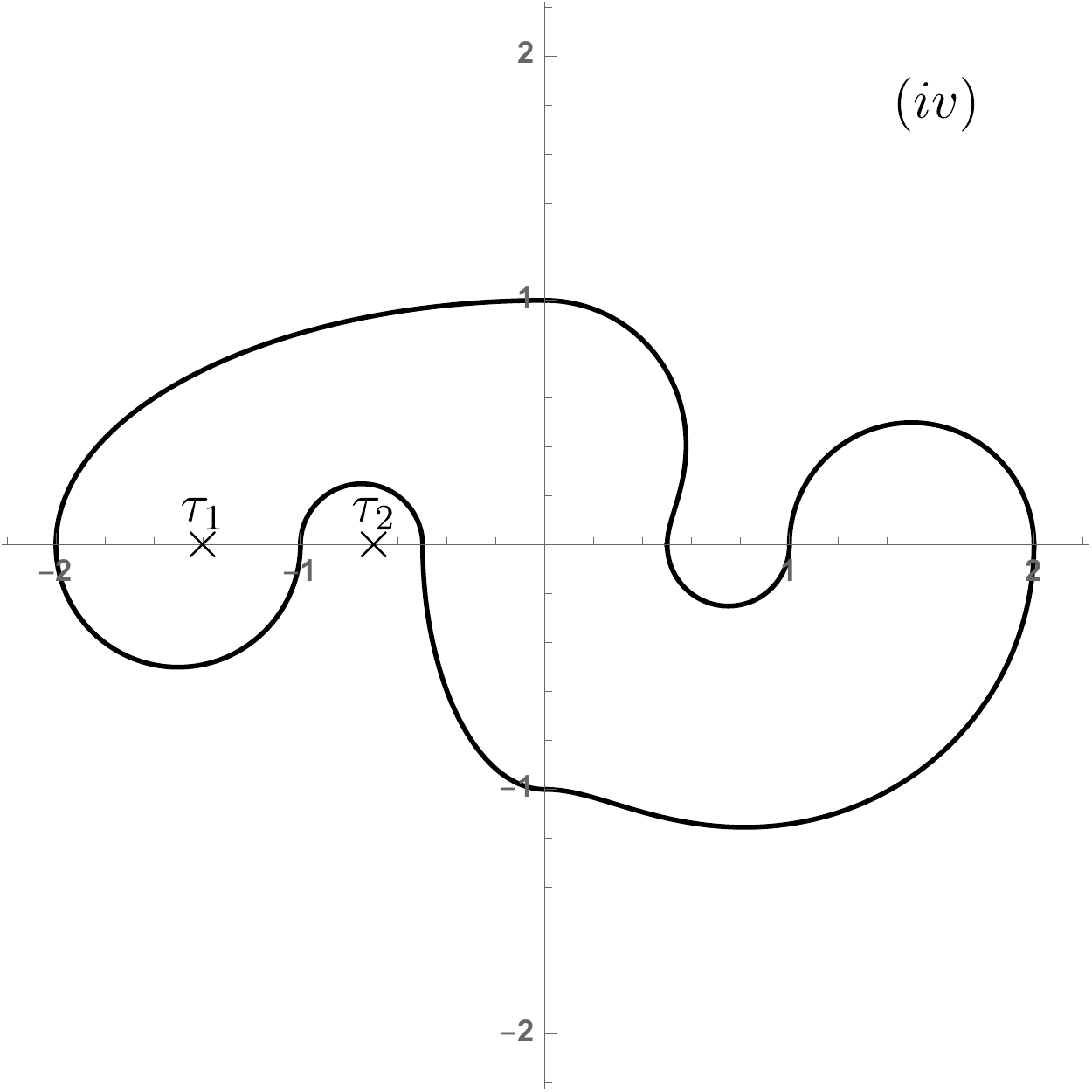}
		\caption{$-m < v < m$: four distinct choices of contours.
		\label{contout}}
\end{figure}

Next, we describe the four solutions
\bea
ds_4^2 = -  \Delta \, dt^2 + \Delta^{-1} 
\left(  e^\psi \, \left(   d \rho^2 + dv^2 \right)
 + \rho^2 d\phi^2 \right) 
 \label{solus1}
\eea
that result from factorizing the Schwarzschild monodromy matrix \eqref{monschwarz} with $\sigma = 1$ with respect
to a contour $\Gamma$ belonging to one of the classes $(i)-(iv)$. To distinguish between these four solutions, we will add the
subscript $i, ii, iii, iv$ to  $\Delta$ and $\psi$.\\

\noindent
{\bf Case $(i)$}: for $\Gamma$ belonging to class $(i)$, we obtain
\begin{eqnarray}
\Delta_i (\rho, v) &=&  \frac{ v + m - \sqrt{\left( v + m \right)^2 +  \rho^2} }
{ v - m -  \sqrt{\left(v - m \right)^2 +  \rho^2} } =   \frac{ v - m + \sqrt{\left( v - m \right)^2 +  \rho^2} }
{ v + m +  \sqrt{\left(v + m \right)^2 +  \rho^2} } \;, 
\nonumber\\
\psi_i (\rho, v) &=& \log \left[ \tfrac12  \frac{v^2 + \rho^2 - m^2 }{ \sqrt{\left(v-m\right)^2 + \rho^2} \,  \sqrt{\left(v+m\right)^2 + \rho^2}  } +\tfrac12   \right] \;, 
\nonumber\\
 K_i(\rho,v) &=& 48 m^2\left(\frac{2}{2m + \sqrt{\left(v-m\right)^2+\rho^2} +\sqrt{\left(v+m\right)^2+\rho^2} }\right)^6 \;,
 \label{expr1}
\end{eqnarray}
where $\psi = \psi_i$ is obtained by solving \eqref{eq_diff_eq_psi}, up to a constant which we set to zero, and where $K = K_i$ is
the Kretschmann scalar $K = R_{\mu \nu \delta \eta} R^{\mu \nu \delta \eta} $. Note that the expressions \eqref{expr1} are invariant under
$v \mapsto -v$.

We note that $\Delta_i$ is bounded for all $(\rho, v)$ in the Weyl upper half-plane, and bounded away from zero except in a neighbourhood of the segment
$\rho =0, \, - m < v < m$, where $\Delta_i \rightarrow 0$ as $\rho \rightarrow 0$ with $v$ fixed.

Using the bijection from the domain $ r > 2m, \, 0 < \theta < \pi$ in spherical coordinates onto the Weyl upper half-plane, given by
\bea
\rho =
\sqrt{r^2 - 2m r }  \, \sin \theta \;\;\;,\;\;\; v = (r -m) \cos \theta \;,
\label{sphext}
\eea
$\Delta_i$ becomes $\Delta_i = 1 - 2m/r$, which results in the line element
\begin{equation}
ds_4^2 = - \left(1 - \frac{2m}{r} \right) dt^2 + \left(1 - \frac{2m}{r} \right)^{-1}  dr^2 + r^2 \left( d \theta^2 +  \sin^2 \theta \, d \phi^2 \right) \:.
\label{schwarzmet}
\end{equation}
Thus, the solution \eqref{solus1} with $(\Delta, \psi) = (\Delta_i, \psi_i)$ describes the exterior region of the Schwarzschild black hole
in Weyl coordinates, and the segment $] -m ,m [$ on the $v$-axis of the Weyl upper half-plane corresponds to the Killing horizon of the Schwarzschild black hole. This solution is an example of an $AI$-metric, c.f. Appendix \ref{sec:classA}.
\\

\noindent
{\bf Case $(ii)$}: for $\Gamma$ belonging to class $(ii)$, we obtain
\begin{eqnarray}
\Delta_{ii} (\rho, v) &=&  \frac{ \left(  v - m -  \sqrt{\left(v - m \right)^2 +  \rho^2} \right) \left(
v + m - \sqrt{\left( v + m \right)^2 +  \rho^2} \right) }{\rho^2}
\nonumber\\
\psi_{ii} (\rho, v) &=& \log \left[ \tfrac12  \frac{m^ 2 - v^2 - \rho^2  }{ \sqrt{\left(v-m\right)^2 + \rho^2} \,  \sqrt{\left(v+m\right)^2 + \rho^2}  } +\tfrac12   \right] \;, 
\nonumber\\
 K_{ii}(\rho,v) &=& 48 m^2\left(\frac{2}{2m + \sqrt{\left(v+m\right)^2+\rho^2} -\sqrt{\left(v-m\right)^2+\rho^2} }\right)^6 \;,
 \label{exp2}
\end{eqnarray}
where $\psi = \psi_{ii}$ is obtained by solving \eqref{eq_diff_eq_psi}, up to a constant which we set to zero, and where $K = K_{ii}$ is
the Kretschmann scalar $K = R_{\mu \nu \delta \eta} R^{\mu \nu \delta \eta} $. 

Let us consider these expressions in the limit $\rho \rightarrow 0$ with $v$ fixed. For fixed $v \in ] -m, m[$,  
$\Delta_{ii}$ and $K_{ii}$ are bounded away from $0$, while $\psi_{ii} \rightarrow 0$. For $v > m$, we have $\Delta_{ii} \rightarrow 0, \, K_{ii} \rightarrow 3/(4m^4)$,
while for $v < -m$ we have $\Delta_{ii} \rightarrow \infty, \, K_{ii} \rightarrow \infty$. Thus, for $\rho =0, v > m$ there is a Killing horizon, and for 
$\rho =0, v <- m$ a curvature singularity.

Using the change of coordinates 
\bea
\rho =
\sqrt{2m \varrho - \varrho^2 }  \, \sinh \vartheta \;\;\;,\;\;\; v = (\varrho -m) \cosh \vartheta \;,
\eea
with $\varrho \in ]0, 2m[, \, \vartheta \in ]0, \infty[$, the space-time metric in four dimensions takes the form 
\begin{equation}
ds_4^2 =  \left(1 - \frac{2m}{\varrho} \right) dt^2-  \left(1 - \frac{2m}{\varrho} \right)^{-1}  d\varrho^2 + \varrho^2 \left( d \vartheta^2 + \sinh^2 \vartheta \, d \phi^2 \right) \:,
\label{schwarzmet2}
\end{equation}
which describes the `interior' region of the $AII$-metric \eqref{hyp}.
Although this metric resembles the Schwarzschild metric \eqref{schwarzmet2}, it has a different isometry group. \\

\noindent
{\bf Case $(iii)$}: 
for $\Gamma$ belonging to class $(iii)$, we obtain
\begin{eqnarray}
\Delta_{iii} (\rho, v) &=& \frac{1}{\Delta_i (\rho,v)} \;\;\;,\;\;\;
\psi_{iii} (\rho, v) = \psi_i (\rho,v) \;,
\nonumber\\
 K_{iii} (\rho,v) &=& 48 m^2\left(\frac{2}{2m - \sqrt{\left(v-m\right)^2+\rho^2} -\sqrt{\left(v+m\right)^2+\rho^2} }\right)^6 \;.
\end{eqnarray}
Using the bijection from the domain $ r > 0, \, 0 < \theta < \pi$ in spherical coordinates onto the Weyl upper half-plane, given by
\bea
\rho =
\sqrt{r^2 + 2m r }  \, \sin \theta \;\;\;,\;\;\; v = (r +m) \cos \theta \;,
\eea
results in the line element
\begin{equation}
ds_4^2 = - \left(1 + \frac{2m}{r} \right) dt^2 + \left(1 + \frac{2m}{r} \right)^{-1}  dr^2 + r^2 \left( d \theta^2 +  \sin^2 \theta \, d \phi^2 \right) \:.
\label{schwarzmetneg}
\end{equation}
This is the `negative mass' Schwarzschild solution \cite{Griffiths:2009dfa}, which has a naked curvature singularity at $r=0$.
This solution is an example of an $AI$-metric, c.f. Appendix \ref{sec:classA}.
\\

\noindent
{\bf Case $(iv)$}: we note that the transformation $\tau_1 \tau_2 \mapsto 1/(\tau_1 \tau_2)$ is effected by mapping $v \mapsto -v$.
Hence, for $\Gamma$ belonging to class $(iv)$, we obtain
\begin{eqnarray}
\Delta_{iv} (\rho, v) &=& \frac{1}{\Delta_{ii} (\rho, v)} \;\;\;,\;\;\;
\psi_{iv} (\rho, v) = \psi_{ii} (\rho, v) \;, 
\nonumber\\
 K_{iv}(\rho,v) &=& 48 m^2\left(\frac{2}{2m + \sqrt{\left(v-m\right)^2+\rho^2} -\sqrt{\left(v+m\right)^2+\rho^2} }\right)^6 \;.
 \label{exp41}
\end{eqnarray}
Since \eqref{exp2} and \eqref{exp41} are related by the
coordinate transformation $v \mapsto -v$, the solution described by \eqref{exp41} is the same as \eqref{schwarzmet2}.

%%%%%%%%%%%%%%%%%%
%%%%%%%%%%%%%%%%%%
\subsection{$\sigma= -1$ \label{ref:subsec-1}}
When $\sigma= -1$ we have, from \eqref{tt12},
\be \label{eq_tau_1_tau_2_interior_region}
\tau_1(\rho,v) = \frac{m-v + \sqrt{\left(m-v\right)^2 - \rho^2}}{\rho} \,, \qquad \tau_2(\rho,v) = \frac{-\left(m+v\right) + \sqrt{\left(m+v\right)^2 - \rho^2}}{\rho}\,.
\ee
Note that, differently from the case $\sigma = 1$, 
$\tau_1$ and $\tau_2$ can now take complex values and may, moreover, coincide with the fixed points $\pm1 \in FP_{-1}$
through which the curve $\Gamma$ has to pass. This occurs for $(\rho,v)$ on four half-lines, namely
\bea
\rho = \pm (v-m) \;\;\;,\;\;\; {\rm where} \quad \tau_1 = \mp 1\;, \nonumber\\
\rho = \pm (v+m) \;\;\;,\;\;\; {\rm where} \quad \tau_2 = \mp 1\;.
\eea

Imposing that $\tau_1$ and $\tau_2$ are real, so as to ensure that the solutions obtained from the canonical 
factorization of $\mathcal{M}_{(\rho,v)} (\tau)$
are real, we find that we have to restrict $(\rho,v)$ to lie in the open set consisting of the three regions in the Weyl upper-half plane denoted by $I, A, B$
in Figure \ref{figure_rho_v_plane_interior}.
\begin{figure}[h]
	\centering
	\includegraphics[width=0.4\textwidth]{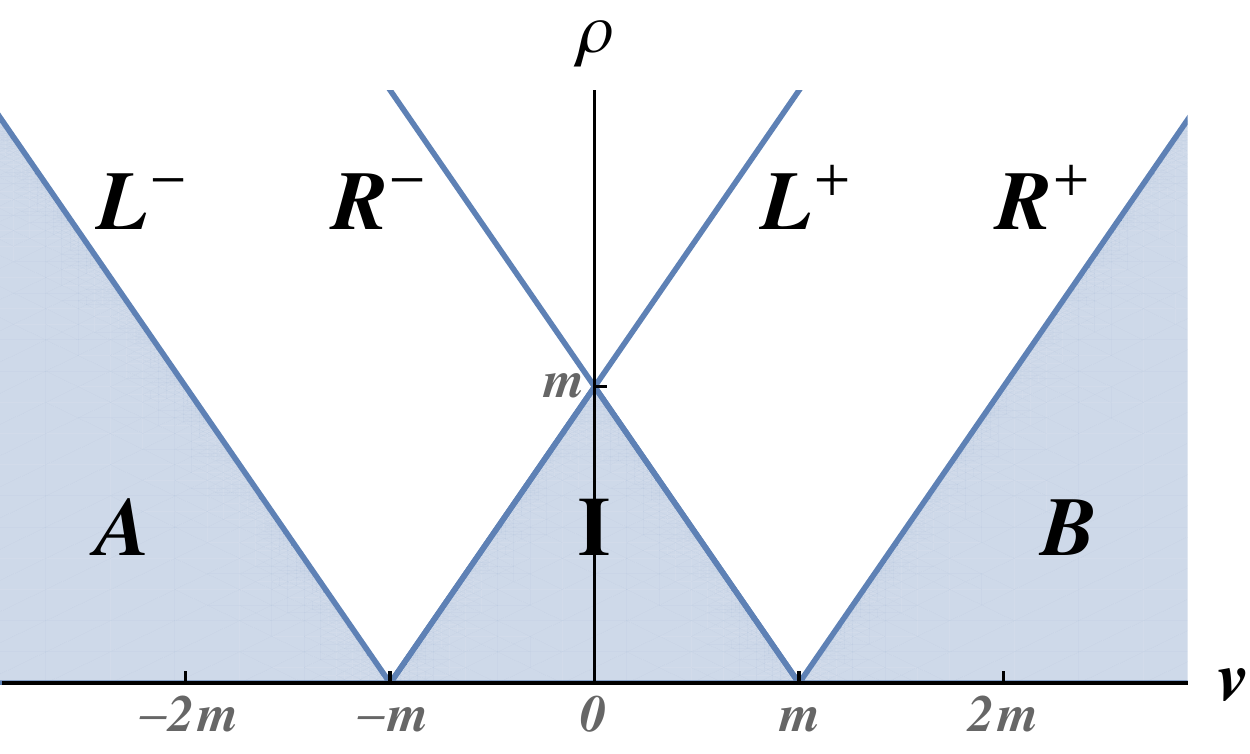}
	\caption[]{Regions in the Weyl upper-half plane for which both $\tau_1$ and $\tau_2$ are real.}
	\label{figure_rho_v_plane_interior}
\end{figure}
In Figure \ref{figure_rho_v_plane_interior}, we denote by $L_{\pm}$ the lines described by $\rho = \pm ( v + m)$ and by $R_{\pm}$ the lines
described by $\rho = \pm (v-m)$.

Considering now the cases $(i)-(iv)$ corresponding to the various possible choices of the contour $\Gamma$, as described
at the beginning of Section \ref{sec:schw}, we start 
by noting that in each of the cases  $(i)-(iv)$ we have $\Delta >0$ if $(\rho,v)$ is taken to lie in region $I$, while $\Delta < 0$ for region $A$ and 
region $B$.

We will first consider the case when $(\rho,v)$ takes values in region $I$, in which case $\tau_1 > 1$ and $-1 < \tau_2 < 0$.
Possible choices of contours for the four classes of contours $(i)-(iv)$ are depicted in Figure \ref{contours}.

\begin{figure}[hbt!]
	\centering
	\includegraphics[scale=0.4]{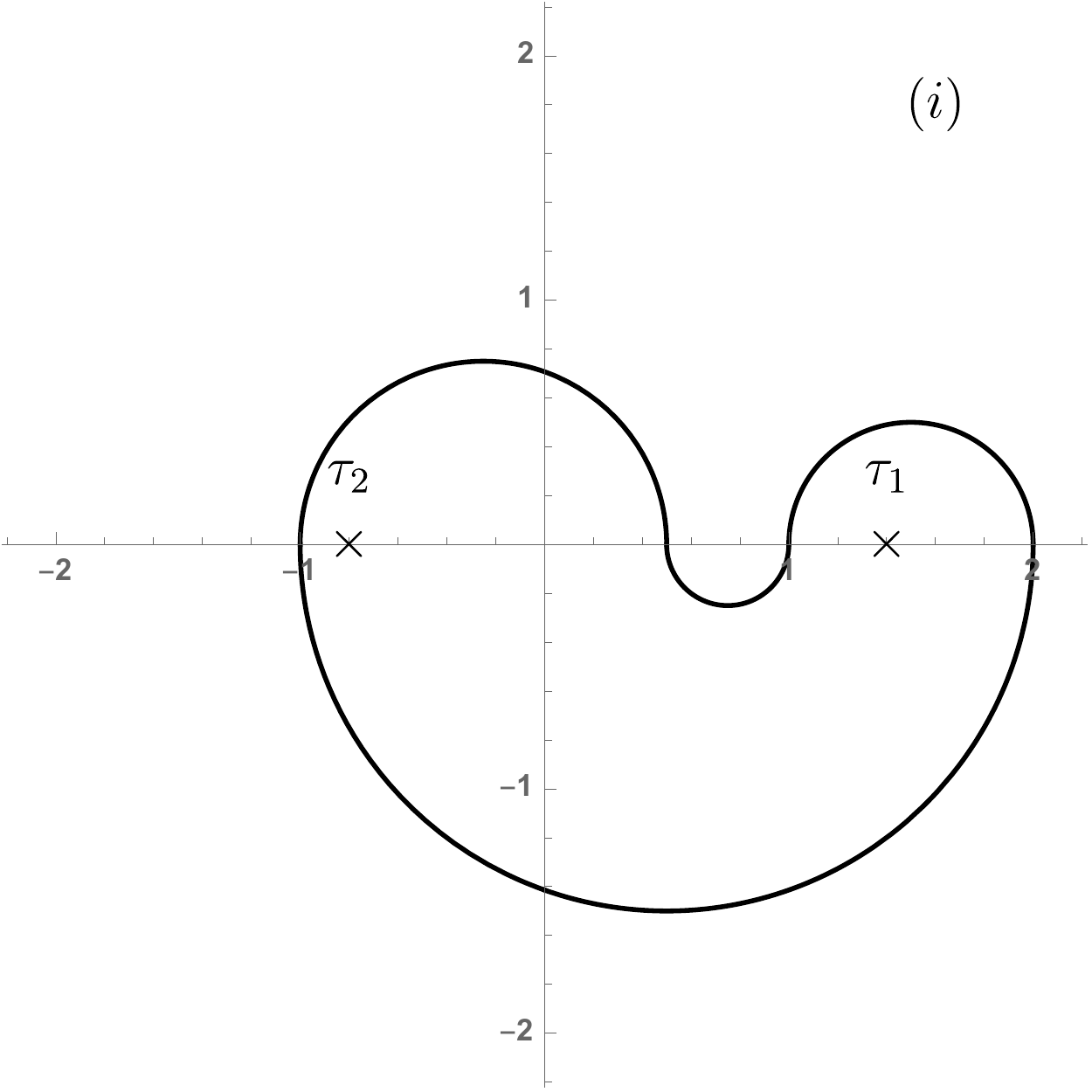}
	\includegraphics[scale=0.4]{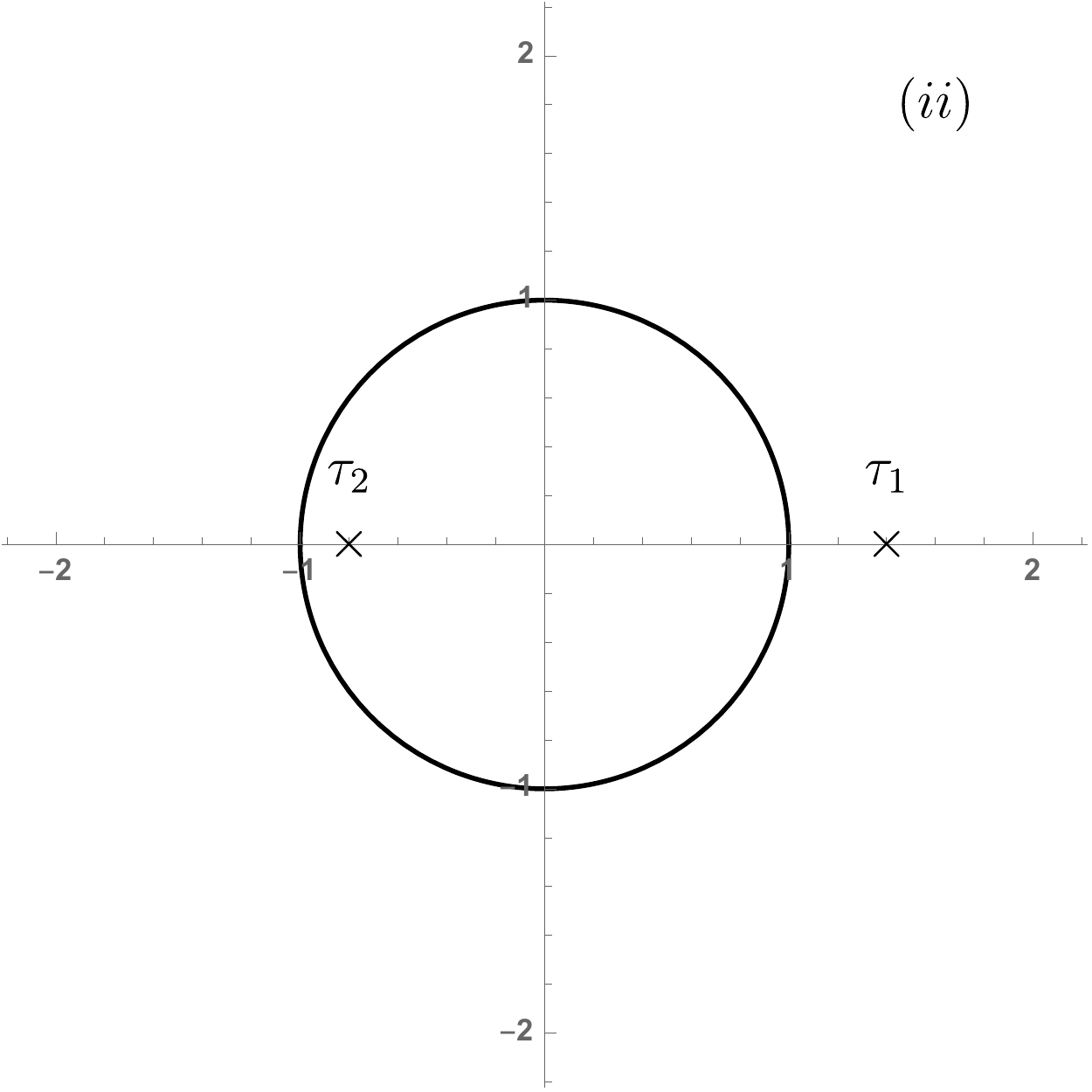}
	\includegraphics[scale=0.4]{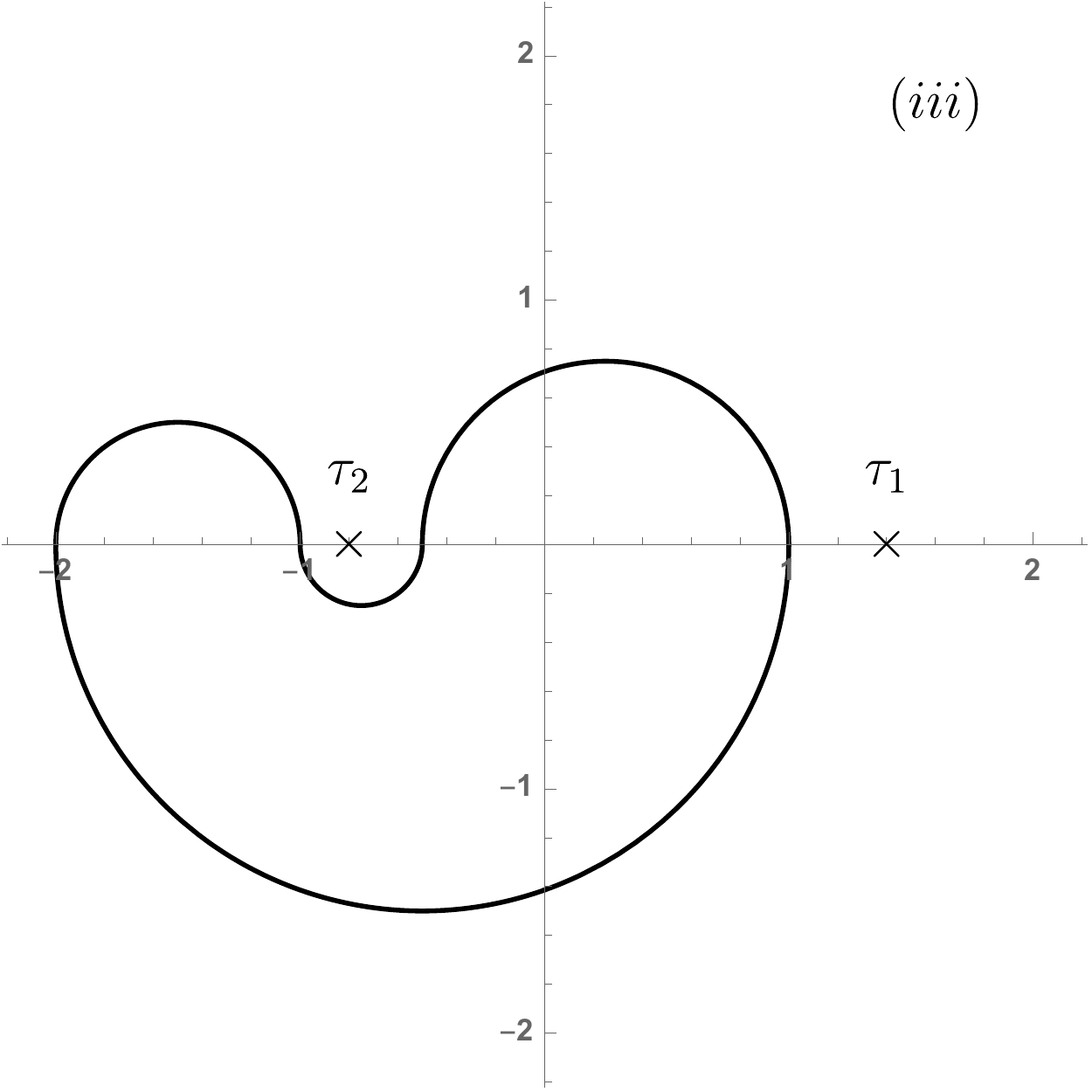}
	\includegraphics[scale=0.4]{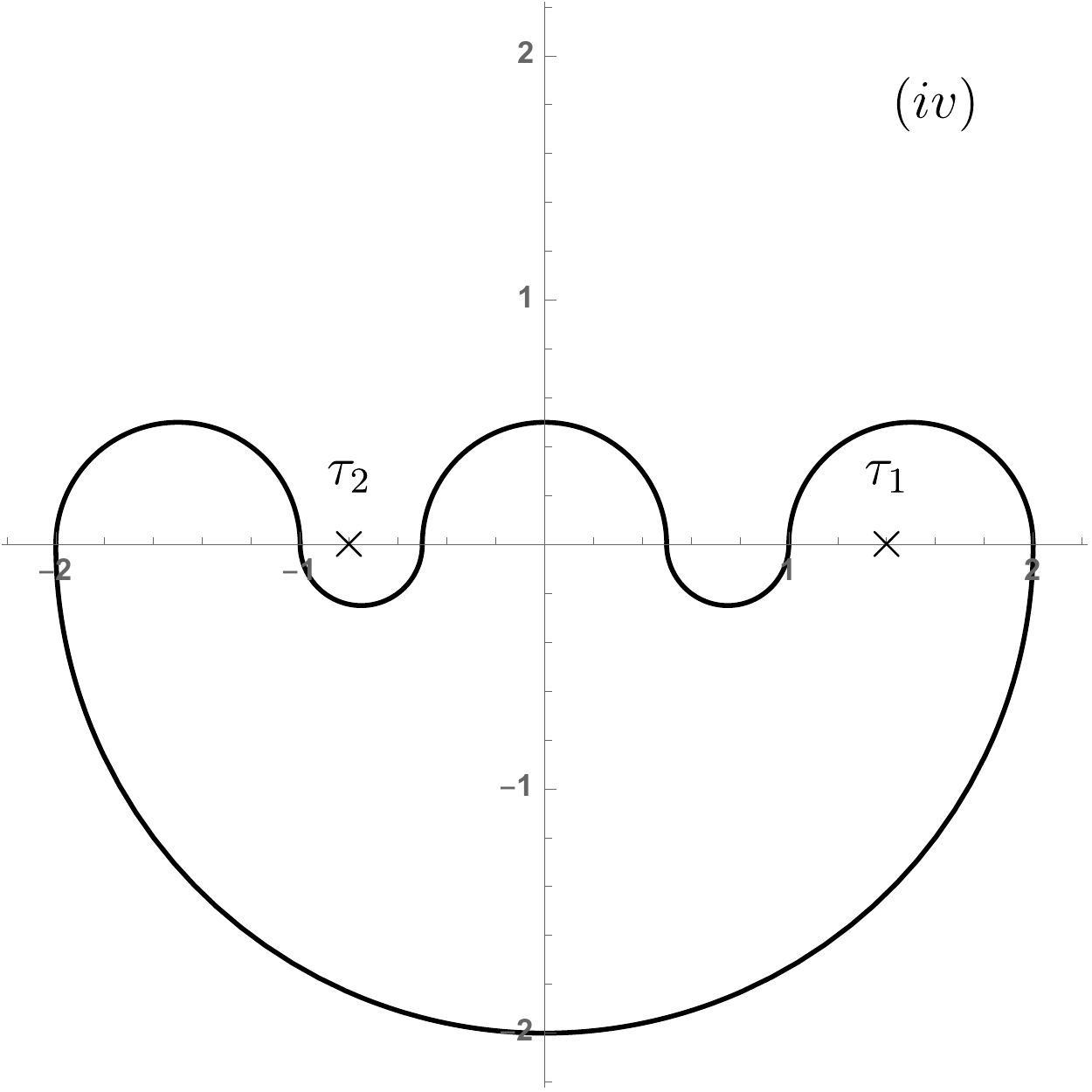}
		\caption[]{Four distinct choices of contours.}
		\label{contours}
\end{figure}

%%%%%%%%%%%
\subsubsection{Region $I$ \label{resultregI}}
%%%%%%%%%%%%%%
Here we take $\rho$ to be a time-like coordinate. Accordingly, we take
the four-dimensional space-time metric to be given by \eqref{solschw}
and 
\eqref{weylframe} (with $\sigma = -1$), i.e.
\bea
ds_4^2 =  \Delta \, dt^2 + \Delta^{-1} 
\left(  e^\psi \, \left(  - \, d \rho^2 + dv^2 \right)
 + \rho^2 d\phi^2 \right) \;.
 \label{metsig-1-1}
\eea
We obtain the following expressions for $\Delta$ and $\psi$ 
and for the four-dimensional Kretschmann scalar $K$:

\noindent
{\bf Case $(i)$}: for $\Gamma$ belonging to class $(i)$, we obtain
\begin{eqnarray}
\Delta_i (\rho, v) &=&  - \frac{\tau_2 (\rho,v)}{\tau_1 (\rho,v)} = 
 \frac{ m +v -\sqrt{\left( m + v \right)^2 -  \rho^2} }
{ m -v +  \sqrt{\left(m - v\right)^2 -  \rho^2} } = 
 \frac{ m -v -\sqrt{\left( m - v \right)^2 -  \rho^2} }
{ m +v +  \sqrt{\left(m + v\right)^2 -  \rho^2} } 
\;, 
\nonumber\\
\psi_i (\rho, v) &=& \log \left[ \tfrac12  \Big \vert \frac{m^2 - v^2 + \rho^2  }{ \sqrt{\left(m-v\right)^2 - \rho^2} \,  \sqrt{\left(m+v\right)^2 - \rho^2}  } - 1
\Big \vert
  \right] \;, 
\nonumber\\
 K_i(\rho,v) &=& 48 m^2\left(\frac{2}{2m + \sqrt{\left(m+v\right)^2-\rho^2} +\sqrt{\left(m-v\right)^2-\rho^2} }\right)^6 \;,
 \label{expr11regionI}
\end{eqnarray}
where $\psi = \psi_i$ is obtained by solving \eqref{eq_diff_eq_psi}, up to a constant which we set to zero.
Note that the expressions \eqref{expr11regionI} are invariant under
$v \mapsto -v$.

For $(\rho,v)$ taking values in region $I$, we obtain the following boundary values: on $R_-$,
\begin{eqnarray}
\Delta_i &=& \frac{2m - \rho - \sqrt{(2m - \rho)^2 - \rho^2}}{\rho} \;\;\;,\;\; (\rm bounded,  non-zero)
\nonumber\\
\psi_i &\rightarrow & \infty \;,
\nonumber\\
K_i &=& 48 m^2 \, \left(\frac{1}{m + \sqrt{(2m - \rho)^2 - \rho^2}} \right)^6 \;,
\label{ir-}
\end{eqnarray}
while on $L_+$
\begin{eqnarray}
\Delta_i &=& \frac{\rho }{2m - \rho + \sqrt{(2m - \rho)^2 - \rho^2}} \;\;\;,\;\; (\rm bounded,  non-zero)
\nonumber\\
\psi_i &\rightarrow & \infty \;,
\nonumber\\
K_i &=& 48 m^2 \, \left(\frac{1}{m + \sqrt{(2m - \rho)^2 - \rho^2}} \right)^6 \;,
\end{eqnarray}
and for $\rho = 0$ and $-m < v < m$,
\begin{eqnarray}
\Delta_i &=& 0\;,
\nonumber\\
\psi_i & \rightarrow & -  \infty \;,
\nonumber\\
K_i &=& \frac{3}{4 m^4} \;.
\end{eqnarray}
This solution has thus a Killing horizon on the segment $\rho = 0$, $-m < v < m$.\\

\noindent
{\bf Case $(ii)$}: for $\Gamma$ belonging to class $(ii)$, we obtain
\begin{eqnarray}
\Delta_{ii} (\rho, v) &=&  - \tau_1 (\rho,v) \, \tau_2 (\rho,v) = 
\frac{
\left( m -v +  \sqrt{\left(m - v\right)^2 -  \rho^2} \right) \left(
 m +v -\sqrt{\left( m + v \right)^2 -  \rho^2} \right)  }
{ \rho^2}
\;, 
\nonumber\\
\psi_{ii} (\rho, v) &=& \log \left[ \tfrac12  \Big \vert \frac{m^2 - v^2 + \rho^2  }{ \sqrt{\left(m-v\right)^2 - \rho^2} \,  \sqrt{\left(m+v\right)^2 - \rho^2}  } + 1
\Big \vert
  \right] \;, 
\nonumber\\
 K_{ii}(\rho,v) &=& 48 m^2\left(\frac{2}{2m + \sqrt{\left(m+v\right)^2-\rho^2} -\sqrt{\left(m-v\right)^2-\rho^2} }\right)^6 \;,
 \label{expr11reg}
\end{eqnarray}
where $\psi = \psi_{ii}$ is obtained by solving \eqref{eq_diff_eq_psi}, up to a constant which we set to zero.

For $(\rho,v)$ taking values in region $I$, we obtain the following boundary values: on $R_-$, 
they coincide with those for the case $(i)$ given in \eqref{ir-},
while on $L_+$
\begin{eqnarray}
\Delta_{ii} &=& \frac{2m - \rho + \sqrt{(2m - \rho)^2 - \rho^2}}{\rho} \;\;\;,\;\; (\rm bounded,  non-zero)
\nonumber\\
\psi_{ii} &\rightarrow & \infty \;,
\nonumber\\
K_{ii}&=& 48 m^2 \, \left(\frac{1}{m - \sqrt{- m v}} \right)^6 \;,
\end{eqnarray}
and for $\rho = 0$ and $-m < v < m$,
\begin{eqnarray}
\Delta_{ii} &=& \frac{m-v}{m+v}\;,
\nonumber\\
\psi_{ii} &=& 0 \;,
\nonumber\\
K_{ii}&=& \frac{48 m^2 }{(m + v)^6} \;.
\end{eqnarray}
This solution has no Killing horizon and no curvature singularity.\\

\noindent
{\bf Case $(iii)$}: for $\Gamma$ belonging to class $(iii)$, we obtain
\begin{eqnarray}
\Delta_{iii} (\rho, v) &=& \frac{1}{\Delta_i (\rho,v)} \;\;\;,\;\;\;
\psi_{iii} (\rho, v) = \psi_i (\rho,v) \;,
\nonumber\\
 K_{iii} (\rho,v) &=& 48 m^2\left(\frac{2}{2m - \sqrt{\left(m-v\right)^2-\rho^2} -\sqrt{\left(m+v\right)^2-\rho^2} }\right)^6 \;.
\end{eqnarray}
For $(\rho,v)$ taking values in region $I$, we obtain the following boundary values: on $R_-$, 
\begin{eqnarray}
\Delta_{iii} &=& \frac{1}{\Delta_{i}} \;,
\nonumber\\
\psi_{iii} &\rightarrow & \infty \;,
\nonumber\\
K_{iii} &=& 48 m^2 \, \left(\frac{2}{2 m - \sqrt{(2m - \rho)^2 - \rho^2}} \right)^6 \;,
\end{eqnarray}
while on $L_+$
\begin{eqnarray}
\Delta_{iii} &=&  \frac{1}{\Delta_{i}} \;,
\nonumber\\
\psi_{iii} &\rightarrow & \infty \;,
\nonumber\\
K_{iii}&=& 48 m^2 \, \left(\frac{2}{2m - \sqrt{(2m - \rho)^2 - \rho^2}} \right)^6 \;,
\end{eqnarray}
and for $\rho = 0$ and $-m < v < m$,
\begin{eqnarray}
\Delta_{iii} &\rightarrow& \infty \;,
\nonumber\\
\psi_{iii} &\rightarrow & - \infty \;,
\nonumber\\
K_{iii}&\rightarrow & \infty \;.
\end{eqnarray}
Thus, this solution has a curvature singularity on the segment $\rho = 0$, $-m < v < m$.\\

\noindent
{\bf Case $(iv)$}: we note that the transformation $\tau_1 \tau_2 \mapsto 1/(\tau_1 \tau_2)$ is effected by mapping $v \mapsto -v$.
Hence, for $\Gamma$ belonging to class $(iv)$, we obtain
\begin{eqnarray}
\Delta_{iv} (\rho, v) &=& \frac{1}{\Delta_{ii} (\rho, v)} \;\;\;,\;\;\;
\psi_{iv} (\rho, v) = \psi_{ii} (\rho, v) \;, 
\nonumber\\
 K_{iv}(\rho,v) &=& 48 m^2\left(\frac{2}{2m + \sqrt{\left(m-v\right)^2-\rho^2} -\sqrt{\left(m+v\right)^2-\rho^2} }\right)^6 \;.
% \label{exp4}
\end{eqnarray}
For $(\rho,v)$ taking values in region $I$, we obtain the following boundary values: on $R_-$, 
\begin{eqnarray}
\Delta_{iv} &=& \frac{1}{\Delta_{ii}} \;,
\nonumber\\
\psi_{iv} &\rightarrow & \infty \;,
\nonumber\\
K_{iv} &=& 48 m^2 \, \left(\frac{2}{2 m - \sqrt{(2m - \rho)^2 - \rho^2}} \right)^6 \;,
\end{eqnarray}
while on $L_+$
\begin{eqnarray}
\Delta_{iv} &=& \frac{1}{\Delta_{ii}} \;,
\nonumber\\
\psi_{iv} &\rightarrow & \infty \;,
\nonumber\\
K_{iv} &=& 48 m^2 \, \left(\frac{2}{2 m + \sqrt{(2m - \rho)^2 - \rho^2}} \right)^6 \;,
\end{eqnarray}
and for $\rho = 0$ and $-m < v < m$,
\begin{eqnarray}
\Delta_{iv} &=& \frac{1}{\Delta_{ii}}\;,
\nonumber\\
\psi_{iv} &=& 0 \;,
\nonumber\\
K_{iv} &=& \frac{48 m^2 }{(m - v)^6} \;.
\end{eqnarray}
This solution has no Killing horizon and no curvature singularity.

%%%%%%%%%%
\subsubsection{Extending solutions: the interior region of the Schwarzschild solution}
%%%%%%%%%%%%%%

In Subsection \ref{sols1} we used the change to spherical coordinates \eqref{sphext} to show that one of the solutions
to the field equations that we obtained in Weyl coordinates describes the exterior region of the Schwarzschild black hole.
If we write  \eqref{sphext} as 
\bea
\rho =
\sqrt{|r^2 - 2m r| }  \, \sin \theta \;\;\;,\;\;\; v = (r -m) \cos \theta \;,
\label{sphint}
\eea
it can be verified that \eqref{sphint}, in the form
\bea
\rho =
\sqrt{2m r - r^2 }  \, \sin \theta \;\;\;,\;\;\; v = (r -m) \cos \theta \;,
\label{sphint2}
\eea
maps bijectively the region $I_S$, contained in the $(r,\theta)$-rectangle with $0 < r < 2m, 0 < \theta < \pi$, defined by
\bea
m-r < m \, \cos \theta < r-m \;,
\eea
onto the $(\rho,v)$-triangle $I$, see Figure \ref{figure_spherical_coords_interior}.
\begin{figure}[h!]
	\centering
	\includegraphics[width=0.4\textwidth]{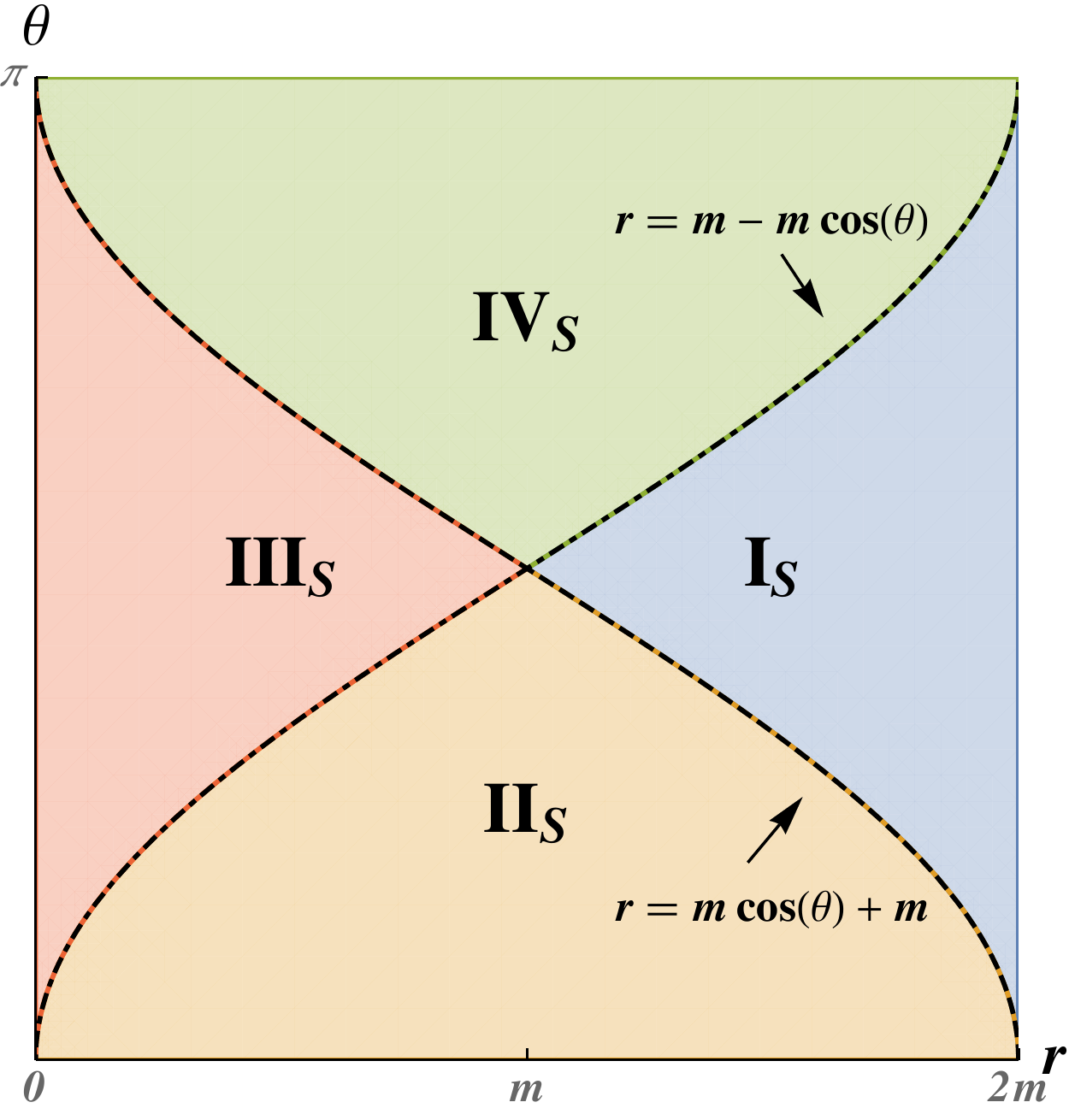}
	\caption[]{Bijection to quadrangle in $(r, \cos \theta)$-plane.}
		\label{figure_spherical_coords_interior}
\end{figure}
With the change of coordinates \eqref{sphint2}, the metric component $\Delta_i$ for case $(i)$ with $\sigma = -1$ becomes
$\Delta_{i, S} = 2m/r -1$, and the associated four-dimensional metric \eqref{metsig-1-1} takes the form
\begin{equation}
ds_4^2 =  \left(\frac{2m}{r} -1 \right) dt^2-  \left(\frac{2m}{r} -1 \right)^{-1}  dr^2 + r^2 \left( d \theta^2 + \sin^2 \theta \, d \phi^2 \right) \:,
\label{intschw}
\end{equation}
i.e. it coincides on $I_S$ with the metric describing the interior region of the Schwarzschild solution.

It is thus natural to ask whether it is possible to extend $\Delta_i$ in region $I$  into neighbouring regions of the Weyl upper-half plane in a continuous manner,
in such a way as to recover the whole interior through appropriate changes of coordinates. This extension can indeed be implemented
by performing affine coordinate changes \eqref{affinetransf} of Weyl coordinates.
The whole interior region of the Schwarzschild solution is obtained by extending the solution in region $I$ to the quadrangle depicted
in Figure \ref{figure_rho_v_generalized_plane_interior}, keeping $\rho$ as the time-like coordinate.  We now proced to explain this extension.

We can define bijections from the triangle $I$ onto each of the four triangles $I, II, III, IV$ represented in Figure \ref{figure_rho_v_generalized_plane_interior}.
\begin{figure}[h!]
	\centering
	\includegraphics[width=0.6\textwidth]{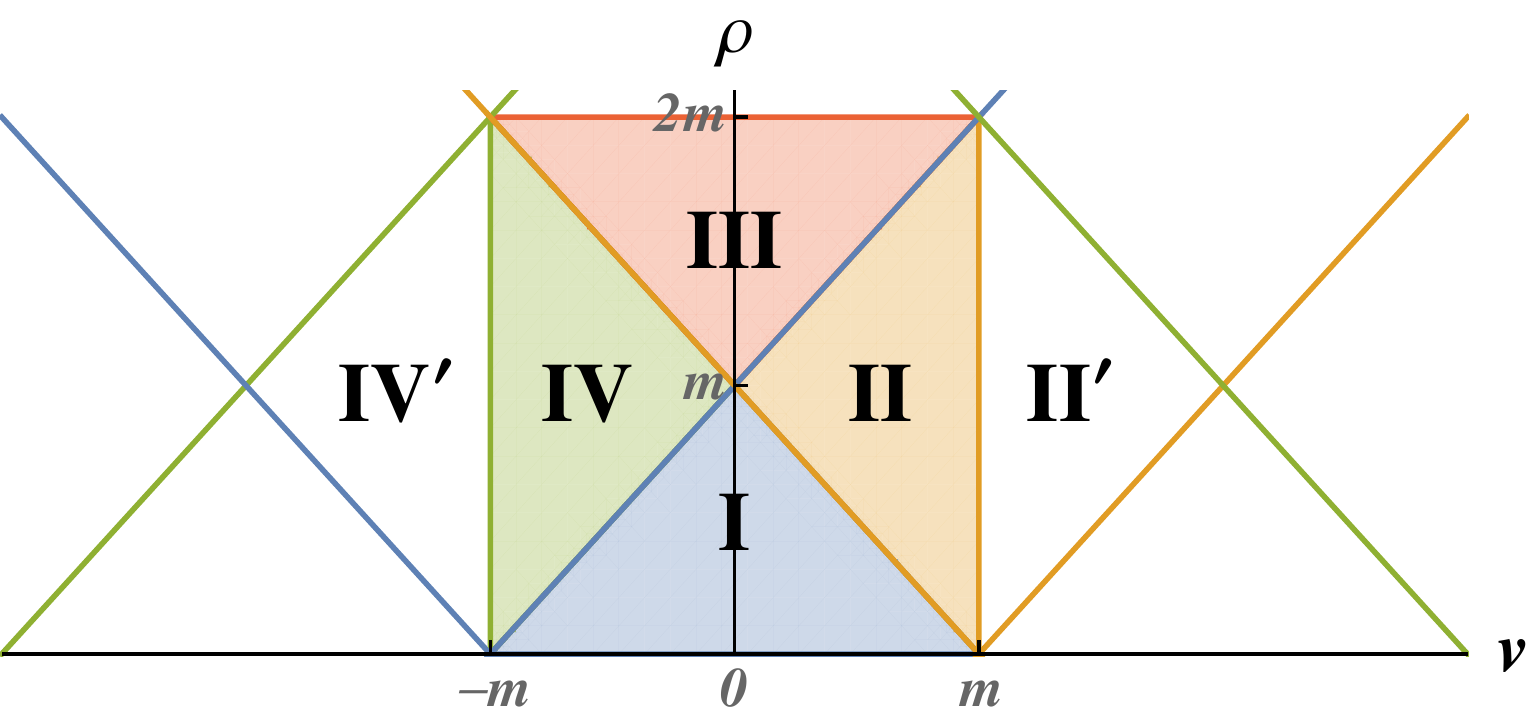}
	\caption[]{Quadrangle $0 < \rho < 2 m, \; -m < v < m$.}
		\label{figure_rho_v_generalized_plane_interior}
\end{figure}
This is implemented using the affine coordinate transformations given in Table \ref{tab:bij}.
Note that the affine transformations are such that they leave the lines $\rho = m + v$ and $\rho = m-v$ invariant.\\

\begin{table}[h!]
  \begin{center}
    \begin{tabular}{|l | l | l | l |} 
          \hline
     $ A$ & $I \rightarrow I$ & $ (\rho, v) \mapsto (\rho,v)$ & $\rho = 0 \mapsto \rho = 0$ \\
      \hline
      $B $& $I \rightarrow II$ & $ (\rho, v) \mapsto (m -v, m-\rho)$ & $\rho = 0 \mapsto v = m$  \\
      \hline
      $C $& $I \rightarrow III$ & $ (\rho, v) \mapsto (2m - \rho,-v)$  & $\rho = 0 \mapsto \rho = 2m$ \\
      \hline
     $  D $& $I \rightarrow IV$ & $ (\rho, v) \mapsto (m+v, -m + \rho)$ & $\rho = 0 \mapsto v = -m$  \\
     \hline
    \end{tabular}
       \caption{Bijections between triangles.}
         \label{tab:bij}
  \end{center}
\end{table}

Taking into account the results of Subsection \ref{resultregI}, we may therefore extend $\Delta_i (\rho,v)$ continuously to the rectangle $-m<v<m, 0< \rho<2m$ by
\bea
\Delta (\rho,v) = 
    \begin{cases}
      \Delta_i (\rho,v) \; \text{ on}  \; I \\
        \Delta_{ii} (B^{-1}(\rho,v)) \; \text{ on}  \; II \\
          \Delta_{iii} (C^{-1}(\rho,v)) \; \text{ on}  \; III \\
            \Delta_{iv} (D^{-1}(\rho,v)) \; \text{ on}  \; IV \\
          \end{cases}
          \label{extintschw}
\eea
where $\Delta_{ii} (B^{-1}(\rho,v))$ and $\Delta_{iv} (D^{-1}(\rho,v))$ can in turn be further extended naturally to the triangles $II'$ and $IV'$, respectively, in Figure
\ref{figure_rho_v_generalized_plane_interior}, since 
\bea
 \Delta_{ii} (B^{-1}(\rho,v)) &=& - \frac{ \left( \rho + \sqrt{\rho^2 - (m-v)^2} \right) \left( \rho - 2m + \sqrt{(\rho - 2m)^2 - (m -v)^2} \right)}{(m -v)^2 } \;,
 \nonumber\\
 \Delta_{iv} (D^{-1}(\rho,v)) &=& - \frac{ \left( \rho + \sqrt{\rho^2 - (m+v)^2} \right) \left( \rho - 2m + \sqrt{(\rho - 2m)^2 - (m +v)^2} \right)}{(m +v)^2 } \;,
 \nonumber\\
 \eea
are well-behaved and positive also in these regions.

In order to ensure that $\rho$ is a time-like coordinate in the quadrangle depicted in Figure \ref{figure_rho_v_generalized_plane_interior},
the affine transformations of Table \ref{tab:bij} from region $I$ into regions $II$ and $IV$ need to be accompanied by 
the transformation $ds_2^2 \rightarrow - ds_2^2$, which is consistent
with the fact that $ds_2^2$ can take the form 
\eqref{weylframe} or \eqref{weylframe2}.

Then, expressing \eqref{extintschw} in spherical coordinates $(r, \theta)$, with $0 < \theta < \pi, 0<r<2m$, according to Table \ref{tab:bijsph}, we obtain 
\eqref{intschw}, which describes the whole interior region of the Schwarzschild black hole. 
This solution is an example of an $AI$-metric, c.f. Appendix \ref{sec:classA}.
\\

\begin{table}[h!]
  \begin{center}
    \begin{tabular}{| l | l  |} 
          \hline
     $I \rightarrow I_S$ & $ \rho = \sqrt{2 m r - r^2} \, \sin \theta \;,\; v = (r-m) \cos \theta $ \\
      \hline
      $II \rightarrow II_S$ & $ \rho = m - (r-m) \cos \theta \;,\; v = m -  \sqrt{2 m r - r^2} \, \sin \theta $   \\
      \hline
       $III \rightarrow III_S$ & $ \rho = 2m -  \sqrt{2 m r - r^2} \, \sin \theta \;,\; v = (m-r) \cos \theta $  \\
      \hline
      $IV \rightarrow IV_S$ & $ \rho = m +  (r-m) \cos \theta  \;,\; v = - m + \sqrt{2 m r - r^2} \, \sin \theta $    \\
      \hline
    \end{tabular}
       \caption{Bijective transformations.}
         \label{tab:bijsph}
  \end{center}
\end{table}

If we now consider $\Delta_{ii}$ or $\Delta_{iii}$ instead of $\Delta_i$ in triangle $I$ of the Weyl upper-half plane, we obtain, by
an appropriate change of coordinates, the Schwarzschild metric \eqref{intschw} in one of the triangles of Figure \ref{figure_spherical_coords_interior},
which we can then extend in an analogous manner. As an example, consider $\Delta_{ii}$ in region $I$ and subject it to the following
sequence of coordinate transformations,
\bea
(\rho,v) \mapsto (m-v, m-\rho) \mapsto (\sqrt{2mr - r^2} \sin \theta, (r-m) \cos \theta) \;. \label{transfII1}
\eea
This yields the Schwarzschild solution \eqref{intschw} in region $II_S$, see  Figure \ref{figure_spherical_coords_interior}, provided
the transformation \eqref{transfII1} is accompanied by the transformation $ds_2^2 \rightarrow - ds_2^2$, to ensure that $\rho$ is a time-like coordinate.

Clearly, we could have defined other continuous extensions of $\Delta_i$ in region $I$,
keeping $\rho$ as the time-like coordinate, by using the same affine transformations.
There are three such possible extensions when taking into account the behaviour of $\Delta_i$ on the boundary lines $\rho = \pm v + m$.
They are given by 
\bea
&&\underbrace{\Delta_i (\rho,v)}_{\text{in region} \, I}
\rightarrow \underbrace{\Delta_{i} (B^{-1}(\rho,v))}_{\text{in region} \, II}
\rightarrow \underbrace{\Delta_{i} (C^{-1}(\rho,v))}_{\text{in region} \, III}
\rightarrow \underbrace{\Delta_{i} (D^{-1}(\rho,v))}_{\text{in region} \, IV}
\nonumber\\
&& \underbrace{\Delta_i (\rho,v)}_{\text{in region} \, I}
\rightarrow \underbrace{\Delta_{i} (B^{-1}(\rho,v))}_{\text{in region} \, II}
\rightarrow \underbrace{\Delta_{iv} (C^{-1}(\rho,v))}_{\text{in region} \, III}
\rightarrow \underbrace{\Delta_{iv} (D^{-1}(\rho,v))}_{\text{in region} \, IV}
\nonumber\\
&& \underbrace{\Delta_i (\rho,v)}_{\text{in region} \, I}
\rightarrow \underbrace{\Delta_{ii} (B^{-1}(\rho,v))}_{\text{in region} \, II}
\rightarrow \underbrace{\Delta_{ii} (C^{-1}(\rho,v))}_{\text{in region} \, III}
\rightarrow \underbrace{\Delta_{i} (D^{-1}(\rho,v))}_{\text{in region} \, IV}
\label{ext_jump}
\eea
However, unlike the extension \eqref{extintschw}, these three extensions
have 
a jump in the transverse extrinsic curvature \cite{Barrabes:1991ng,Poisson:2009pwt}
across the lines $\rho = \pm v + m$. We discuss this in Appendix \ref{sec:glue}.

%%%%%%%%%%%%%%%%
\subsubsection{Regions $A$ and $B$}
%%%%%%%%%%%%%%%%%%
%

Now we take $(\rho,v)$ to lie in regions $A$ and $B$ in Figure \ref{figure_rho_v_plane_interior}. Then,
as mentioned at the beginning of Subsection \ref{ref:subsec-1},
from the factorization of ${\cal M}_{(\rho,v)} (\tau)$ with $\sigma =-1$ we obtain four different solutions $\Delta (\rho,v)$, corresponding
to the cases $(i)-(iv)$, with $\Delta < 0$. Although this violates our initial assumption that $\Delta >0$ (see \eqref{4dWLP}), this is easily
overcome by changing the overall sign of the monodromy matrix \eqref{monschwarz}. This implies that we can take over the expressions
for $\Delta, \psi$ and $K$ for the cases  $(i)-(iv)$ of Subsection \ref{ref:subsec-1}, merely changing the sign of the expression for $\Delta$.

We begin by considering solutions in region $A$.\\

\noindent
{\bf First case}: 
let us first consider the solution $\Delta_{iv}^A = \frac{1}{\tau_1 \tau_2}$, where $\tau_1$ and $\tau_2$ are given by 
\eqref{eq_tau_1_tau_2_interior_region},
\bea
\Delta_{iv}^A (\rho,v)= \frac{\rho^2}{ \left( m - v + \sqrt{(m-v)^2 - \rho^2} \right) \left( -(m+v) +  \sqrt{(m+v)^2 - \rho^2} \right) } \;.
\eea
This solution possesses a Killing horizon for $\rho = 0, v< -m$ (where $\Delta_{iv}^A (\rho,v) =0$). By applying the affine transformation
$(\rho,v) \mapsto (-m-v, -m-\rho)$ to the solution $\Delta_{i}^A = \tau_2/\tau_1$,
\bea
\Delta_{i}^A (\rho,v)= \frac{-(m + v) + \sqrt{(m+v)^2 - \rho^2} }{  m-v +  \sqrt{(m-v)^2 - \rho^2} } \;,
\eea
we obtain a solution in region $A'$ (represented in Figure \ref{figure_rho_v_generalized_A_B_regions}) which continuously extends $\Delta_{iv}^A$
to $A \cup A'$, and is actually also defined and valid in the region $A''$ which comprises the points between the lines $v = -m$ and $\rho = v+m$.

\begin{figure}[h]
	\centering
	\includegraphics[width=0.5\textwidth]{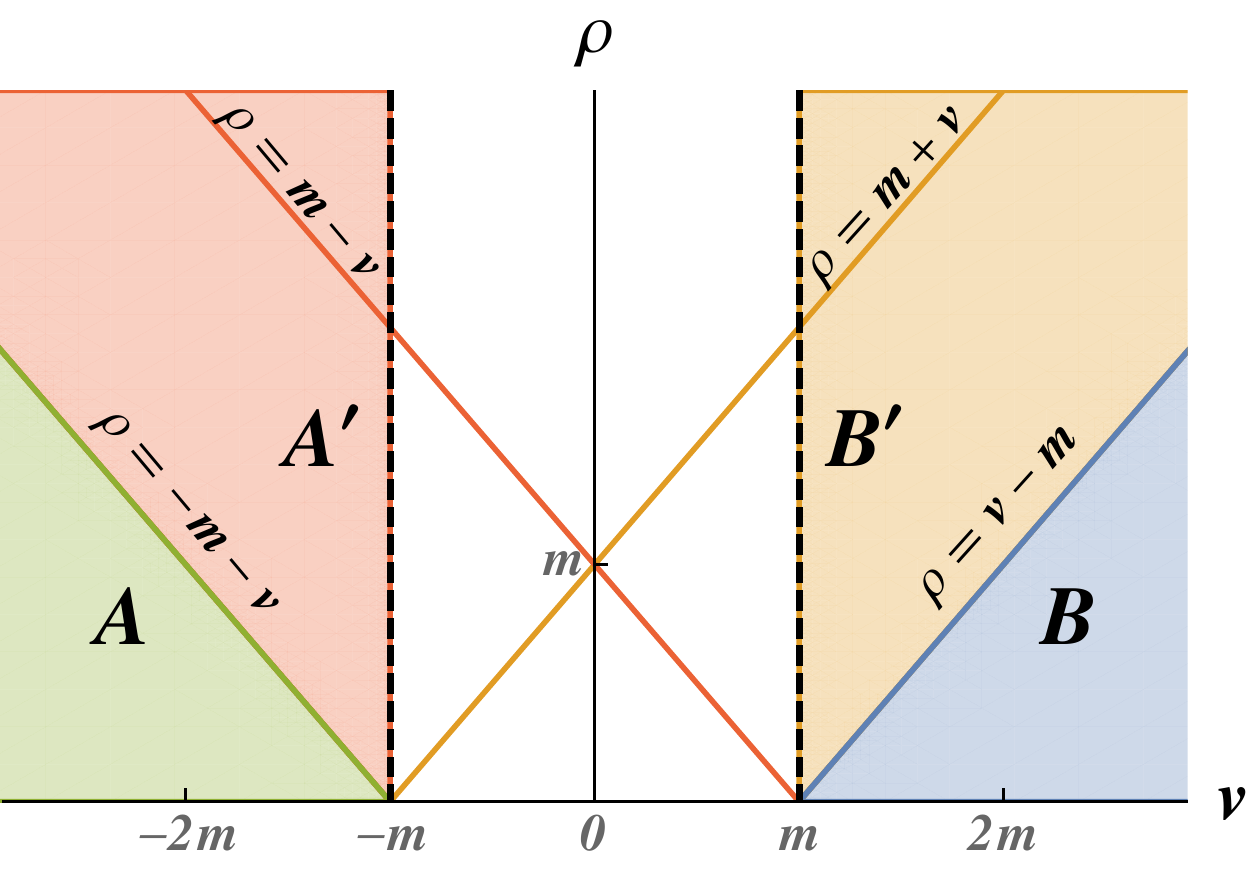}
	\caption[]{ Weyl upper-half plane with solutions extended past regions $A$ and $B$. }
	\label{figure_rho_v_generalized_A_B_regions}
\end{figure}

If we now consider coordinates $(\varrho, \vartheta) \in  \, ]2m, \infty[ \, \times \,  ]0, \infty[$ and divide this domain into regions
\bea
{\tilde A} = \{ (\varrho, \vartheta): \varrho > 2m, \, \vartheta > 0, \, \varrho < m + m \cosh \vartheta \} \;, \nonumber\\
{\tilde A}' = \{ (\varrho, \vartheta): \varrho > 2m, \, \vartheta > 0, \, \varrho > m + m \cosh \vartheta \} \;, 
\eea
we see that ${\tilde A}$ is bijectively mapped onto $A$ by
\bea
\rho = \sqrt{\varrho^2 - 2 m \varrho} \, \sinh \vartheta \;\;\;,\;\,\; v = (m - \varrho) \, \cosh \vartheta \;,
\eea
and 
${\tilde A}'$ is bijectively mapped onto $A'$ by
\bea
\rho = -m - (m - \varrho) \, \cosh \vartheta \;\;\;,\;\,\; v = - m - 
 \sqrt{\varrho^2 - 2 m \varrho} \, \sinh \vartheta \;.
\eea
With this change of coordinates, the previous solution given by
\begin{equation}
\Delta (\rho,v) = \left\{
\begin{aligned}
& \Delta_{iv}^A (\rho,v)  \qquad \text{on $A$,} \\
& \Delta_{i}^A (-m-v, -m - \rho)  \qquad \text{on $A'$,}
\end{aligned}
\right.
\label{solglue1}
\end{equation}
becomes
\bea
\Delta (\rho,v) = 1 - \frac{2m}{\varrho} \;,
\eea
As before, to ensure that $\rho$ is a time-like coordinate in region $A'$,  the affine transformation
$(\rho,v) \mapsto (-m-v, -m-\rho)$ needs to be accompanied by the transformation $ds_2^2 \rightarrow - ds_2^2$. 
The resulting 
four-dimensional metric reads
\bea
ds^2_4 = \left( 1 - \frac{2m}{\varrho} \right) dt^2 - \left( 1 - \frac{2m}{\varrho} \right)^{-1} d \varrho^2 + \varrho^2 \left( d \vartheta^2 + \sinh^2 \vartheta
\, d \phi^2 \right) \;,
\label{a2hyp}
\eea
which describes the `exterior' region of the $AII$-metric \eqref{hyp}.\\

\noindent
{\bf Second case:}
in an analogous manner, we can extend the solution $\Delta_{ii}^A = 1/( \Delta_{iv}^A)$ in region $A$ to region $A \cup A' \cup A''$ by
\begin{equation}
\Delta (\rho,v) = \left\{
\begin{aligned}
& \Delta_{ii}^A (\rho,v)  \qquad \text{on $A$,} \\
& \Delta_{iii}^A (-m-v, -m - \rho)  \qquad \text{on $A' \cup A''$.}
\end{aligned}
\right.
\label{solglue2}
\end{equation}
The associated space-time metric has a curvature singularity at $\rho = 0, v < -m$. By using the change of coordinates
\bea
\rho = \sqrt{\varrho^2 + 2 m \varrho} \, \sinh \vartheta \;\;\;,\;\,\; v = - (m + \varrho) \, \cosh \vartheta \;,
\eea
defined for $\varrho > 0, \, \vartheta > 0$, the regions 
$  \{ (\varrho, \vartheta): \varrho > 0, \, \vartheta > 0, \, \varrho < - m + m \cosh \vartheta \} $ and 
$\{ (\varrho, \vartheta): \varrho > 0, \, \vartheta > 0, \, \varrho > - m + m \cosh \vartheta \} $
are bijectively mapped into $A$ and $A'$, respectively, and in these new coordinates $\Delta (\rho,v)$ is given by
$\Delta (\rho,v) = 1 + 2m/ \varrho$,
which results in the four-dimensional metric
\bea
ds^2_4 = \left( 1 + \frac{2m}{\varrho} \right) dt^2 - \left( 1 + \frac{2m}{\varrho} \right)^{-1} d \varrho^2 + \varrho^2 \left( d \vartheta^2 + \sinh^2 \vartheta
\, d \phi^2 \right) \;.
\eea
This is the `negative mass' version of the $AII$-metric \eqref{a2hyp} \cite{Griffiths:2009dfa},  c.f. Appendix \ref{sec:classA}.

Note that the other two vacuum solutions on $A \cup A'$ given by
\begin{eqnarray}
\Delta (\rho,v) &=& \left\{
\begin{aligned}
& \Delta_{i}^A (\rho,v)  \qquad \text{on $A$,} \\
& \Delta_{iv}^A (-m-v, -m - \rho)  \qquad \text{on $A'$}
\end{aligned}
\right. 
 \nonumber\\[3mm]
\Delta (\rho,v) &=& \left\{
\begin{aligned}
& \Delta_{iii}^A (\rho,v)  \qquad \text{on $A$,} \\
& \Delta_{ii}^A (-m-v, -m - \rho)  \qquad \text{on $A'$}
\end{aligned}
\right.
\end{eqnarray}
are identical to \eqref{solglue1} and \eqref{solglue2}, respectively, via $(\rho,v) \mapsto (-m-v, -m-\rho)$.

Now let us consider solutions defined in region $B$. We note that solutions on $B \cup B'$ obtained in an analogous manner as above are mapped
into the ones on $A \cup A'$ discussed above, so that we do not have to discuss them separately.

Finally, we note that, as in \eqref{ext_jump}, we can define other continuous extensions of $\Delta_i^A, \dots, \Delta_{iv}^A$. These other extensions
will suffer from a jump in the transverse extrinsic curvature across the line $\rho = - m - v$.

%%%%%%%%%%%%%%%%%%
%%%%%%%%%%%%%%%%%%
\section{The monodromy matrix with $\epsilon =0$ \label{sec:schw0}}

%%%%%%%%%%%%%%%%%%
%%%%%%%%%%%%%%%%%%

In this section, we discuss 
the canonical factorization of the monodromy matrix 
\eqref{monschwarzeps} with $\epsilon = 0$, 
\be
\mathcal{M} (u) = \begin{bmatrix}
	\sigma \, \frac{u}{m} & \; 0 \\
	0 & \;  \sigma \, \frac{m}{u}
\end{bmatrix} \;\;\;,\;\;\; m \in \mathbb{R}^+ \;,
\label{monschwarz0}
\ee
for both cases $\sigma = \pm 1$. Note that this monodromy matrix is unbounded in the complex $u$-plane.
The involution $\natural$ acts again as transposition on matrices.

Using the spectral curve \eqref{utauu}, we obtain
\begin{equation}
\sigma \, \frac{u}{m}	=  -  \frac{\rho}{2m} \frac{(\tau-\tau_0)(\tau- \widetilde{\tau}_0)}{\tau} \;,
\label{utau0}
\end{equation}
where
 $\tau_0 = \varphi_{\alpha=0}(\rho,v)$, with  $\varphi_{\alpha}(\rho,v)$ given in \eqref{fial},
\begin{equation}
	\tau_0 = -\sigma \frac{-v +  \sqrt{v^2+\sigma\rho^2}}{\rho} \;,
\end{equation}
and where $\widetilde{\tau}_0 = - \sigma/\tau_0$.

Next, let us factorize \eqref{utau0} with respect to a contour $\Gamma$ that satisfies Assumption $1$, c.f. Section \ref{sec:mtbm}.
There are two possible classes of contours, from which $\Gamma$ can be chosen:
\begin{itemize}
\item[$(i)$] $\tau_0$ is inside the contour $\Gamma$ (and hence $\widetilde{\tau}_0$ is outside of $\Gamma$);
\item[$(ii)$] $\widetilde{\tau}_0$ is inside the contour $\Gamma$ (and hence $\tau_0$ is outside of $\Gamma$).
\end{itemize}
Factorizing with respect to $\Gamma$ we obtain, for each of these cases,
\be
\sigma \, \frac{u}{m} = \sigma \, m_-(\tau) \, m_+ (\tau)
\ee
and
\be
M (\rho,v)
= \begin{bmatrix}
	 \sigma \, m_- (\infty) \; & \;  0 \\
	0 \;  & \;
	\sigma \, m_-^{-1} (\infty) 
	 \end{bmatrix}  =  
	 \begin{bmatrix}
	\Delta \; & \;  0 \\
	0 \;  & \;
	\Delta^{-1}  
	 \end{bmatrix} \;,
	 \label{Meps00}
	\ee
where
\begin{itemize}
\item[$i)$] for a contour in class $(i)$
\be
m_+ (\tau) = \frac{\tau- \widetilde{\tau}_0}{-\widetilde{\tau}_0} \;\;\;,\;\;\; m_- (\tau) =
   \frac{\sigma \, \rho}{2m} \frac{(\tau-\tau_0)}{\tau} \, \widetilde{\tau}_0 \;\;\;,\;\;\;
   \Delta = \frac{\rho \, \widetilde{\tau}_0 }{2m} = \sigma \,  \frac{ v +  \sqrt{v^2+\sigma \rho^2} }{2m}\;,
\label{deltsigvr}
\ee

\item[$ii)$] for a contour in class $(ii)$
\be
m_+ (\tau) = \frac{\tau- {\tau}_0}{- \tau_0} \;\;\;,\;\;\; m_- (\tau) =
   \frac{\sigma \, \rho}{2m} \frac{(\tau-\widetilde{\tau}_0)}{\tau} \, {\tau}_0 \;\;\;,\;\;\;
   \Delta = \frac{\rho \, {\tau}_0 }{2m} = \sigma \,  \frac{ v -  \sqrt{v^2+\sigma \rho^2} }{2m}\;.
    \label{mpmt01}
\ee

\end{itemize}

We take the four-dimensional space-time metric to be given by 
\bea
ds_4^2 = - \sigma \Delta \, dt^2 + \Delta^{-1} 
\left(  e^\psi \, ds_2^2
 + \rho^2 d\phi^2 \right) \;,
 \label{stlineeps0}
\eea
with $ds_2^2$ given by either \eqref{weylframe} or \eqref{weylframe2}.
$\psi$ is obtained by solving \eqref{eq_diff_eq_psi}, and we demand $\Delta > 0$.
Inspection of \eqref{deltsigvr} and
\eqref{mpmt01} however shows that there are regions in the Weyl upper-half plane where $\Delta <0$.
This can be dealt with by changing the overall sign of the monodromy matrix \eqref{monschwarz0}.

We will now first discuss the case $\sigma = 1$, and subsequently the more intricate case $\sigma = -1$.

%%%%%%%%
\subsection{  $\sigma =1$}
%%%%%%%%

When $\sigma =1$, we have 
\be
\tau_0 = \frac{v -  \sqrt{v^2+ \rho^2}}{\rho} \;\;\;,\;\;\;
\widetilde{\tau}_0 =   \frac{v +  \sqrt{v^2+ \rho^2}}{\rho}  \;.
\ee
\\

\noindent
{\bf Case $i)$}: for $\Gamma$ belonging to class $(i)$, we obtain
\begin{eqnarray}
\Delta_i (\rho, v) &=&   \frac{ v +  \sqrt{v^2+ \rho^2} }{2m} > 0 \;, 
\nonumber\\
\psi_i (\rho, v) &=& \log \left[ \frac{v+\sqrt{v^2+  \rho^2}}{2\sqrt{v^2+ \rho^2}} \right] \;,
\end{eqnarray}
up to an integration constant in $\psi_i$. The resulting space-time metric takes the form
\begin{equation}
	ds_4^2 = - \frac{(v+\sqrt{v^2+ \rho^2})}{2m} \, dt^2+\frac{m}{\sqrt{v^2+ \rho^2}}(d\rho^2+dv^2)+\frac{2m \, \rho^2}{v+\sqrt{v^2+\rho^2}} \, d\phi^2.
	\label{rind}
\end{equation}
Now we perform the following change of coordinates, 
\begin{equation}
	\rho = \frac{\tilde{\rho}}{2m} \, \tilde{z} \;\;\;,\;\;\; v = \frac{1}{4m}(\tilde{z}^2- \tilde{\rho}^2) \;\;\;,\;\;\; t = 2m \, \tilde{t} \;.
\end{equation}
Since $\rho > 0$, this requires restricting to either  $\tilde{\rho},\tilde{z}\in ]0,\infty[$ or $\tilde{\rho},\tilde{z}\in ]-\infty,0[$. We choose $\tilde{\rho},\tilde{z}\in ]0,\infty[$.
Then, \eqref{rind} becomes
\begin{equation}
	ds_4^2 = - \tilde{z}^2 d\tilde{t}^2 +d\tilde{z}^2+ d\tilde{\rho}^2 + \tilde{\rho}^2d\phi^2 \;,
\end{equation}
which describes the Rindler metric, i.e.  the uniformly accelerated metric \cite{Griffiths:2009dfa}.\\

Now recall that if $M(\rho,v)$ is a diagonal matrix that 
yields a solution to the field equations, then also $M^{-1} (\rho, v)$ yields a solution to the field equations.  The former solution 
is 
constructed from $(\Delta, \psi)$, whereas the latter is constructed from
$(\Delta^{-1}, \psi)$, and therefore differ from the former in general.
In this latter case, we obtain
for the space-time metric,
\begin{equation}
	ds_4^2 = - \frac{2m}{(v+\sqrt{v^2+\rho^2})}dt^2+\frac{(v+\sqrt{v^2+\rho^2})^2}{4m \, \sqrt{v^2+\rho^2}}( d\rho^2+dv^2)+
	\frac{(v+\sqrt{v^2+\rho^2})  }{2m} \, \rho^2d\phi^2.
	\label{linesig1e0}
\end{equation}
Performing the change of coordinates
\begin{equation} \label{change-coord-sigma-1-omega-inverse}
	\rho = \sqrt{m} \, z\,\sqrt{r} \;\;\;,\;\;\; v = r - \frac{m}{4} \,  z^2 
\end{equation}
with $r>0,\,z>0$, the metric becomes
\begin{equation} 
	ds_4^2 = -\frac{m}{r}dt^2 + \frac{r}{m}dr^2+r^2(dz^2+z^2d\phi^2),
\end{equation}
which is the $AIII$-metric \eqref{aiii-r-space-like} for space-like $r$.
\\

\noindent
{\bf Case $ii)$}: for $\Gamma$ belonging to class $(ii)$, we obtain
\begin{eqnarray}
\Delta_{ii} (\rho, v) =   \frac{ v - \sqrt{v^2+ \rho^2} }{2m} <0 \;,
\end{eqnarray}
which requires changing the overall sign of the monodromy matrix \eqref{monschwarz0}, as mentioned above.
The expression for $\psi_{ii} (\rho, v)$ is, up to an integration constant, given by
\begin{eqnarray}
\psi_{ii} (\rho, v) = \log \left| \frac{v-\sqrt{v^2+  \rho^2}}{2\sqrt{v^2+ \rho^2}} \right| \;.
\label{psi22222}
\end{eqnarray}
Since the coordinate transformation $v \mapsto -v$ converts the expressions for $\Delta_{ii}$ and $\psi_{ii}$ into those for 
 $\Delta_{i}$ and $\psi_{i}$ given above, the resulting space-time metric agrees with the one of Case $i)$.
 
A similar reasoning applies to the solution obtained from $M^{-1} (\rho, v)$.

%%%%%%%
\subsection{$\sigma =-1$}
%%%%%

When $\sigma =-1$, we have 
\be
\tau_0 = \frac{- v +  \sqrt{v^2- \rho^2}}{\rho} \;\;\;,\;\;\;
\widetilde{\tau}_0 =  -  \frac{v +  \sqrt{v^2 - \rho^2}}{\rho}  \;.
\ee
Imposing that $\tau_0$ and $\widetilde{\tau}_0$ are real, so as to ensure that $\Delta$ is real, requires restricting $(\rho,v)$ to lie
in one of the following regions of the  Weyl upper-half plane: either 
$v > \rho$ or $v< - \rho$.

Let us first consider the case when $(\rho,v)$ takes values in the region $v > \rho$.\\

\noindent
{\bf Case $i)$}: for $\Gamma$ belonging to class $(i)$, we obtain
\begin{eqnarray}
\Delta_{i} (\rho, v) =   \frac{ -v - \sqrt{v^2 - \rho^2} }{2m} <0 \;,
\end{eqnarray}
which requires changing the overall sign of the monodromy matrix \eqref{monschwarz0}, as mentioned above.
The expression for $\psi_{i} (\rho, v)$ is, up to an integration constant, given by
\begin{eqnarray}
\psi_{i} (\rho, v) = \log \left( \frac{v+\sqrt{v^2-  \rho^2}}{2\sqrt{v^2- \rho^2}} \right) \;.
\end{eqnarray}
We take the space-time metric to be given by \eqref{stlineeps0} with \eqref{weylframe},
\begin{equation}
	ds_4^2 =  \frac{(v+\sqrt{v^2-  \rho^2})}{2m} dt^2+\frac{m}{\sqrt{v^2 - \rho^2}}( - d\rho^2+dv^2)+\frac{2m \, \rho^2}{v+\sqrt{v^2-\rho^2}}d\phi^2.
	\label{rindddd}
\end{equation}
Now we perform the following change of coordinates, 
\begin{equation}
	\rho = \frac{\tilde{\rho}  \, \tilde{z}}{2m}  \;\;\;,\;\;\; v = \frac{1}{4m}(\tilde{z}^2 + \tilde{\rho}^2) \;\;\;,\;\;\; t = 2m \, \tilde{t} \;.
\end{equation}
Since $\rho > 0$, this requires restricting to either  $\tilde{\rho},\tilde{z}\in ]0,\infty[$ or $\tilde{\rho},\tilde{z}\in ]-\infty,0[$.
We choose  $\tilde{\rho},\tilde{z}\in ]0,\infty[$.
Moreover, the requirement $v > \rho$ implies $( \tilde{z} - \tilde{\rho} )^2 > 0$. 
When  $\tilde{z}>\tilde{\rho}$ we obtain
\begin{equation}
	ds_4^2 = \tilde{z}^2d\tilde{t}^2 -d\tilde{\rho}^2+d\tilde{z}^2  +\tilde{\rho}^2d\phi^2 \;,
\end{equation}
while for $\tilde{z}<\tilde{\rho}$ we obtain
\begin{equation}
	ds_4^2 = \tilde{\rho}^2d\tilde{t}^2+d\tilde{\rho}^2-d\tilde{z}^2+\tilde{z}^2d\phi^2 \;.
\end{equation}
Both these solutions describe a Kasner metric \eqref{eq_Kasner} with exponents $p_1 =1, p_2=p_3=0$,
if $t$ is taken to be an angular coordinate, i.e. $0 \leq \tilde{t} < 2 \pi$.\\

As mentioned above, if 
$M(\rho,v)$ is a diagonal matrix that 
yields a solution to the field equations, also $M^{-1} (\rho, v)$ yields a solution to the field equations.
In this latter case, we take the space-time metric to be given by \eqref{stlineeps0} with \eqref{weylframe2},
\begin{equation} \label{metric-inverse-v-greater-rho}
	ds_4^2 =  \frac{2m}{ v + \sqrt{v^2-\rho^2}}dt^2+\frac{(v + \sqrt{v^2-\rho^2})^2}{4m \sqrt{v^2-\rho^2}}(d\rho^2-dv^2)+\frac{(v + \sqrt{v^2-\rho^2})}{2m} \, \rho^2d\phi^2.
\end{equation}
We perform the change of coordinates 
\begin{equation}
	\rho = \sqrt{m} \, z\,\sqrt{r}   \;\;\;,\;\;\;	v = r + \frac{m}{4} z^2 \;,
\end{equation}
with $r > m z^2/4 $. 
We obtain
\begin{equation} \label{metric-aiii-minusz}
	ds_4^2 = \frac{m}{r}dt\,^2 - \frac{r}{m}dr^2+r^2(dz^2+z^2d\phi^2),
\end{equation}
which is  the $AIII$-metric \eqref{aiii-r-time-like} for time-like $r > m z^2/4 $.\\

\noindent
{\bf Case $ii)$}: for $\Gamma$ belonging to class $(ii)$, we obtain
\begin{eqnarray}
\Delta_{ii} (\rho, v) =   \frac{ -v + \sqrt{v^2 - \rho^2} }{2m} <0 \;,
\end{eqnarray}
which requires changing the overall sign of the monodromy matrix \eqref{monschwarz0}, as mentioned above.
The expression for $\psi_{ii} (\rho, v)$ is, up to an integration constant, given by
\begin{eqnarray}
\psi_{ii} (\rho, v) = \log \left| \frac{- v+\sqrt{v^2-  \rho^2}}{2\sqrt{v^2- \rho^2}} \right| \;.
\label{psi222}
\end{eqnarray}
We take the space-time metric to be given by \eqref{stlineeps0} with \eqref{weylframe},
\begin{equation}
	ds_4^2 =  \frac{(v-\sqrt{v^2-  \rho^2})}{2m} dt^2+\frac{m}{\sqrt{v^2 - \rho^2}}( - d\rho^2+dv^2) + \frac{2m \, \rho^2}{v-\sqrt{v^2-\rho^2}}d\phi^2.
	\label{kasn-}
\end{equation}
We perform the change of coordinates 
\begin{equation}
	\rho = \frac{\tilde{\rho} \, \tilde{z}}{2m}  \;\;\;,\;\;\; v = \frac{1}{2}(\tilde{z}^2 + \tilde{\rho}^2)  \;\;\;,\;\;\; t = 2m \, \tilde{t} \;,
\end{equation}
with the choice $\tilde{\rho},\tilde{z}\in ]0,\infty[$. 
The condition $v > \rho$ becomes $(\tilde{z} - \tilde{\rho})^2 >0$.
When $\tilde{z}>\tilde{\rho}$ we obtain
\begin{equation}
	ds_4^2 = \tilde{\rho}^2d\tilde{t}^2-d\tilde{\rho}^2 +d\tilde{z}^2  +\tilde{z}^2d\phi^2 \;,
\end{equation}
while for $\tilde{z}<\tilde{\rho}$ we obtain
\begin{equation}
	ds_4^2 = \tilde{z}^2d\tilde{t}^2+d\tilde{\rho}^2-d\tilde{z}^2+\tilde{\rho}^2d\phi^2 \;.
\end{equation}
Both these solutions describe a Kasner metric \eqref{eq_Kasner} with exponents $p_1 =1, p_2=p_3=0$.\\

Now let us consider $M^{-1} (\rho, v)$. This again requires changing the overall sign of the monodromy matrix. We take the space-time metric to be given by \eqref{stlineeps0} with \eqref{weylframe},
\begin{equation} \label{metric-inverse-v-greater-rho}
	ds_4^2 =  \frac{2m}{ v - \sqrt{v^2-\rho^2}}dt^2+\frac{(v - \sqrt{v^2-\rho^2})^2}{4m \sqrt{v^2-\rho^2}}(-d\rho^2+dv^2)+\frac{(v- \sqrt{v^2-\rho^2})}{2m} \, \rho^2d\phi^2.
\end{equation}
We perform the change of coordinates 
\begin{equation}
	\rho = \sqrt{m} \, z\,\sqrt{r}   \;\;\;,\;\;\;	v = r + \frac{m}{4} z^2 \;,
\end{equation}
with  $r < m z^2/4 $. 
We obtain
\begin{equation} \label{metric-aiii-minusz2}
	ds_4^2 = \frac{m}{r}dt\,^2 - \frac{r}{m}dr^2+r^2(dz^2+z^2d\phi^2),
\end{equation}
which is  the $AIII$-metric \eqref{aiii-r-time-like} for time-like $r < m z^2/4 $. Thus, combining \eqref{metric-aiii-minusz} with 
\eqref{metric-aiii-minusz2} results in the  $AIII$-metric \eqref{aiii-r-time-like}.\\

Finally, let us consider the case when $(\rho,v)$ takes values in the region $v <-\rho$, or equivalently, $\tilde{v} = -v > \rho$.
Noting that 
\begin{eqnarray}
\Delta_{i} (\rho, v) &=&  \frac{  \tilde{v} - \sqrt{\tilde{v}^2 - \rho^2} }{2m} > 0 \;\;\;,\;\;\;
\Delta_{ii} (\rho, v) =  \frac{ \tilde{v} + \sqrt{\tilde{v}^2 - \rho^2} }{2m}  > 0 \;, \nonumber\\
\psi_{i} (\rho, v) &=& \log \left| \frac{- \tilde{v}+\sqrt{\tilde{v}^2-  \rho^2}}{2\sqrt{\tilde{v}^2- \rho^2}} \right| \;\;\;,\;\;\;
\psi_{ii} (\rho, v) = \log \left(\frac{\tilde{v} +\sqrt{\tilde{v}^2-  \rho^2}}{2\sqrt{\tilde{v}^2- \rho^2}} \right) \;,
\end{eqnarray}
we see that the discussion of the case $v <-\rho$ gets mapped to the previous discussion of the case  $v > \rho$.

%%%%%%%%
\section{Solutions with two Killing horizons \label{sec:nsol}}
%%%%%%%%%%%%%%%%%%%%%%

In this section we consider 
the four-dimensional Einstein-Maxwell-dilaton theory that is obtained by Kaluza-Klein reduction of five-dimensional Einstein gravity.
We take the monodromy ${\cal M}$ that was 
introduced in \cite{Cardoso:2017cgi}, 
\begin{eqnarray}
{\cal M}(u)  =
\left( \frac{H_2}{H_1} \right)^{1/3} 
  \begin{pmatrix}
H_1 H_2   &\quad  \sqrt{2} H_1  &\quad  - 1 \\
- \sqrt{2} H_1  &\quad  - H_1/H_2 
& \quad  0 \\
- 1 &\quad  0 & \quad 0 
\end{pmatrix} \;\;\;,\;\;\; \det {\cal M} = 1 \;,
\label{monomatrixMHH}
\end{eqnarray}
where
\be
H_1 (u)  = h_1 + \frac{ Q}{u} \;\;\;,\;\;\;  H_2 (u) =  h_2 + \frac{ P}{u} \;.
\ee
Here $Q$ denotes an electric charge, while $P$ denotes a magnetic charge. The constants $h_1, h_2 \in \mathbb{R}$ denote two deformation
parameters that are taken to be positive, $h_1, h_2 > 0$.  

We consider the spectral curve relation \eqref{utau} with $\sigma =1$.
When substituting $u$ by the right hand side of \eqref{utau}, the resulting matrix ${\cal M}_{(\rho,v)} (\tau)$ exhibits
6 simple poles located at the following points in the $\tau$-plane,
\begin{eqnarray}
\tau_{0 \, \pm }&=& \frac{1}{\rho}\Big (v \pm \sqrt{\rho^2 + v^2}\Big ) \;, \nonumber\\
\tau_{\tilde Q \,  \pm } &=& \frac{1}{\rho} \left(v + {\tilde Q}
\pm \sqrt{\rho^2 + (v+ {\tilde Q})^2} \right) \;, \nonumber\\
\tau_{\tilde P \, \pm} &=& \frac{1}{\rho} \left(v + {\tilde P} \pm \sqrt{\rho^2 + (v+{\tilde P})^2} \right) \;.
\label{tval}
\end{eqnarray}
Here, ${\tilde Q}$ and ${\tilde P}$ denote the rescaled charges ${\tilde Q} = Q/h_1 \;,\; {\tilde P} = P/h_2 $. 
Note that the values \eqref{tval} are all real, and can thus never coincide with the two fixed points $\pm i$ of the involution
$\tau \mapsto - 1/\tau$. As a consequence, one can always pick a contour that avoids passing through these simple poles. 
Note that $\tau_{0 \, +},  \tau_{\tilde Q \, +}, \tau_{\tilde P \, +} >0$, while  $\tau_{0 \, -},  \tau_{\tilde Q \, -}, \tau_{\tilde P \, -} <0$.
As shown in \cite{Cardoso:2017cgi}, the canonical factorization of ${\cal M}_{(\rho,v)} (\tau)$ with respect to a chosen contour
always exists.

There are several contours that one may pick to perform the canonical factorization of ${\cal M}_{(\rho,v)} (\tau)$.
These various choices of contour will, in general, result in different solutions to the field equations.  
In  \cite{Cardoso:2017cgi} a specific contour was chosen, namely the unit circle in the $\tau$-plane.
Moreover, the discussion presented there was restricted to a certain region in the  Weyl upper-half plane, 
to ensure that the three poles $\tau_{0 \, -},  \tau_{\tilde Q \, -}, \tau_{\tilde P \,-} $ lie inside the unit circle, 
and the poles $\tau_{0 \, +},  \tau_{\tilde Q \, +}, \tau_{\tilde P \, +} $ lie outside. This was achieved by taking ${\tilde Q}, {\tilde P} >0$
and restricting to $v>0$.  The latter restrictions are a consequence of having chosen the unit circle as a factorization contour.
These restrictions are unnecessary: they can be avoided by choosing a contour that encircles the origin and passes through the fixed points $\pm i$ 
such that 
$\tau_{0 \, -},  \tau_{\tilde Q \, -}, \tau_{\tilde P \, -} $ lie inside the contour. Such a contour always exists,
in view of the fact that $\tau_{0 \, -},  \tau_{\tilde Q \, -}, \tau_{\tilde P \, -} $ are real and negative  for any choice of charges
 ${\tilde Q}, {\tilde P}$ and for any $(\rho,v)$ in the Weyl upper-half plane. In the following, we discuss the canonical factorization
 of  ${\cal M}_{(\rho,v)} (\tau)$ with respect to such a contour.  A similar discussion can be carried out for any other choice of
 contour, but we will refrain from doing so in this paper.  We will follow the notation used in  \cite{Cardoso:2017cgi} and use superscripts $\pm$
 to denote the values \eqref{tval} that lie inside the contour by $\tau^+$ and the values that lie outside the contour by $\tau^-$.
 Thus, for the choice of contour described above, we set $\tau_{0 \, -} = \tau_0^+,  \tau_{\tilde Q \, -} = \tau_{\tilde Q}^+, \tau_{\tilde P \, -} = 
 \tau_{\tilde P}^+$.
 
 The resulting four-dimensional space-time
metric takes the form
\begin{equation}
ds^2_4 = - \Delta 
\, \left(dt + B \, d\phi \right)^2 + \Delta^{-1}  \left(e^{\psi} \, (d \rho^2 + dv^2) + 
\rho^2 \, d\phi^2 \right) \;,
\label{linePQg0}
\end{equation}
where $\Delta, B$ and $e^{\psi} $ are functions of $(\rho,v)$.
The solution carries electric-magnetic charges $(Q,P)$, and hence 
is supported by an electric-magnetic field $F$. It is also supported by a scalar field $e^{- 2 \Phi}$.

 We begin by considering the case ${\tilde Q}, {\tilde P} >0$.
  When 
${\tilde Q} = {\tilde P}$, the solution resulting from the canonical factorizaton of ${\cal M}_{(\rho,v)} (\tau)$ describes a four-dimensional
extremal black hole supported by a constant dilaton field. We therefore take ${\tilde Q} \neq {\tilde P}$.
For concreteness, we choose ${\tilde J} \equiv {\tilde P} - {\tilde Q} >0$. The resulting expressions 
for  $\Delta, B, e^{\psi}, F$ and  $e^{- 2 \Phi}$ are given in 
Appendix \ref{nmetQP}.

As $\rho^2 + v^2 \rightarrow \infty$, the solution behaves as follows. It asymptotes to a stationary solution with an effective NUT parameter ${\tilde J}$
which is expressed in terms of the electric-magnetic charges,
\bea
ds^2_4 &=& - \frac{1}{h_1 h_2} \left(1 - \frac{{\tilde P} + {\tilde Q}}
{\sqrt{\rho^2 + v^2}} \right)
\, \left(dt - h_1 h_2 \, {\tilde J} \, \frac{v}{\sqrt{\rho^2 + v^2}} \, d\phi \right)^2 \nonumber\\
&& + h_1 h_2  \left(1 + \frac{{\tilde P} + {\tilde Q}}
{\sqrt{\rho^2 + v^2}} \right)
\left(d \rho^2 + dv^2 + 
\rho^2 \, d\phi^2 \right) \;,
\eea
with $e^{- 2 \Phi} \rightarrow h_1/h_2$.

The metric \eqref{linePQg0} has two Killing horizons, $ || \partial/ \partial t||^2 = \Delta = 0$,
located at $\rho = v = 0$ and at $\rho = 0, - {\tilde P} < v < - {\tilde Q}$, respectively.
The latter could not be detected in the analysis of \cite{Cardoso:2017cgi}, because the choice of the unit circle as a factorization contour only allowed
to consider the region $v>0$.
When approaching the 
Killing horizon $\rho = v = 0$, keeping  $\rho/v$ constant,
the metric takes the form
\begin{equation}
ds^2_4 = - \frac{\rho^2 + v^2}{P Q} 
\, \left(dt +  h_1 h_2 \, {\tilde J} \, f\left(v/\sqrt{\rho^2 + v^2} \right) \,  d\phi \right)^2 + \frac{PQ}{\rho^2 + v^2}  \left( d \rho^2 + dv^2 + 
\rho^2 \, d\phi^2 \right) \;,
\end{equation}
where $f(x) = x (1-x) -1$ denotes a linear combination of the Legendre polynomials $P_0, P_1$ and $P_2$. 
The scalar field approaches the value $e^{- 2 \Phi} \rightarrow P/Q$. Note, however, that $\partial_{\rho,v} e^{- 2 \Phi} $
does not vanish at $\rho=v=0$ when $\tilde J \neq 0$. Thus, only when ${\tilde J} =0$ 
(that is, in the extremal black hole case)
does the solution exhibit an attractor behaviour as
one approaches $\rho=v=0$. 
Both the Ricci and the Kretschmann scalars are well-behaved at the Killing horizon $\rho=v=0$.
However, they both blow up at the Killing horizon $\rho = 0, \, -\tilde{P}< v < -\tilde{Q}$, which points to the existence of a curvature singularity at this horizon.
The scalar field 
$e^{- 2 \Phi}$ also diverges at this Killing horizon.

Thus, this solution describes a space-time that is supported by one electric and one magnetic charge and by a scalar field, possesses two Killing horizons and asymptotes
to a space-time with an effective NUT parameter $\tilde J = {\tilde P} - {\tilde Q}$ that is expressed in terms of the electric-magnetic charges.
To the best of our knowledge, this four-dimensional solution has not been given in the literature before. This solution has similarities
with a solution discovered by Brill \cite{Brill} (see also \cite{Stephani:2003tm}) in a different four-dimensional theory, namely an Einstein+Maxwell theory, in the sense
that it possesses two Killing horizons, one of them being associated with the presence of a NUT parameter. However, 
while Brill's solution describes an electrically charged (or magnetically charged) Reissner-Nordstrom black hole when the NUT parameter is switched off, our solution
describes a dyonic extremal black hole solution (that is supported by a scalar field) when the NUT parameter $\tilde J$ is set to zero. Moreover,
differently from Brill's solution, the NUT parameter
${\tilde J}$ is not an additional parameter, but rather an effective parameter that is expressed in terms of the electric-magnetic charges.

Let us now discuss what happens to the solution when one sends the parameters $h_1$ and $h_2$ to zero, keeping $Q$ and $P$ fixed.
 In this limit the solution becomes the  $AdS_2 \times S^2$ solution, supported by a constant scalar field $e^{- 2 \Phi} = P/Q$,
 that
 describes the near-horizon region of the extremal black hole solution.

Next, let us briefly consider the case ${\tilde Q} <0, {\tilde P} >0$. Performing the canonical factorization of ${\cal M}_{(\rho,v)} (\tau)$ with respect to
the contour described above, results in a solution that,
as $\rho^2 + v^2 \rightarrow \infty$, asymptotes to a stationary solution with an effective NUT parameter ${\tilde J} = {\tilde P} - {\tilde Q}$,
as in the previous case,
\bea
ds^2_4 &=& - \frac{1}{h_1 h_2} \left(1 - \frac{{\tilde P} + {\tilde Q}}
{\sqrt{\rho^2 + v^2}} \right)
\, \left(dt - h_1 h_2 \, {\tilde J} \, \frac{v}{\sqrt{\rho^2 + v^2}} \, d\phi \right)^2 \nonumber\\
&& + h_1 h_2  \left(1 + \frac{{\tilde P} + {\tilde Q}}
{\sqrt{\rho^2 + v^2}} \right)
\left(d \rho^2 + dv^2 + 
\rho^2 \, d\phi^2 \right) \;,
\eea
with $e^{- 2 \Phi} \rightarrow h_1/h_2$. Note that now the combination ${\tilde P} + {\tilde Q}$ may be negative.
Moreover, unlike in the previous case, 
 the function $\Delta (\rho,v)$ in the metric
\eqref{linePQg0}
will generically cease to be real in a region of the Weyl upper-half plane. 
The function $\Delta^{-2}$ is given by $\Delta^{-2} = (g m_1)^2 \, e^{2 \Sigma_2}$, c.f. Appendix \ref{nmetQP}.
The curve in the Weyl upper-half plane where $e^{2 \Sigma_2}$ vanishes determines the boundary where $\Delta$ ceases to be real.
Note that this boundary does not coincide with the axis $\rho =0$, and that
the Ricci and the Kretschmann scalars blow up when approaching this boundary. 
See Figure \ref{PQneg} for examples of the curve in the Weyl upper-half plane where $e^{2 \Sigma_2}$ vanishes.

\begin{figure*}[h!]
\begin{subfigure}{0.22\textwidth}
\includegraphics[scale=0.28]{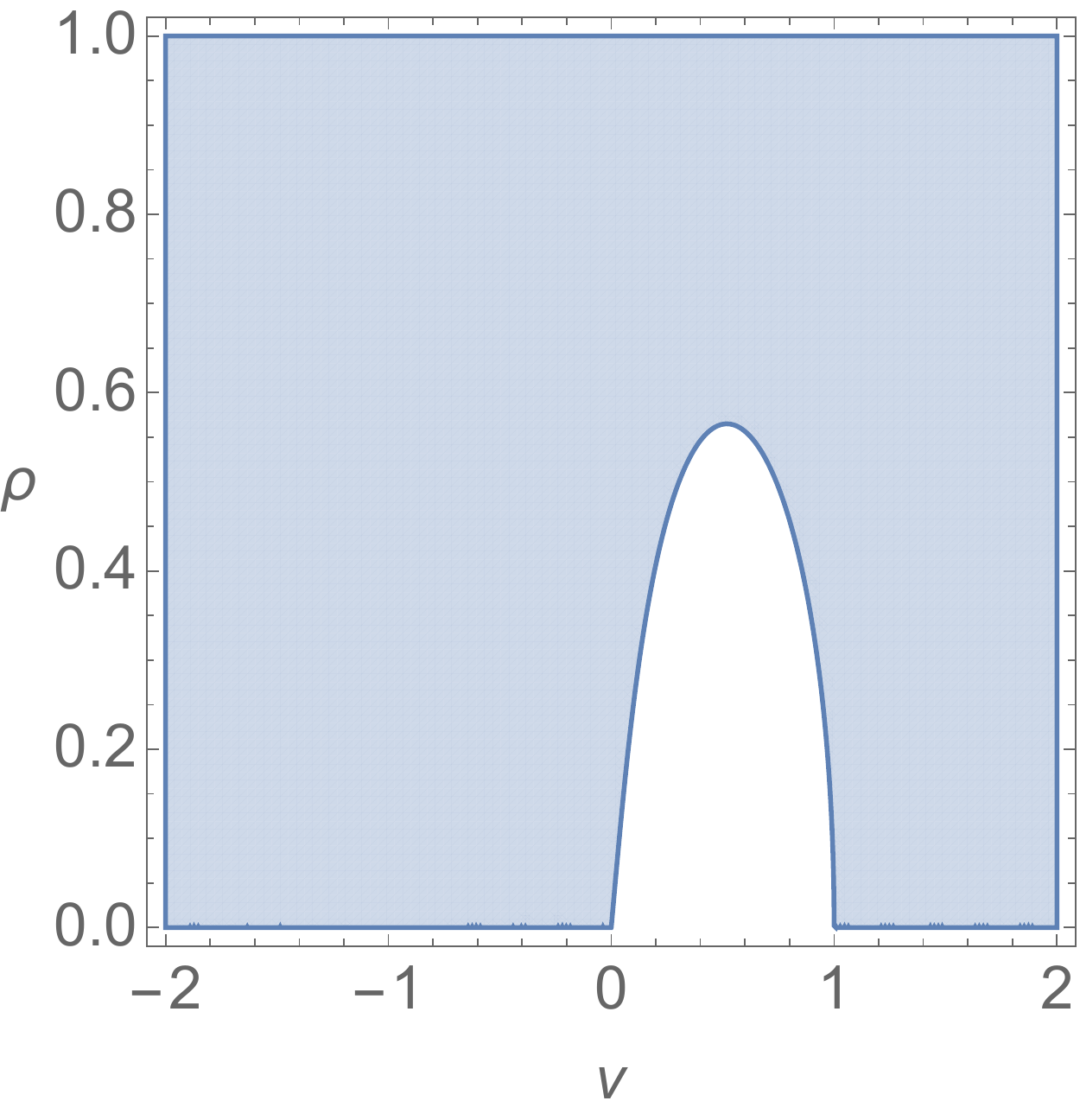}
%plot_P_equals_Q.pdf} 
%\caption{$\tilde{P}=-\tilde{Q}$. } 	
\caption{$P=- Q=1$, $h_1=h_2=1$. } 
%\label{fig_Q_tilde_equals_mP_tilde}
\end{subfigure} \hspace*{\fill}
\begin{subfigure}{0.22\textwidth}
	\includegraphics[scale=0.28]{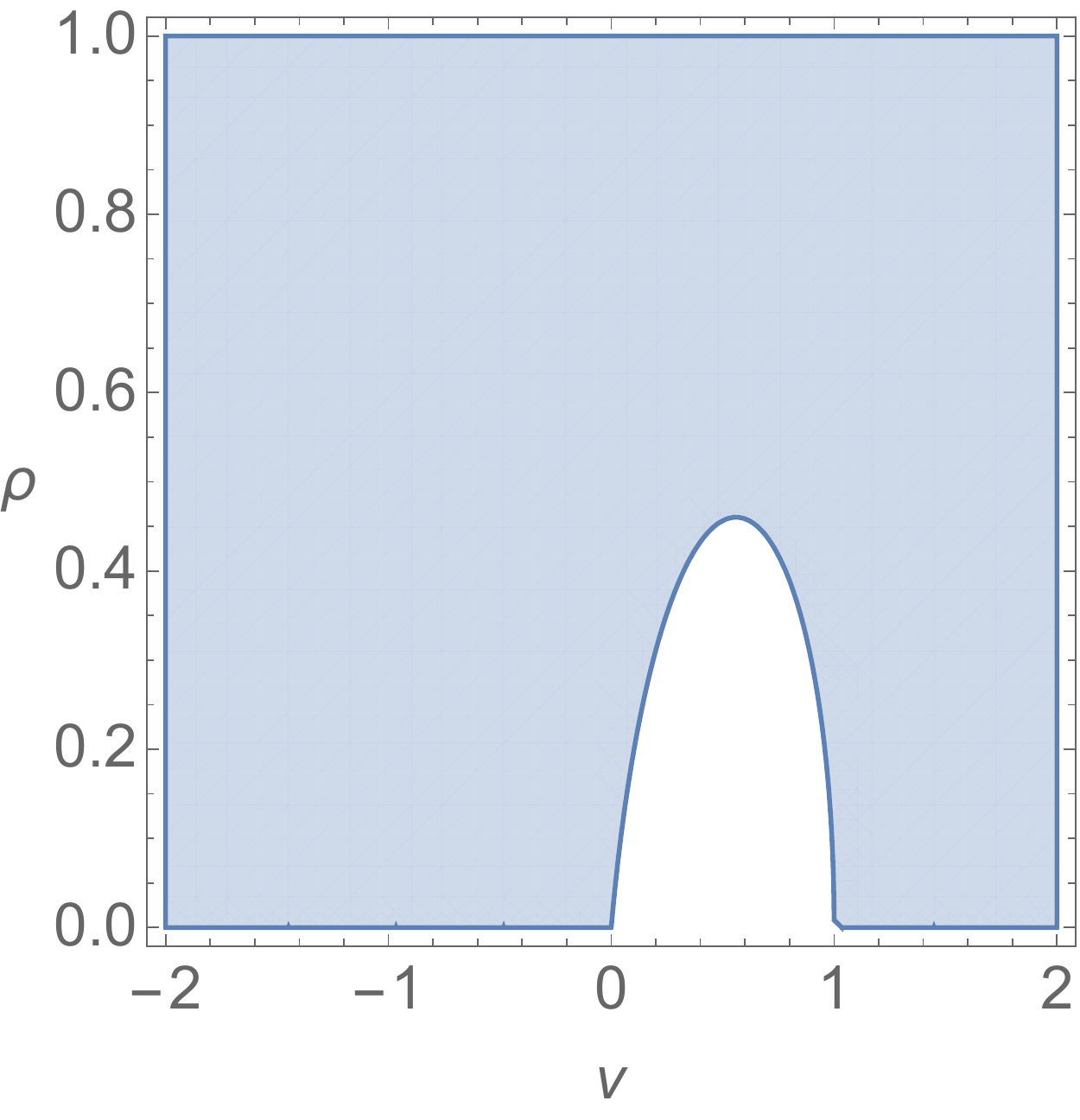}  
	\caption{$P=-Q=1$, $h_1=1$,  $h_2=1/9$. } 
%	\label{fig_Q__equals_mP_h1_1_h2_1_9}
\end{subfigure}\hspace*{\fill}
\begin{subfigure}{0.22\textwidth}
	\includegraphics[scale=0.28]{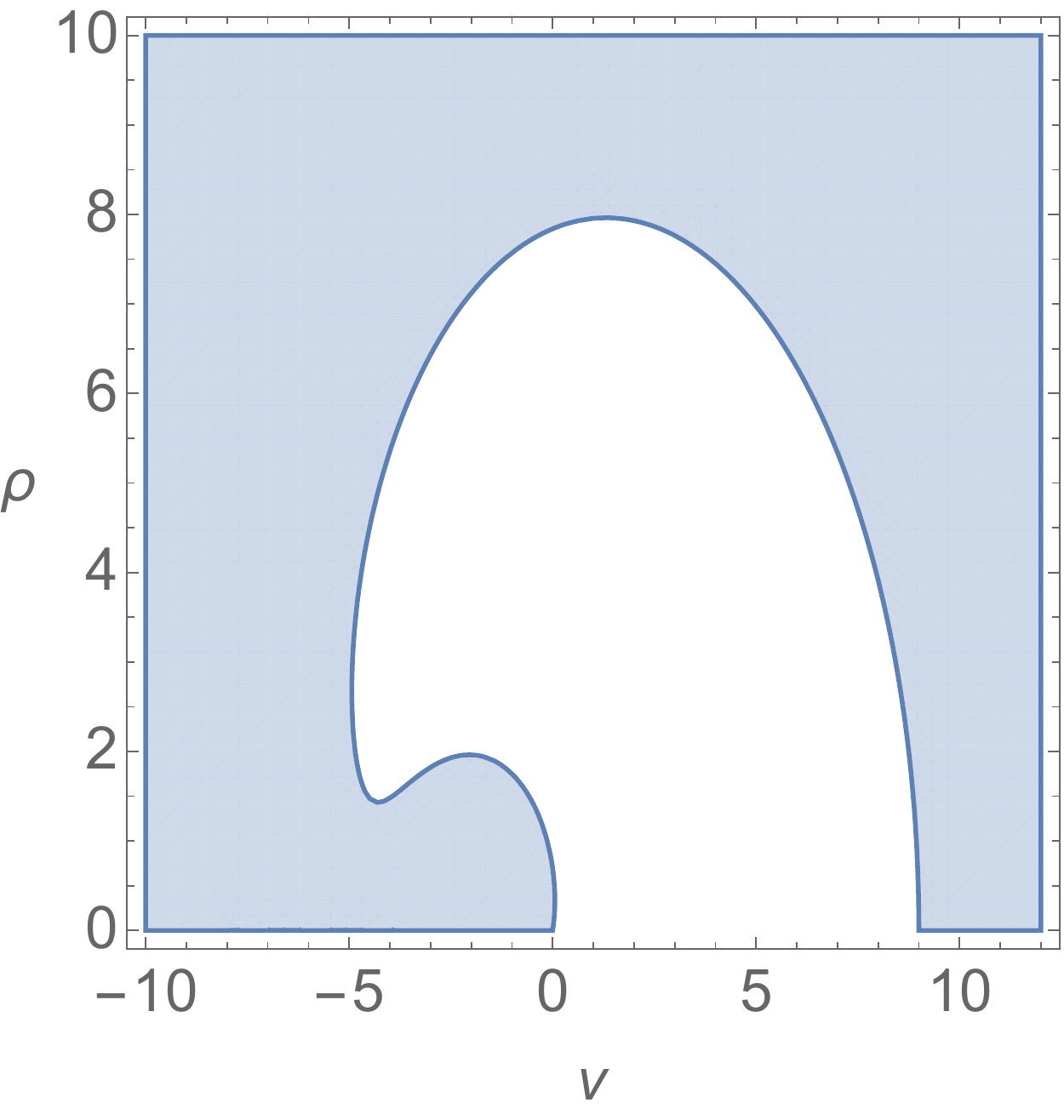} 
	\caption{$P=-Q=1$, $h_1=1/9$,  $h_2=1$. } 	
%	\label{fig_Q__equals_mP_h1_1_9_h2_1}
\end{subfigure}
\hspace*{\fill}
\begin{subfigure}{0.22\textwidth}
	\includegraphics[scale=0.28]{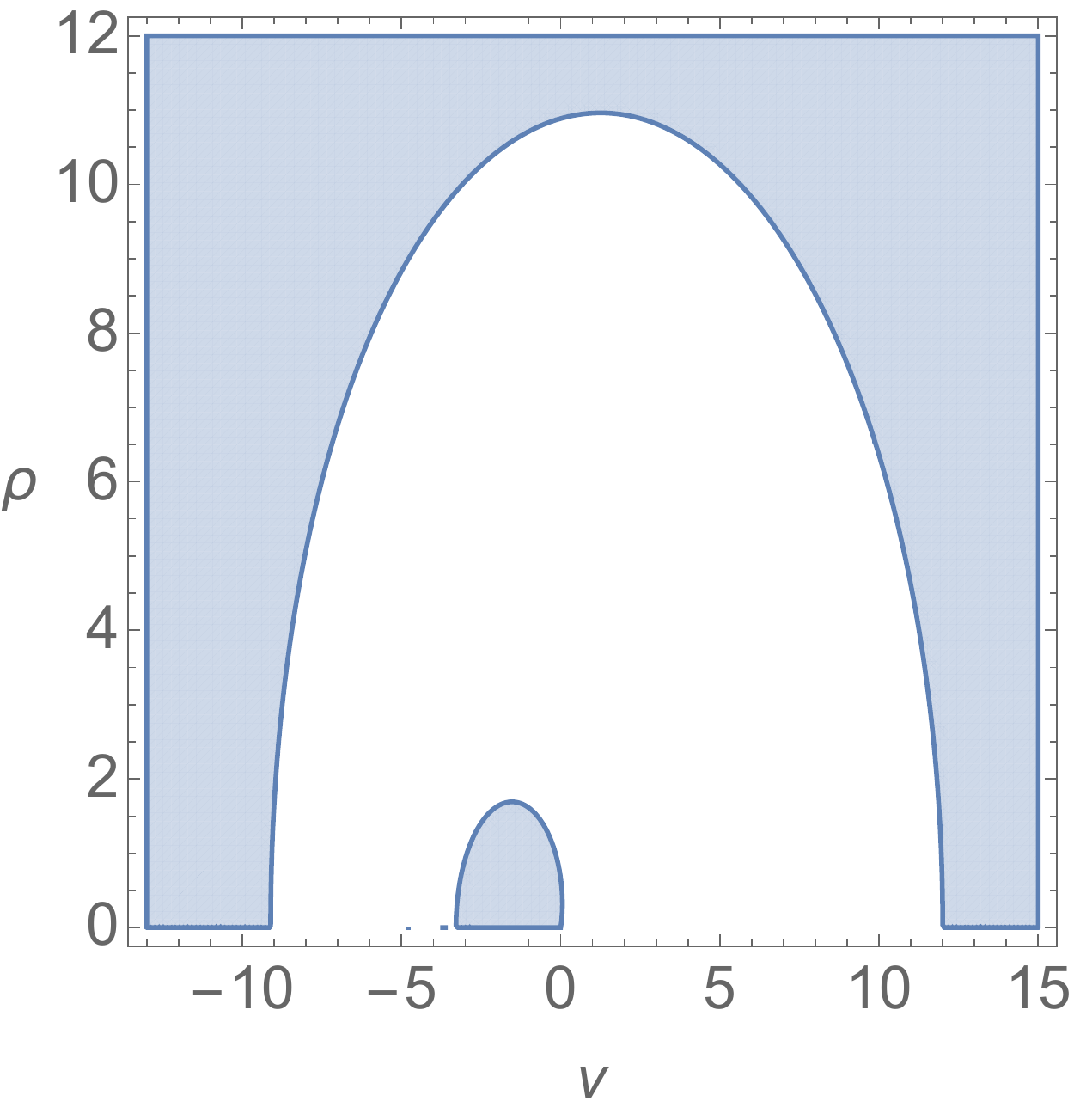}
	\caption{$P=-Q=1$, $h_1=1/12$, $h_2=1$. } 	
	%\label{fig_Q__equals_mP_h1_1_12_h2_1}
\end{subfigure}\hspace*{\fill}
\caption{Examples of the curve $e^{2\Sigma_2} =0$ for different values of $\tilde{P}$ and $\tilde{Q}$ with $Q P <0$. \label{PQneg}}
\end{figure*}

Finally,  let us discuss what happens to the solution when one sends the parameters $h_1$ and $h_2$ to zero, keeping $Q$ and $P$ fixed.
For the various function in the line element \eqref{linePQg0} we obtain
\begin{equation}
e^{\psi} =  \rho^\frac{4}{9} \,, \quad B = 2 P Q \,  \frac{v-\sqrt{\rho^2+v^2}}{\rho^2+v^2}  +f_1\,, \quad \Delta = -  \frac{\rho^2+v^2}{P Q} \frac{\rho}{\sqrt{\rho^2+ 4v\left(v-\sqrt{\rho^2+v^2}\right) }} \;,
\end{equation}
where $f_1$ denotes an integration constant. When solving the differential
equations \eqref{diffpsi} for $\psi$, we took the resulting integration constant to equal
\begin{equation}
c_1 =  \frac{4}{9}\log\left[\frac{2 {\tilde P} {\tilde  Q}}{{\tilde Q}-{\tilde P}} \right] \,,
\end{equation}
to ensure that the differential
equations \eqref{diffpsi} remain well-defined in the limit $h_1, h_2 \rightarrow 0$.
Note that when $v > 0$, $\Delta$ is only real for $\rho \geq 2 \sqrt{2} v$.

Thus, the case ${\tilde Q} <0, {\tilde P} >0$ is markedly different from the case ${\tilde Q} >0, {\tilde P} >0$ discussed earlier.

%%%%%%%%%%%%%%%%%%%%%%%%%%%%%%%%%%
\subsection*{Acknowledgements}
We would like to thank Thomas Mohaupt and Suresh Nampuri
for useful discussions. M.C. C\^amara and G.L. Cardoso thank David Krej\v{c}i\v{r}\'ik for hospitality at the
Department of Mathematics, Czech Technical University Prague,
during the course of this work.
This work was partially
supported by FCT/Portugal through UID/MAT/04459/2019
and through the LisMath PhD fellowships PD/BD/128415/2017 (P. Aniceto) and 
PD/BD/135527/2018 (M. Rossell\'o). 
 %%%%%%%%%%%%%%%%%%%%%%%%%%

%%%%%%%%%%%%%%%%%
\appendix

%%%%%%%%%%%%%%%%%%%%

%%%%%%%%%%
\section{Expressions for $ {\tilde P} > \tilde {Q} >0$  \label{nmetQP}}
%%%%%%%%%%%%%

We give the expressions for the solution with $ {\tilde P} > \tilde {Q} >0$ discussed in Section \ref{sec:nsol}.

We introduce the quantities \cite{Cardoso:2017cgi},
\bea
g &=&    \Big(\frac{h_2}{h_1}\Big)^{1/3} \Big(
\frac{\tau_{\tilde P}^{-}}{\tau_{\tilde Q}^{-}}
\Big )^{1/3}  \;, \nonumber\\
m_1 &=& h_1 h_2  \left( 1 - \frac{2 {\tilde Q}}{\rho (\tau_0^+ - \tau_0^-)}  \right)
  \left( 1 - \frac{2 {\tilde P}}{\rho (\tau_0^+ - \tau_0^-)}  \right) 
   - 2 h_1 h_2  \, \frac{(\tau_{\tilde Q}^+ - \tau_{\tilde P}^+ ) (\tau_0^+ - \tau_{\tilde P}^-)
(\tau_0^+ - \tau_{\tilde Q}^+)}{\tau_{\tilde Q}^+ (\tau_0^+ - \tau_0^- )^2} \;,
    \nonumber\\
m_2 &=& 
\sqrt 2 \, h_1  \left( 1 - \frac{2 {\tilde Q}}{\rho (\tau_0^+ - \tau_0^-)}  \right)
- \sqrt{2} \, h_1  \, \frac{(\tau_{\tilde Q}^+ - \tau_{\tilde P}^+ ) 
(\tau_0^+ - \tau_{\tilde Q}^+)}{\tau_{\tilde Q}^+ (\tau_0^+ - \tau_0^- )} \;,
\nonumber\\
m_3 &=& - \frac{h_1}{h_2} \, \frac{\tau_{\tilde Q}^-}{\tau_{\tilde P}^-} = - \frac{1}{g^{3}} \;, \nonumber\\
\chi_1 &=& - m_2/(m_1 \, m_3 + (m_2)^2) \;\;\;,\;\;\; \chi_2 = m_2/m_1 \;\;\;,\;\;\; \chi_3 = -1/m_1 
 \;, \nonumber\\
e^{2 \Sigma_1} &=& m_1 \, g \;\;\;,\;\;\; e^{2 \Sigma_2}  = (m_3 + (m_2)^2/m_1) \, g \:.
\eea
Then \cite{Cardoso:2017cgi}
\begin{eqnarray}
\Delta^{-2} &=&  g^3 \, m_1 \, (m_1 \, m_3 + (m_2)^2) \;, \nonumber\\
e^{- 2 \Phi} &=& g^{3/2} \, \left(m_3 + \frac{(m_2)^2}{m_1} \right)^{3/2} \;.
\end{eqnarray}
%%%%%%%%%%%%%%%%%%%%%

In \cite{Cardoso:2017cgi}, the 
expressions  for $\psi$ and for $B$ were studied by taking ${\tilde J} = {\tilde P} - {\tilde Q}$ to be small and 
performing a power series expansion in ${\tilde J}$. Only the first few terms in these expansions were given.
Here we give the exact expressions which, when power expanded in ${\tilde J}$, reproduce the results 
given in \cite{Cardoso:2017cgi}. 
$\psi$ 
 is obtained by integrating
\begin{eqnarray}
\partial_{\rho} \psi &=& \tfrac14 \rho \,  g^4 \, \left[ \Big( \partial_{\rho} (g \, m_3) \Big)^2 
-  \Big( \partial_{v} (g \, m_3) \Big)^2 \right]
\;, \nonumber\\
\partial_{v} \psi &=& \tfrac12  \, \rho \, g^4 \, \partial_{\rho} (g \, m_3)
 \, \partial_v (g \, m_3)   \;.
 \label{diffpsi}
\end{eqnarray}
We obtain
\begin{equation}
\psi = \frac{1}{9} \log \left[ \frac{\left(\rho^2 + \left(\tilde{P}+v\right) \left(\tilde{Q}+v\right) +\sqrt{\rho^2 +(\tilde{P}+v)^2} \sqrt{\rho^2 +(\tilde{Q}+v)^2}   \right)^2}{4 \left(  \rho^2 +(\tilde{P}+v)^2\right)   \left(\rho^2 +(\tilde{Q}+v)^2\right)} \right] + c_1 \,,
\end{equation}
with $c_1$ an integration constant that we set to zero to ensure that 
in the limit ${\tilde J} = \tilde{P} - \tilde{Q} = 0$ we obtain ${\psi} = 0$.

$B$ is determined by solving for the two-form
${\cal F} = d \left( B \, d\phi \right)$,
\begin{equation}
- e^{- 2 \phi_2} *{\cal F} = d \chi_3 - \chi_1 \, d \chi_2 \;.
\label{acc}
\end{equation}
Here, the dual $*$ is taken with respect to the three-dimensional metric 
\bea
ds_3^2 = e^{\psi} \left(d \rho^2 + dv^2 \right) + \rho^2 \, d \phi^2 \;.
\eea
Adjusting the integration constant, we obtain the following expression for $B$,
\begin{equation}
B= h_1 h_2 \frac{\left(\tilde{Q}\sqrt{ \rho^2 +(\tilde{P}+v)^2}  -\tilde{P}\sqrt{ \rho^2 +(\tilde{Q}+v)^2} \right) \left( v- \sqrt{\rho^2 +v^2} \right)  }{\rho^2 +v^2} + 
%f_1 \,,
h_1 h_2 \left(\tilde{Q}-\tilde{P}\right) \;.
\end{equation}
In the limit ${\tilde J} = \tilde{P} - \tilde{Q} = 0$ we obtain $B = 0$.

The above expressions completely determine the space-time metric \eqref{linePQg0} and the scalar field $e^{- 2 \Phi}$ that supports the solution.
The solution is further supported by an electric-magnetic field $F = d A^0$, where the one-form $A^0$
is
given by
\begin{equation}
A^0 = \chi_1 \, dt + A_{\phi} \, d \phi \;,
\end{equation}
with $A_{\phi}$ determined by solving
\begin{eqnarray}
\partial_{\rho} A_{\phi} &=& - e^{2(\Sigma_1 - \Sigma_2)} \, \rho \, \partial_{v} \chi_2
+ B\, \partial_{\rho} \chi_1 \;, \nonumber\\
\partial_{v} A_{\phi} &=& e^{2(\Sigma_1 - \Sigma_2)} \, \rho \, \partial_{\rho} \chi_2
+ B \, \partial_{v} \chi_1 \;.
\end{eqnarray}
In the following, we solve these partial differential equations by means of
a power series in ${\tilde J}$,
\begin{equation}
A_\phi= \sum_{n=0}^{\infty} A_{\phi}^{(n)}\frac{ {\tilde J}^n}{n !} ~\,.
\end{equation}
In \cite{Cardoso:2017cgi} $A_{\phi}$  was given up to first order in $\tilde{J}$ only. Here we determine $A_{\phi}$ up 
to second order in ${\tilde J}$,
\begin{align}
A_\phi^{(0)} &=   \frac{\sqrt{2} h_2 \tilde{Q} v }{\sqrt{\rho^2 +v^2}} \,,\nonumber\\
A_\phi^{(1)} &= \frac{h_2}{\sqrt{2}}\left(   \frac{v}{\sqrt{\rho^2 +v^2}} +   \frac{\left(v-\sqrt{\rho^2+v^2}\right) \sqrt{\rho^2 + (\tilde{Q}+v)^2}}{\rho^2+v^2+\tilde{Q} \sqrt{\rho^2+v^2}}  - \frac{2 \tilde{Q}}{\tilde{Q}+ \sqrt{\rho^2+v^2}} \right) \,, \nonumber\\
A_\phi^{(2)} &=  \frac{h_2 }{2 \sqrt{2}} \left(  \frac{2v^2 \left(\tilde{Q}+4v- \sqrt{\rho^2 + \left(\tilde{Q}+v\right)^2}\right)}{\tilde{Q}^2 \left( \tilde{Q} +2v \right) \sqrt{\rho^2+v^2}} - \frac{4 \left[\left(\tilde{Q}+v\right)^2   + \tilde{Q} \sqrt{\rho^2 + \left(\tilde{Q} +v\right)^2}\right]  }{\left(\tilde{Q} + \sqrt{\rho^2 +v^2  }\right)^3} \right. \nonumber\\
& \qquad \qquad  - \frac{4\left(2 \tilde{Q}^2 +v^2\right)+ \left(\tilde{Q}-v\right) \sqrt{\rho^2 + \left(\tilde{Q}+v\right)^2}}{\tilde{Q}^2 \left(\tilde{Q}+\sqrt{\rho^2+v^2}\right)}  \\
& \qquad \qquad +\frac{2 \left[-2v^2 +\left(3\tilde{Q} +v\right)\left(2 \tilde{Q}+\sqrt{\rho^2 + \left(\tilde{Q}+v\right)^2}\right)\right]}{\tilde{Q} \left(\tilde{Q}+\sqrt{\rho^2+v^2}\right)^2} \nonumber\\
&  \left. \qquad \qquad - \frac{\left(\tilde{Q}+v\right) \left[ \left(\tilde{Q}+2v-\sqrt{\rho^2+v^2}\right) \sqrt{\rho^2 + \left(\tilde{Q}+v\right)^2}   -2 \left(\tilde{Q}+v\right)  \sqrt{\rho^2+v^2} \right] }{\tilde{Q} \left(\tilde{Q}+2v\right) \left[\rho^2 + \left(\tilde{Q}+v\right)^2\right]} \right)  \,.\nonumber
\end{align}

%%%%%%%%%
\section{Gluing solutions along the lines $\rho = \pm v + m$  \label{sec:glue}}
%%%%%%%%%%%

The extensions \eqref{ext_jump} 
exhibit a jump in the transverse extrinsic curvature
across the lines  $\rho = \pm v + m$, as we now verify for the first extension in \eqref{ext_jump}. We consider the gluing of 
the solution based on 
$\Delta_i (\rho,v)$
in region $I$  with the solution based on  $\Delta_i (B^{-1} (\rho,v)) $ in region $II$ along the hypersurface given by $ \rho= m-v$.
We recall that demanding $\rho$ to be a time-like coordinate in both regions $I$ and $II$ requires
the transformation $ds_2^2 \rightarrow - ds_2^2$ when passing from region $I$ to region $II$.
Using the bijections given in Table \ref{tab:bijsph}, the associated space-time metrics read in spherical coordinates,
\bea
{g_I} &=&\left( \frac{2m}{r}-1\right) dt^2 - \frac{1}{\frac{2m}{r}-1} dr^2 + r^2 \left(d \theta^2 + \sin^2 \theta\, d \phi^2\right) \,, \nonumber\\
g_{II} &=& \tan^2 \left(\frac{\theta}{2}\right) dt^2 - 4 m^2 \cos^4 \left(\frac{\theta}{2}\right) \left( \frac{dr^2}{r\left(2m-r\right)} - d\theta^2 - \frac{r\left(2m-r\right)}{m^2}  d\phi^2 \right) \,.
\eea
The gluing of these solutions is performed along the 
hypersurface $\Sigma$ given  by $r=m+m\cos \theta$ with $0<\theta<\pi/2$. The metric is continuous at $\Sigma$, i.e.
 $g_{I\vert_{T\Sigma}} = g_{II\vert_{T\Sigma}}$. There is, however, a jump in the transverse extrinsic curvature
 \cite{Barrabes:1991ng,Poisson:2009pwt}, as follows.

To study the jump in the transverse  extrinsic curvature when traversing $\Sigma$,
we begin by considering the `normal' vector field $k$ to this hypersurface, evaluated on $\Sigma$,  given by
\be
k^\flat\vert_{\Sigma}= k_\mu dx^\mu \vert_{\Sigma}  =  \left( d\left(r-m - m \cos \theta\right) \right) \vert_{\Sigma}  
= \left( dr + \sqrt{(2m-r)r}  \, d\theta \right) \vert_{\Sigma} \;.
\ee
Since $||k ||_{{\Sigma}} =0$, the vector field $k$ is also tangent  to $\Sigma$, and so are the vector fields 
 $\partial_A$ with $A=t,\phi$. A
   null transverse vector field $\ell $, i.e. a vector field    
   satisfying $ || \ell ||_{\Sigma} =0, \, (\ell \cdot k)_{\Sigma} = -1, 
   \,    (\ell \cdot \partial_A)_{\Sigma} =0$, is then given by
\be
\ell= \ell^\mu \partial_\mu =
 -\frac{1}{2} \partial_r - \frac{1}{2 \sqrt{ \left(2m-r\right)r}}  \, \partial_\theta \,.
\ee
The jump in  the transverse  extrinsic curvature is computed using $\ell$  \cite{Barrabes:1991ng,Poisson:2009pwt},
and given by
\be
\frac{1}{2}\mathsterling_\ell g_{II\vert_{T\Sigma}} - \frac{1}{2}\mathsterling_\ell g_{I\vert_{T\Sigma}} = -\frac{m}{r} dt^2 + \frac{\left(2m-r\right)r^2}{m^2} d\phi^2\,,
\ee
which is non-vanishing. Here, $\mathsterling_\ell$ denotes the Lie derivative with respect to  the vector field $\ell$. 

 The jump in the transverse extrinsic curvature is 
 associated with the presence 
of a discontinuity of the Weyl tensor
\cite{Barrabes:1991ng,barrab};
across $\Sigma$, the metric is only continuous.

%%%%%%%%%
\section{$A$-metrics  \label{sec:classA}}
%%%%%%%%%%%

Let $m> 0$. 
The class of $A$-metrics comprises the following space-time metrics in four dimensions \cite{Ehlers:1962zz,Griffiths:2009dfa}:

\begin{enumerate}

\item $AI$-metrics: \\
the Schwarzschild solution (with $0 < r < 2m$ or $2m < r < \infty$, $0 \leq \phi < 2 \pi, \, 0 < \theta < \pi$)
\begin{equation}
ds_4^2 = - \left(1 - \frac{2m}{r} \right) dt^2 + \left(1 - \frac{2m}{r} \right)^{-1}  dr^2 + r^2 \left( d \theta^2 +  \sin^2 \theta \, d \phi^2 \right) \:,
\end{equation}
and the `negative mass' Schwarzschild solution (with  $0 < r < \infty $, $0 \leq \phi < 2 \pi, \, 0 < \theta < \pi$)
\begin{equation}
ds_4^2 = - \left(1 + \frac{2m}{r} \right) dt^2 + \left(1 + \frac{2m}{r} \right)^{-1}  dr^2 + r^2 \left( d \theta^2 +  \sin^2 \theta \, d \phi^2 \right) \:.
\end{equation}

\item $AII$-metrics:\\
the solution (with $0 < \varrho < 2m$ or $2m < \varrho < \infty$, $0 \leq \phi < 2 \pi, \, 0 < \vartheta < \infty $)
\begin{equation}
ds_4^2 =  \left(1 - \frac{2m}{\varrho} \right) dt^2-  \left(1 - \frac{2m}{\varrho} \right)^{-1}  d\varrho^2 + \varrho^2 \left( d \vartheta^2 + \sinh^2 \vartheta \, d \phi^2 \right) \:,
\label{hyp}
\end{equation}
and the solution (with $0 < \varrho < \infty$, $0 \leq \phi < 2 \pi, \, 0 < \vartheta < \infty $)
\begin{equation}
ds_4^2 =  \left(1 + \frac{2m}{\varrho} \right) dt^2-  \left(1 + \frac{2m}{\varrho} \right)^{-1}  d\varrho^2 + \varrho^2 \left( d \vartheta^2 + \sinh^2 \vartheta \, d \phi^2 \right) \:.
\label{hypneg}
\end{equation}

\item $AIII$-metrics: \\
the $r$ space-like solution (with   $m>0$, $0 < r < \infty$, $0 \leq \phi < 2 \pi, \, 0 < \rho < \infty$)
\begin{equation} \label{aiii-r-space-like}
ds_4^2 = - \frac{2m}{r} dt^2 +  \frac{r}{2m}   dr^2 + r^2 \left( d \rho^2 +  \rho^2  \, d \phi^2 \right) \:,
\end{equation}
and the $r$ time-like solution
\begin{equation} \label{aiii-r-time-like}
ds_4^2 =  \frac{2m}{r} dt^2 -  \frac{r}{2m}   dr^2 + r^2 \left( d \rho^2 +  \rho^2  \, d \phi^2 \right) \:.
\end{equation}
These solutions can be transformed, respectively, to the solution (with $0 < x_2, \, -\infty < t, x_1, x_3 <  \infty$)
\be
 ds^2_4 = -x_2^{-2/3 }  dt^2 + dx_2^2 + x_2^{4/3} \, \left( dx_1^2 + dx_3^2 \right) \;,
\label{AIIIst2}
\ee
and to the Kasner solution (with $0 < t, \, -\infty < x_1, x_2, x_3 <  \infty$)
\be
 ds^2_4 = - dt^2 + t^{-2/3 } \, dx_2^2 + t^{4/3} \, \left( dx_1^2 + dx_3^2 \right) \;.
 \label{AIIIst1}
\ee

\end{enumerate}

\noindent
These space-time metrics arise as follows from factorization.\\

\noindent
{\bf $AI, AII$-metrics:}
the solution describing the exterior region of the Schwarzschild black hole ($r > 2m$),
the `negative mass' Schwarzschild solution and the solution describing the interior region of \eqref{hyp} ($0 < \varrho < 2m$)
are obtained by canonical factorization of the Schwarzschild monodromy matrix \eqref{monschwarz} with $\sigma = 1$.
The solution describing the interior region of the Schwarzschild black hole ($0 < r <  2m$),
the solution describing the exterior region of \eqref{hyp} ($\varrho > 2m$) and the solution \eqref{hypneg}
are obtained by canonical factorization of the Schwarzschild monodromy matrix \eqref{monschwarz} with $\sigma = -1$.\\

\noindent
{\bf $AIII$-metrics:}
the space-time metrics \eqref{aiii-r-space-like}  and \eqref{aiii-r-time-like} are obtained by canonical factorization of the monodromy matrix 
\eqref{monschwarz0} with $\sigma = 1$ and $\sigma = -1$, respectively.

%\bibliographystyle{JHEP}

%\bibliography{biblio}

\providecommand{\href}[2]{#2}\begingroup\raggedright\endgroup

\end{document}